\documentclass[11pt,a4paper]{article}
\usepackage{amsmath, amsfonts, amssymb}
\usepackage{a4wide}
\usepackage{parskip}
\usepackage{enumitem}
\usepackage{xcolor}

\usepackage{hyperref}
\usepackage{booktabs}
\usepackage{caption}
\usepackage{siunitx}
\usepackage{tabularx}
\usepackage{graphicx}
\usepackage{adjustbox}
\usepackage{multirow}
\usepackage{amsthm}
\usepackage{bm}
\usepackage{lscape}
\usepackage{float}
\usepackage{comment}
\usepackage{nicematrix}
\usepackage[noadjust]{cite}

\def\qed{\hfill {$\square$}\goodbreak \medskip}

\newtheorem{theorem}{Theorem}[section]
\newtheorem{lemma}[theorem]{Lemma}
\theoremstyle{definition}
\newtheorem{definition}[theorem]{Definition}
\newtheorem{example}[theorem]{Example}
\newtheorem{proposition}[theorem]{Proposition}
\theoremstyle{remark}
\newtheorem{remark}[theorem]{Remark}

\newtheorem{corollary}[theorem]{Corollary}

\numberwithin{equation}{section}

\newcommand{\Char}{\textnormal{char}}
\newcommand{\Tr}{\textnormal{Tr}}
\newcommand{\Trqr}{\textnormal{Tr}_{q^r/q}}
\newcommand{\Nrqr}{\textnormal{N}_{q^r/q}}
\newcommand{\Nrqm}{\textnormal{N}_{q^d/q}}
\newcommand{\Span}{\textnormal{Span}}
\newcommand{\ars}{\textnormal{ARS}}
\newcommand{\atgrs}{\textnormal{ATGRS}}
\newcommand{\atrs}{\textnormal{ATRS}}
\newcommand{\agrs}{\textnormal{AGRS}}

\newcommand{\N}{\textnormal{N}}
\newcommand{\twist}{\mathcal{V}_{k, \bm{t}, \bm{h}, \bm{\eta}}}
\newcommand{\wt}{\textnormal{wt}}
\usepackage{tikz,xcolor,hyperref}

\definecolor{lime}{HTML}{A6CE39}
\DeclareRobustCommand{\orcidicon}{%
	\begin{tikzpicture}
		\draw[lime, fill=lime] (0,0) 
		circle [radius=0.16] 
		node[white] {{\fontfamily{qag}\selectfont \tiny ID}};
		\draw[white, fill=white] (-0.0625,0.095) 
		circle [radius=0.007];
	\end{tikzpicture}
	\hspace{-2mm}
}

\foreach \x in {A, ..., Z}{%
	\expandafter\xdef\csname orcid\x\endcsname{\noexpand\href{https://orcid.org/\csname orcidauthor\x\endcsname}{\noexpand\orcidicon}}
}

\begin{document}
	\date{}
	{\vspace{0.01in}
        \title{New Constructions of Additive MDS TRS Codes\\ }

        \author{{\bf Anuj Kumar Bhagat\footnote{email: {\tt anujkumarbhagat632@gmail.com}}\orcidA{},\;\bf Ankit Yadav\footnote{email: {\tt ankityadav10102000@gmail.com}}\orcidH{}, and \bf Ritumoni Sarma\footnote{    email: {\tt ritumoni407@gmail.com}}\orcidR{}} \\ $^{\ast \dagger\ddagger}$Department of Mathematics,\\ Indian Institute of Technology Delhi,\\Hauz Khas, New Delhi-110016, India \medskip \\}
\maketitle
\begin{abstract}
Additive codes over finite fields generalize linear codes, and additive MDS codes provide a natural extension of linear MDS codes. In this article, we study additive twisted Reed--Solomon (TRS) codes and obtain new constructions of additive MDS codes. First, for additive TRS codes with twist $t=2$ and an arbitrary hook, we establish necessary and sufficient conditions for the codes to be additive MDS, thereby generalizing the results of \cite[Section 3]{JiayuMa2026}. In particular, we show that the existence of an additive MDS TRS code with $t=2$ and hook $h=0$ yields codes of larger lengths than those obtained for $t=2$ and $h=k-1$ in \cite{JiayuMa2026}. Next, we consider additive TRS codes with twist vector $\bm{t}=(1,2)$ and hook vector $\bm{h}=(0,0)$, and derive necessary and sufficient conditions for them to be additive MDS. We further establish the existence of such codes. Using the Schur square technique, we obtain mild conditions under which the constructed families are inequivalent to additive Reed--Solomon (RS) codes. Finally, we determine parity-check matrices for both families of additive MDS codes considered in this article.

    \medskip
		\noindent \textit{Keywords:} Additive codes; Additive Reed--Solomon codes; Twisted Reed--Solomon codes; MDS codes; Schur product.
			\medskip
\end{abstract}
\section{Introduction}\label{Section 1}
\subsection{Background}
    For an $[n,k,d]$-linear code, the Singleton bound states that $d\le n-k+1.$ Linear codes attaining this bound are called maximum distance separable (MDS) codes. Thus, MDS codes attain the largest possible Hamming distance for a given length and dimension, and consequently provide optimal error-detection and error-correction capabilities within these parameters. Owing to their optimality, MDS codes have been extensively studied from several perspectives, including their classification up to equivalence \cite{kokkala2015classification}, weight distributions \cite{alderson2020weights}, \cite{ezerman2010weights}, covering radii \cite{bartoli2014covering}, LCD properties \cite{carlet2018euclidean}, and applications in cryptography \cite{Niederreiter1986Knapsack}. Moreover, MDS codes are closely related to finite geometry and combinatorial designs, providing important connections between coding theory and these areas \cite{hirschfeld1998projective}, \cite{macwilliams1977theory}.

    A fundamental and extensively studied family of MDS codes is given by the Generalized Reed--Solomon (GRS) codes \cite{Roth_2006}. A GRS code over $\mathbb{F}_q$ has length at most $q$, while its extended version yields MDS codes of length at most $q+1$. This naturally leads to the question of how large the length of an MDS code over $\mathbb{F}_q$ can be. The celebrated MDS conjecture asserts, broadly speaking, that a non-trivial MDS code over $\mathbb{F}_q$ has length at most $q+1$, apart from certain exceptional parameter cases in which length $q+2$ may occur \cite{segre1955curve}. Significant progress has been made toward the conjecture; in particular, it is known to hold for prime fields \cite{ball2012large}.
    
    Although GRS codes constitute the most prominent family of MDS codes, not every MDS code is equivalent to a GRS code. The first known construction of MDS codes not equivalent to GRS codes was given by Roth and Lempel in 1989 \cite{RothLempel1989}. Subsequently, motivated by the work of Sheekey \cite{JohnSheekey2016}, Beelen et al. introduced the family of Twisted Reed--Solomon (TRS) codes \cite{beelen2017twisted,beelen2022twisted}. Certain subfamilies of TRS codes were shown to be MDS and not equivalent to GRS codes. The introduction of twisted terms into the polynomial representation of Reed--Solomon codes thus provides a flexible mechanism for constructing new MDS codes beyond the classical GRS framework. This approach has subsequently led to numerous constructions of MDS codes inequivalent to GRS codes; see, for example, \cite{abdukhalikov2026two, zhi2025new,jin2024new,li2025non,wu2024more,chen2023many,liu2026column, bhagat2025row, liu2026constructions}.
    
    For $r\geq 2$, an $\mathbb{F}_{q}$-linear subspace of $\mathbb{F}_{q^r}^{n}$ is called an additive code of length $n$ over $\mathbb{F}_{q^r}$. Such codes are also referred to as $\mathbb{F}_{q}$-linear $\mathbb{F}_{q^r}$-codes in the literature. Here, the term ``additive'' is used in the standard sense of $\mathbb{F}_{q}$-linearity and does not merely refer to closure under addition. Every $\mathbb{F}_{q^r}$-linear code is naturally an additive code, whereas the converse need not hold. Thus, additive codes provide a broader class of codes than linear codes over the same alphabet and, in particular, allow the construction of codes with parameter sets that may not be attainable within the linear framework.
    Additive codes have also found important applications in quantum information \cite{Calderbank1998quantum,Ketkar2006nonbinary}, computer memory systems \cite{chen1984error,chen1991fault,chen1992symbol}, deep-space communication \cite{hattori1998subspace}, and secret sharing \cite{Kim2017secret}.

    Calderbank et al. \cite{Calderbank1998quantum} first introduced the notion of additive codes over $\mathbb{F}_{4}$ and used additive self-orthogonal codes to construct binary quantum codes. Later, the theory of additive codes was generalized to arbitrary fields in \cite{rains1999nonbinary, Bierbrauer2000quantum}. Additive cyclic codes over $\mathbb{F}_{q^r}$ were studied by Huffman in \cite{huffman2010cyclic}, while Huffman in \cite{Huffman2013} investigated additive codes and their duals with respect to trace inner products, including the trace-Euclidean and trace-Hermitian inner products.
    The notion of MDS codes extends naturally to the additive setting. In particular, Huffman \cite{Huffman2013} established a Singleton bound for additive codes, and an additive code attaining this bound is called an additive MDS code. Their structure has recently been studied from a geometric viewpoint: Ball et al. \cite{ball2023additivemds} established a correspondence between additive MDS codes and certain geometric objects known as pseudo-arcs. More recently, minimal additive codes were introduced and investigated from a geometric perspective in \cite{alfarano2026minimal}. Further developments on additive codes can be found in \cite{shi2022cyclicacd,shi2023acd,gyan,gao2021cyclic,sharma2017cyclic,guan2023,abdukhalikov2026quasi,ball2025griesmer,abdukhalikov2026equivalence, Yadav2026} and the references therein.
    

\subsection{Motivation and Objectives}
Yadav and Sharma \cite{sharma2024mds} first introduced the concept of additive generalized Reed--Solomon (AGRS) codes and established necessary and sufficient conditions for these codes to be additive MDS. They also investigated additive twisted generalized Reed--Solomon (ATGRS) codes with a single twist $t=1$ and an arbitrary hook, and obtained a family of additive MDS codes which are not equivalent to AGRS. Subsequently, in \cite{JiayuMa2026}, the authors studied a new class of additive codes arising from additive twisted Reed-Solomon (ATRS) codes with twist $t=2$ and hook $h=k-1$. They obtained additive MDS codes that are not equivalent to AGRS codes.

Motivated by the above two works on additive TGRS codes, we pose the following questions:
\begin{itemize}
    \item Can the study in \cite{JiayuMa2026} be extended to arbitrary hooks? If so, would this lead to the existence of additive MDS codes of larger lengths that are not equivalent to additive RS codes? 
    \item Moreover, additive TRS codes with a double twist have not yet been studied. Can we construct additive MDS codes from doubly twisted additive TRS codes that are inequivalent to additive RS codes?
    \item In \cite{JiayuMa2026}, the parity-check matrix obtained for the proposed additive TRS codes appears to be incorrect. Can we establish a general method for determining the parity-check matrix of additive TRS codes?
\end{itemize}
\subsection{Our Contribution}
First, we study additive TRS codes with a single twist $t=2$ and arbitrary hook $0\leq h\leq k-1$, and establish necessary and sufficient conditions for these codes to be additive MDS under mild assumptions on the evaluation points. For the case $h=0$, by determining the number of elements of the extension field $\mathbb{F}_{q^r}/\mathbb{F}_q$ having exactly $r$ distinct conjugates and trace zero, as well as the number of elements having exactly $r$ distinct conjugates and norm $1$, we establish the existence of additive MDS codes that are not equivalent to additive RS codes of lengths larger than those obtained for $h=k-1$ in \cite{JiayuMa2026}. Additionally, following the study of TGRS codes with twists $\bm{t}=(1,2)$ and hooks $\bm{h}=(0,0)$ in \cite{yang2025TGRS}, we investigate the corresponding additive TRS codes with the same twists and hooks, and derive criteria for these codes to be additive MDS under certain assumptions.

Furthermore, we explicitly determine the parity-check matrices of both the above-constructed families. By analyzing the Schur squares of these codes, we obtain several families of additive MDS codes arising from additive TRS codes that are inequivalent to additive RS codes.

The article is organized as follows. Section \ref{Section 2} presents the basic terminology and necessary preliminaries. In Section \ref{Section 3}, we introduce the first family of additive TRS codes with single twist $t=2$ and arbitrary hook $0\leq h \leq k-1$ and identify several classes of additive MDS codes that are inequivalent to additive RS codes. In Section \ref{Section 4}, we present the second family of additive TRS codes with double twists $\bm{t}=(1,2)$ and hooks $\bm{h}=(0,0)$ and identify several classes of additive MDS codes inequivalent to additive RS codes. In Section \ref{Section 5}, we determine the parity-check matrices of the additive TRS codes considered in Sections \ref{Section 3} and \ref{Section 4}. Finally, Section \ref{Section 6} concludes the article.

\section{Preliminaries}\label{Section 2}
Throughout this article, let $n$ denote a positive integer, $r\ge 2$ an integer, and $q$ a prime power. Let $\mathbb{F}_q$ be the finite field of order $q$, and let $\mathbb{F}_{q^r}$ be the field extension of $\mathbb{F}_q$ of degree $r$. The \textit{Frobenius automorphism} $\sigma_q$ of $\mathbb{F}_{q^r}$ over $\mathbb{F}_{q}$ is defined by
\begin{align*}
    \sigma_q: \mathbb{F}_{q^r}&\to \mathbb{F}_{q^r},\\
    \alpha&\mapsto\alpha^q,  
\end{align*}
for every $\alpha\in\mathbb{F}_{q^r}.$ Moreover, the Galois group of the extension $\mathbb{F}_{q^r}/\mathbb{F}_q$ is cyclic group of order $r$ and is generated by $\sigma_q$; that is, 
\begin{equation*}
    \text{Gal}(\mathbb{F}_{q^r}/\mathbb{F}_q)=\langle \sigma_q \rangle=\{\sigma_q^j\ :\ 0\le j\le r-1 \}.
\end{equation*}
The \textit{conjugates} of an element $\alpha\in\mathbb{F}_{q^r}$ over $\mathbb{F}_q$ are the elements $\alpha,\ \sigma_q(\alpha),\ \sigma_q^2(\alpha),\ \ldots,\ \sigma_q^{r-1}(\alpha).$ These $r$ elements are not necessarily distinct. The set $[\alpha]:=\{\sigma_q^j(\alpha):0\le j\le r-1\}$ is called the \emph{conjugacy class} (or \emph{Frobenius orbit}) of $\alpha$ over $\mathbb{F}_q$. Throughout this article, we fix a subset
\[
\mathcal{B}_{q,r}\subseteq\mathbb{F}_{q^r}
\]
consisting of one representative from each conjugacy class of elements having exactly $r$ distinct conjugates over $\mathbb{F}_q$. It follows from \cite{lidl1997book} that
\[
|\mathcal{B}_{q,r}|=\frac{1}{r}\sum_{d|r} \mu(d) q^{r/d}.
\]


The \textit{trace} and the \textit{norm} maps from $\mathbb{F}_{q^r}$ to $\mathbb{F}_q$ are defined, respectively, by
\[
\begin{aligned}
    \Trqr:\mathbb{F}_{q^r}
    &\longrightarrow \mathbb{F}_q,\\
    \alpha
    &\longmapsto
    \sum_{j=0}^{r-1}\sigma_q^j(\alpha)
    =\alpha+\alpha^q+\alpha^{q^2}+\cdots+\alpha^{q^{r-1}},
\end{aligned}
\]
and
\[
\begin{aligned}
    \Nrqr:\mathbb{F}_{q^r}
    &\longrightarrow \mathbb{F}_q,\\
    \alpha
    &\longmapsto
    \prod_{j=0}^{r-1}\sigma_q^j(\alpha)
    =\alpha\alpha^q\alpha^{q^2}\cdots\alpha^{q^{r-1}}
    =\alpha^{\frac{q^r-1}{q-1}},
\end{aligned}
\]
for every $\alpha\in\mathbb{F}_{q^r}.$

The set $\mathbb{F}_{q^r}^n$ is an $\mathbb{F}_q$-vector space of dimension $nr$ under componentwise addition and scalar multiplication by elements of $\mathbb{F}_q$. Throughout this article, an \emph{additive code} $\mathcal{C}$ of length $n$ over $\mathbb{F}_{q^r}$ is an $\mathbb{F}_q$-subspace of $\mathbb{F}_{q^r}^n$. The \textit{dimension} of $\mathcal{C}$, denoted by $\dim_{\mathbb{F}_q}(\mathcal{C})$, is its dimension as an $\mathbb{F}_q$-vector space. A \emph{generator matrix} of $\mathcal{C}$ is a matrix over $\mathbb{F}_{q^r}$ whose rows form an $\mathbb{F}_q$-basis of $\mathcal{C}$. The \textit{(Hamming) distance} of $\mathcal{C}$, denoted by $d(\mathcal{C})$, is defined by
\[
    d(\mathcal{C})=\min\{\wt(\bm{c}) : \bm{0}\neq \bm{c}\in \mathcal{C}\},
\]
where $\wt(\bm{c})$ denotes the Hamming weight of the vector $\bm{c}$. An additive code of length $n$, dimension $k$, and distance $d$ over $\mathbb{F}_{q^r}$ is referred to as an $(n,q^k,d)$-additive code over $\mathbb{F}_{q^r}$.

\begin{lemma}[\cite{Huffman2013}, Singleton bound for additive codes]
    Any $(n,q^k,d)$-additive code over $\mathbb{F}_{q^r}$ satisfies \
    \[
        d\le n-\left\lceil\frac{k}{r} \right\rceil+1.
    \]
\end{lemma}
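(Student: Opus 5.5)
The plan is a pigeonhole (puncturing) argument, the additive analogue of the usual proof of the linear Singleton bound. Let $\mathcal{C}$ be an $(n,q^k,d)$-additive code over $\mathbb{F}_{q^r}$, so $|\mathcal{C}|=q^k$. I will delete a suitable set of coordinates and show that the resulting map on codewords is injective, which forces an inequality between $q^k$ and the size of a smaller ambient space.

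Concretely, consider the projection $\pi:\mathcal{C}\to\mathbb{F}_{q^r}^{\,n-d+1}$ onto the first $n-d+1$ coordinates. This map is $\mathbb{F}_q$-linear. Its kernel consists of the codewords that vanish on these $n-d+1$ positions, and any such codeword has weight at most $d-1$. By the definition of $d(\mathcal{C})$, the only codeword of weight less than $d$ is $\bm{0}$, so the kernel is trivial and $\pi$ is injective. Comparing cardinalities (or $\mathbb{F}_q$-dimensions, since $\mathbb{F}_{q^r}^{\,n-d+1}$ has $\mathbb{F}_q$-dimension $r(n-d+1)$) gives
\[
k\le r(n-d+1).
\]

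It remains to pass from this inequality to the stated form. Dividing by $r$ gives $k/r\le n-d+1$. Since the right-hand side is an integer, this strengthens to $\lceil k/r\rceil\le n-d+1$, which rearranges to $d\le n-\lceil k/r\rceil+1$.

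There is no real obstacle here. The one point needing care is justifying the ceiling, which follows from the integrality of $n-d+1$. Linearity of $\pi$ is not essential either: injectivity also follows directly, because two distinct codewords agreeing on $n-d+1$ coordinates would have a nonzero difference in $\mathcal{C}$ of weight at most $d-1$.
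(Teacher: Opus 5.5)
Your argument is correct and complete: puncturing to $n-d+1$ coordinates is injective on $\mathcal{C}$, which gives $q^k=|\mathcal{C}|\le q^{r(n-d+1)}$, and the integrality of $n-d+1$ then justifies the ceiling. The paper does not prove this lemma itself; it quotes it from Huffman, and your puncturing/counting argument is the standard proof of this Singleton-type bound (as you note, it uses only the number of codewords, not $\mathbb{F}_q$-linearity).
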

An $(n,q^k,d)$-additive code over $\mathbb{F}_{q^r}$ is called an \emph{additive MDS code} if $d=n-\left\lceil\frac{k}{r}\right\rceil+1,$ and an \emph{additive almost MDS code} if $d=n-\left\lceil\frac{k}{r}\right\rceil.$

The \emph{ordinary trace inner product} on $\mathbb{F}_{q^r}^n$ is defined by
\begin{align*}
    \langle \cdot,\cdot\rangle:\mathbb{F}_{q^r}^n\times\mathbb{F}_{q^r}^n &\longrightarrow \mathbb{F}_q,\\
    \langle \bm{u},\bm{v}\rangle
    &=\sum_{i=1}^{n}\Trqr(u_i v_i),
\end{align*}
where $\bm{u}=(u_1,\dots,u_n)$, $\bm{v}=(v_1,\dots,v_n) \in\mathbb{F}_{q^r}^n$. It follows from \cite{Huffman2013} that $\langle\cdot,\cdot\rangle$ is a non-degenerate symmetric $\mathbb{F}_q$-bilinear form on $\mathbb{F}_{q^r}^n$. For an additive code $\mathcal{C}$ over $\mathbb{F}_{q^r}$, its \emph{dual code} is defined by
\[
    \mathcal{C}^{\perp}
    :=\{\bm{v}\in\mathbb{F}_{q^r}^n :
    \langle \bm{c},\bm{v}\rangle=0
    \text{ for all }\bm{c}\in\mathcal{C}\}.
\]
It is well known that $\mathcal{C}^{\perp}$ is again an additive code over $\mathbb{F}_{q^r}$. A generator matrix of $\mathcal{C}^{\perp}$ is called a \emph{parity-check matrix} of $\mathcal{C}$. If $G$ and $H$ are respectively a generator matrix and a parity-check matrix of an additive code $\mathcal{C}$ over $\mathbb{F}_{q^r}$, then
\[
    \Trqr(GH^{T})=O,
\]
where, for a matrix $A=(a_{ij})$ over $\mathbb{F}_{q^r}$, $\Trqr(A)$ denotes the matrix $(\Trqr(a_{ij}))$ over $\mathbb{F}_q$. Furthermore by Proposition 2.4 of \cite{grove2002classical}, we have
\[
    \dim_{\mathbb{F}_q}(\mathcal{C})
    +\dim_{\mathbb{F}_q}(\mathcal{C}^{\perp})
    =nr.
\]
\begin{definition}
Two additive codes $\mathcal{C}$ and $\mathcal{D}$ of length $n$ over $\mathbb{F}_{q^r}$ are said to be \emph{monomially equivalent} if there exists an $n\times n$ monomial matrix $M$ over $\mathbb{F}_{q^r}$ such that
\[
\mathcal{D}=\{\bm{c}M:\bm{c}\in\mathcal{C}\}.
\]
Otherwise, $\mathcal{C}$ and $\mathcal{D}$ are said to be \emph{monomially inequivalent}.
\end{definition}
The additive analogue of Generalized Reed--Solomon (GRS) codes was introduced in \cite{sharma2024mds}.

Let
\[
\mathbb{F}_q[x]_{<k}:=\{f(x)\in\mathbb{F}_q[x]:\deg(f)<k\}\cup\{0\}.
\]

\begin{definition}[\cite{sharma2024mds}]
Let $n$, $k$, and $r\ge 2$ be integers satisfying $1\le k\le nr$. Let $\bm{\alpha}=(\alpha_1,\alpha_2,\dots,\alpha_n)\in\mathbb{F}_{q^r}^n\setminus\mathbb{F}_q^n$ be such that $\alpha_i\neq\alpha_j$ whenever $i\neq j$, and let $\bm{v}=(v_1,v_2,\dots,v_n)\in(\mathbb{F}_{q^r}^{*})^n.$
The \emph{additive Generalized Reed--Solomon (AGRS) code} associated with $(\bm{\alpha},\bm{v})$ is defined by
\[
\agrs_{n,k}(\bm{\alpha},\bm{v})
:=
\left\{
\bigl(v_1f(\alpha_1),v_2f(\alpha_2),\dots,v_nf(\alpha_n)\bigr)
:\,
f(x)\in\mathbb{F}_q[x]_{<k}
\right\}.
\]
\end{definition}

When $\bm{v}=\bm{1}:=(1,1,\dots,1)$, the corresponding AGRS code is called an \emph{additive Reed--Solomon (ARS) code} and is denoted by $\ars_{n,k}(\bm{\alpha})$.

The code $\agrs_{n,k}(\bm{\alpha},\bm{v})$ is an $\mathbb{F}_q$-linear code of length $n$ over $\mathbb{F}_{q^r}$ and hence an additive code over $\mathbb{F}_{q^r}$. In contrast to classical GRS codes, an AGRS code need not be an additive MDS code. Nevertheless, \cite{sharma2024mds} proved that if the entries of $\bm{\alpha}=(\alpha_1, \alpha_2, \dots, \alpha_n)$ belong to $\mathcal{B}_{q,r}$, then $\agrs_{n,k}(\bm{\alpha},\bm{v})$ is an additive MDS code over $\mathbb{F}_{q^r}$.

Analogous to linear twisted Generalized Reed--Solomon (TGRS) codes, one defines additive twisted Generalized Reed--Solomon (ATGRS) codes using twisted polynomials.
\begin{definition}[\cite{sharma2024mds}]
Let $\ell$, $k$, and $r$ be positive integers satisfying $\ell\le k<nr$. Let $\bm{t}=(t_1,t_2,\dots,t_\ell), 1\le t_i\le nr-k, 1\le i\le \ell, \bm{h}=(h_1,h_2,\dots,h_\ell), 0\le h_i<k, 1\le i\le \ell,$
and let $\bm{\eta}=(\eta_1,\eta_2,\dots,\eta_\ell)\in\mathbb{F}_q^\ell.$
The set of \emph{$(k,\bm{t},\bm{h},\bm{\eta})$-twisted polynomials} over $\mathbb{F}_q$ is defined by
\[
\twist
:=
\left\{
\sum_{i=0}^{k-1}a_i x^i
+\sum_{j=1}^{\ell}\eta_j a_{h_j}x^{k-1+t_j}
:\;
a_0,a_1,\dots,a_{k-1}\in\mathbb{F}_q
\right\}.
\]
The integer $\ell$ is called the \emph{number of twists}, the vector $\bm{t}$ the \emph{twist vector}, and the vector $\bm{h}$ the \emph{hook vector}.
\end{definition}
It is straightforward to verify that $\twist$ is a $k$-dimensional $\mathbb{F}_q$-subspace of $\mathbb{F}_q[x]$.

Let $
\bm{\alpha}=(\alpha_1,\alpha_2,\dots,\alpha_n)\in\mathbb{F}_{q^r}^n\setminus\mathbb{F}_q^n$
be such that $\alpha_i\neq\alpha_j$ whenever $i\neq j$, and let $\bm{v}=(v_1,v_2,\dots,v_n)\in(\mathbb{F}_{q^r}^{*})^n.$
The \emph{additive Twisted Generalized Reed--Solomon (ATGRS) code} associated with $(\bm{\alpha},\bm{v},\bm{t},\bm{h},\bm{\eta})$ is defined by
\[
\atgrs_{n,k}(\bm{\alpha},\bm{v},\bm{t},\bm{h},\bm{\eta})
:=
\left\{
\bigl(v_1f(\alpha_1),v_2f(\alpha_2),\dots,v_nf(\alpha_n)\bigr)
:\,
f(x)\in\twist
\right\}.
\]

When $\bm{v}=\bm{1}:=(1,1,\dots,1)$, the corresponding ATGRS code is called an \emph{additive Twisted Reed--Solomon (ATRS) code} and is denoted by $
\atrs_{n,k}(\bm{\alpha},\bm{t},\bm{h},\bm{\eta}).$

Yadav and Sharma \cite{sharma2024mds} considered the case $\ell=1$, $t=1$, and $h\in\{0,1,\dots,k-1\}$, and studied the codes $\atgrs_{n,k}(\bm{\alpha},\bm{v},1,h,\eta)$. They established the necessary conditions for these codes to be additive MDS and proved the existence of additive MDS codes satisfying these conditions. Furthermore, under mild assumptions, they showed that the codes $\atgrs_{n,k}(\bm{\alpha},\bm{v},1,h,\eta)$ are not monomially equivalent to additive Reed--Solomon codes whenever the entries of $\bm{\alpha}$ belong to $\mathcal{B}_{q,r}$.

Subsequently, Ma \emph{et al.} \cite{JiayuMa2026} considered the case $\ell=1$, $t=2$, and $h=k-1$, and established necessary and sufficient conditions for the codes $\atgrs_{n,k}(\bm{\alpha},\bm{v},2,k-1,\eta)$ to be additive MDS. They also proved that, whenever the entries of $\bm{\alpha}$ belong to $\mathcal{B}_{q,r}$, the codes $\atgrs_{n,k}(\bm{\alpha},\bm{v},2,k-1,\eta)$ are not monomially equivalent to additive Reed--Solomon codes. In both works, the principal tool used to establish the inequivalence of additive twisted Reed--Solomon codes and additive Reed--Solomon codes is the Schur square. We therefore recall the notion of the Schur square of an additive code below.

\begin{definition}
    The Schur product of two vectors $\bm{x}=(x_1, x_2, \dots, x_n), \bm{y}=(y_1, y_2, \dots, y_n)\in\mathbb{F}_{q^r}^n$ is defined by 
    \[
    \bm{x}\star \bm{y}:=(x_1y_1, x_2y_2, \dots, x_ny_n).
    \]
    For two additive codes $\mathcal{C}$ and $\mathcal{D}$ of the same length over $\mathbb{F}_{q^r}$, their \textit{Schur product} is an additive code defined by
    \begin{equation}
        \mathcal{C}\star\mathcal{D}:=\Span_{\mathbb{F}_q}\{\bm{c}\star\bm{d}: \bm{c}\in\mathcal{C},\bm{d}\in\mathcal{D}\}.
    \end{equation}
\end{definition}
The Schur product $\mathcal{C}\star\mathcal{C}$ is denoted by $\mathcal{C}^{\star2}$ and is called the \emph{Schur square} of $\mathcal{C}$.
\begin{lemma}\cite[Lemma 6.1]{sharma2024mds}\label{lem: equiv implies dim of schur square equal}
    If $\mathcal{C}$ and $\mathcal{D}$ are two monomially equivalent additive codes over $\mathbb{F}_{q^r}$, then $\dim_{\mathbb{F}_q}(\mathcal{C}^{\star 2})=\dim_{\mathbb{F}_q} (\mathcal{D}^{\star 2}).$
\end{lemma}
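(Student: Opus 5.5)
The plan is to build an explicit map between the two Schur squares that is an $\mathbb{F}_q$-linear bijection. The statement concerns monomially equivalent additive codes $\mathcal{C}$ and $\mathcal{D}$. By definition there is a monomial matrix $M=PD$, with $P$ an $n\times n$ permutation matrix and $D=\mathrm{diag}(\lambda_1,\dots,\lambda_n)$ for some $\lambda_i\in\mathbb{F}_{q^r}^*$, such that $\mathcal{D}=\{\bm{c}M:\bm{c}\in\mathcal{C}\}$. If $\pi$ is the permutation encoded by $P$, then for $\bm{c}=(c_1,\dots,c_n)$ we have $\bm{c}M=(\lambda_1 c_{\pi(1)},\dots,\lambda_n c_{\pi(n)})$ after suitably indexing. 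I will show that $\mathcal{D}^{\star 2}=\{\bm{u}M':\bm{u}\in\mathcal{C}^{\star 2}\}$, where $M'=P\,\mathrm{diag}(\lambda_1^2,\dots,\lambda_n^2)$. This $M'$ is again a monomial matrix, hence invertible.

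The key computation is a componentwise identity for Schur products. For $\bm{c},\bm{c}'\in\mathcal{C}$,
\[
(\bm{c}M)\star(\bm{c}'M)=\bigl(\lambda_i^2 c_{\pi(i)}c'_{\pi(i)}\bigr)_{i=1}^n=(\bm{c}\star\bm{c}')M'.
\]
The map $\phi:\mathbb{F}_{q^r}^n\to\mathbb{F}_{q^r}^n$, $\bm{u}\mapsto\bm{u}M'$, is $\mathbb{F}_{q^r}$-linear, so in particular it is $\mathbb{F}_q$-linear. It is also bijective, because $M'$ is invertible: all $\lambda_i^2\neq0$.

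Next I use linearity to pass from generators to spans. Since $\phi$ is $\mathbb{F}_q$-linear, it carries $\mathbb{F}_q$-spans to $\mathbb{F}_q$-spans. Therefore
\[
\phi(\mathcal{C}^{\star2})=\Span_{\mathbb{F}_q}\{\phi(\bm{c}\star\bm{c}')\}=\Span_{\mathbb{F}_q}\{\bm{d}\star\bm{d}':\bm{d},\bm{d}'\in\mathcal{D}\}=\mathcal{D}^{\star2}.
\]
The middle equality holds because every element of $\mathcal{D}$ has the form $\bm{c}M$ with $\bm{c}\in\mathcal{C}$.

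Since $\phi$ is an injective $\mathbb{F}_q$-linear map, it preserves dimension. This gives $\dim_{\mathbb{F}_q}\mathcal{C}^{\star2}=\dim_{\mathbb{F}_q}\mathcal{D}^{\star2}$, as required.

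There is no real obstacle here. The only point needing care is the index bookkeeping for how $P$ acts on coordinates, so that both Schur factors are permuted consistently; they are, because the same $M$ is applied to each.
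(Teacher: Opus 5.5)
Your proof is correct. Writing $M=PD$ and using $(\bm{c}M)\star(\bm{c}'M)=(\bm{c}\star\bm{c}')PD^{2}$ gives an $\mathbb{F}_q$-linear bijection from $\mathcal{C}^{\star 2}$ onto $\mathcal{D}^{\star 2}$, which is the standard argument behind \cite[Lemma 6.1]{sharma2024mds}. The paper itself gives no proof of this lemma and only cites that reference, so there is no different route to compare against.
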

The following lemma gives the dimension of the Schur square of an additive Reed--Solomon code.
\begin{lemma}\cite[Theorem 6.1]{sharma2024mds}\label{lem: dim of schur of ars}
    Let $n$, $k$, and $r\ge 2$ be integers satisfying $1\le k\le nr$. Suppose that the entries of $\bm{\alpha}$ belong to $\mathcal{B}_{q,r}$ with $|\bm{\alpha}|=n$. Then
\[
\dim_{\mathbb{F}_q}\left( \ars_{n,k}(\bm{\alpha})^{\star 2}\right)=\min\{nr, 2k-1\}.
\]
\end{lemma}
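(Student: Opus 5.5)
The plan is to identify the Schur square with an evaluation code of polynomials over $\mathbb{F}_q$, and then to find its dimension by studying the kernel of the evaluation map. Write $V_m:=\mathbb{F}_q[x]_{<m}$ and let
\[
\mathrm{ev}:\mathbb{F}_q[x]\to\mathbb{F}_{q^r}^n,\qquad f\mapsto\bigl(f(\alpha_1),\dots,f(\alpha_n)\bigr),
\]
which is $\mathbb{F}_q$-linear. Since $\mathrm{ev}(f)\star\mathrm{ev}(g)=\mathrm{ev}(fg)$, the Schur square $\ars_{n,k}(\bm{\alpha})^{\star2}$ is the $\mathbb{F}_q$-span of $\mathrm{ev}(x^{i+j})$ for $0\le i,j\le k-1$. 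This span is exactly $\mathrm{ev}(V_{2k-1})$, because every monomial $x^s$ with $0\le s\le 2k-2$ arises as a product $x^ix^j$. It therefore suffices to show that
\[
\dim_{\mathbb{F}_q}\mathrm{ev}(V_m)=\min\{nr,m\}\quad\text{for every } m\ge 1,
\]
and then to take $m=2k-1$.

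Next I will compute the kernel of $\mathrm{ev}$ on all of $\mathbb{F}_q[x]$. A polynomial $f\in\mathbb{F}_q[x]$ vanishes at $\alpha_i$ exactly when the minimal polynomial $M_i(x)$ of $\alpha_i$ over $\mathbb{F}_q$ divides $f$. Because $\alpha_i\in\mathcal{B}_{q,r}$ has $r$ distinct conjugates, $\deg M_i=r$. Because the $\alpha_i$ are distinct representatives of distinct conjugacy classes, the $M_i$ are pairwise distinct irreducible polynomials, hence pairwise coprime. Consequently
\[
\ker(\mathrm{ev})=\bigl(M(x)\bigr),\qquad M(x)=\prod_{i=1}^{n}M_i(x),\quad \deg M=nr.
\]
This is the key step, and it is the one place where the hypothesis on $\bm{\alpha}$ is used.

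Now restrict to $V_m$. If $m\le nr$, then the only multiple of $M$ of degree less than $m$ is $0$, so $\mathrm{ev}$ is injective on $V_m$ and $\dim_{\mathbb{F}_q}\mathrm{ev}(V_m)=m$. If $m>nr$, then $\mathrm{ev}(V_m)\supseteq\mathrm{ev}(V_{nr})$, which already has dimension $nr$, while $\mathrm{ev}(V_m)\subseteq\mathbb{F}_{q^r}^n$ has $\mathbb{F}_q$-dimension at most $nr$. Hence $\dim_{\mathbb{F}_q}\mathrm{ev}(V_m)=nr$ in this case.

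Combining the two cases with $m=2k-1$ gives
\[
\dim_{\mathbb{F}_q}\bigl(\ars_{n,k}(\bm{\alpha})^{\star2}\bigr)=\min\{nr,2k-1\}.
\]
I do not expect any step to be a serious obstacle. The only points needing care are the coprimality of the $M_i$, which requires the $\alpha_i$ to lie in pairwise distinct conjugacy classes, and the observation that the span of the products $x^{i+j}$ fills all of $V_{2k-1}$.
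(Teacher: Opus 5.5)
Your proof is correct and complete. The identification $\ars_{n,k}(\bm{\alpha})^{\star 2}=\mathrm{ev}(\mathbb{F}_q[x]_{<2k-1})$ is sound. So is the computation $\ker(\mathrm{ev})=(M)$ with $\deg M=nr$, which uses that distinct elements of $\mathcal{B}_{q,r}$ have distinct, pairwise coprime minimal polynomials of degree $r$. The two-case dimension count then yields $\min\{nr,2k-1\}$.

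The paper does not prove this lemma itself; it is quoted from \cite{sharma2024mds}, so there is no in-paper argument to compare against directly. Your key step uses the same mechanism the paper relies on in its own Schur-square theorems in Sections 3 and 4. A polynomial over $\mathbb{F}_q$ vanishing at every $\alpha_i\in\mathcal{B}_{q,r}$ vanishes at all $nr$ conjugates, so if its degree is below $nr$ it must be zero.

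The paper only applies the lemma under $k<nr/2$, that is, in the injective regime $2k-1<nr$. Your argument also handles the case $2k-1\ge nr$: there the Schur square is all of $\mathbb{F}_{q^r}^n$, by comparing $\mathrm{ev}(\mathbb{F}_q[x]_{<nr})$ with the ambient dimension $nr$. This makes your proof a self-contained justification of the full statement as cited.
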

\begin{remark}
    \begin{enumerate}
        \item[(1)] As in the literature, throughout this article, we assume that the entries of $\bm{\alpha}=(\alpha_1,\alpha_2,\dots,\alpha_n)$ belong to $\mathcal{B}_{q,r}$. Since any permutation of the evaluation points yields a monomially equivalent code, the ordering of the entries of $\bm{\alpha}$ is irrelevant. Accordingly, by abuse of notation, we write $\bm{\alpha}\subseteq\mathcal{B}_{q,r}$.

        \item[(2)] Since $\atgrs_{n,k}(\bm{\alpha},\bm{v},\bm{t},\bm{h},\bm{\eta})$ is monomially equivalent to $\atrs_{n,k}(\bm{\alpha},\bm{t},\bm{h},\bm{\eta})$, throughout this article we restrict our attention to the latter.
    \end{enumerate}
\end{remark}

\section{A Class of Additive MDS Single-Twisted RS Codes}\label{Section 3}
In this section, we consider the single-twist case with $t=2$, $h\in\{0,1,\dots,k-1\}$, and $\eta\in\mathbb{F}_q^*$, and write
\[
\mathcal{C}_{n,k}(\bm{\alpha}, 2, h, \eta)
:=
\atrs_{n,k}(\bm{\alpha},2,h,\eta).
\]
This construction generalizes \cite[Section~3]{JiayuMa2026}, which corresponds to the special case $h=k-1$. We establish new existence results for additive MDS codes, leading to families with larger lengths than those obtained in
\cite{JiayuMa2026}. We further prove that these codes are not monomially equivalent to additive Reed--Solomon codes.
\begin{theorem}\label{thm: AMDS or MDS}
    For $1\le k\le nr,$ the code $\mathcal{C}_{n,k}(\bm{\alpha}, 2, h, \eta)$ over $\mathbb{F}_{q^r}$ is either an additive MDS code or an additive almost MDS code.
\end{theorem}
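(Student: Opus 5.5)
Write $\mathcal{C}=\mathcal{C}_{n,k}(\bm{\alpha},2,h,\eta)$. The plan is to compare $\mathcal{C}$ with an additive RS code of one higher degree bound and transfer its minimum distance.

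\textbf{Step 1: embed $\mathcal{C}$ in an additive RS code.} Every $f\in\mathcal{V}_{k,2,h,\eta}$ has the form $f(x)=\sum_{i=0}^{k-1}a_ix^i+\eta a_h x^{k+1}$, so $f\in\mathbb{F}_q[x]_{<k+2}$. Hence $\mathcal{C}\subseteq \ars_{n,k+2}(\bm{\alpha})$ whenever $k+2\le nr$.

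\textbf{Step 2: apply the MDS property of additive RS codes.} Since $\bm{\alpha}\subseteq\mathcal{B}_{q,r}$, the result of \cite{sharma2024mds} recalled in Section~\ref{Section 2} says $\ars_{n,k+2}(\bm{\alpha})$ is additive MDS. Its $\mathbb{F}_q$-dimension is $k+2$, because evaluation is injective on $\mathbb{F}_q[x]_{<k+2}$ when $k+2\le nr$. For this, note that a nonzero $g\in\mathbb{F}_q[x]$ vanishing at $\alpha_i$ vanishes on its whole orbit $[\alpha_i]$, which has $r$ elements; the orbits of distinct representatives are disjoint, so $g$ would have at least $nr$ roots. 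Therefore
\[
d(\mathcal{C})\ \ge\ d\bigl(\ars_{n,k+2}(\bm{\alpha})\bigr)=n-\left\lceil\tfrac{k+2}{r}\right\rceil+1 .
\]

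\textbf{Step 3: compare with the Singleton bound.} The same root-counting argument bounds the weight of each codeword directly. A nonzero codeword $(f(\alpha_1),\dots,f(\alpha_n))$ has at most $\lfloor (k+1)/r\rfloor$ zero coordinates, since $\deg f\le k+1$ and every zero coordinate contributes $r$ roots. The code has dimension $k$, because $f\mapsto$ codeword is injective when $k+1<nr$. Thus
\[
n-\left\lfloor\tfrac{k+1}{r}\right\rfloor\ \le\ d(\mathcal{C})\ \le\ n-\left\lceil\tfrac{k}{r}\right\rceil+1 .
\]
A short case check on $k \bmod r$ closes the gap:
\begin{itemize}
\item If $r\nmid k$ and $r\nmid k+1$, then $\lfloor (k+1)/r\rfloor=\lceil k/r\rceil-1$, and the code is MDS outright.
\item If $r\mid k$ or $r\mid k+1$, then $\lfloor (k+1)/r\rfloor\le\lceil k/r\rceil$, so $d(\mathcal{C})\ge n-\lceil k/r\rceil$, which is at worst almost MDS. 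For $r\mid k+1$, write $k+1=mr$; then $\lceil k/r\rceil=m$ and $\lfloor(k+1)/r\rfloor=m$. For $r\mid k$ the bound gives $k/r$ zeros.
\end{itemize}
In every case $d(\mathcal{C})\ge n-\lceil k/r\rceil$, which is the claim.

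\textbf{Main obstacle: the boundary cases $k\ge nr-1$.} Here the degree bound exceeds what the root count controls, and $\mathcal{C}$ need not have dimension $k$, since $x^{k+1}$ can reduce modulo $\prod_i m_{\alpha_i}(x)$ (the product of the minimal polynomials of the $\alpha_i$), which has degree $nr$. I would handle these separately:
\begin{itemize}
\item For $k=nr$, the twist parameter constraints $1\le t\le nr-k$ in the definition exclude this case, so only $k\le nr-2$ matters, because $t=2$ forces $k\le nr-2$.
\item Consequently $\deg f\le k+1\le nr-1$ always holds, which makes both injectivity and the root count valid.
\end{itemize}
The nontrivial part of the argument is therefore only the orbit-based root count, which I would state as a small lemma: a nonzero $g\in\mathbb{F}_q[x]$ vanishing on $s$ distinct elements of $\bm{\alpha}$ satisfies $\deg g\ge sr$.
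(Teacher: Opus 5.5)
Your proof is correct and follows the paper's route. You embed the code in $\ars_{n,k+2}(\bm{\alpha})$ to get $d\ge n-\lceil (k+2)/r\rceil+1$ (your root-count bound $n-\lfloor (k+1)/r\rfloor$ is the same quantity) and then split on $k \bmod r$; your explicit checks that the code has $\mathbb{F}_q$-dimension $k$ and that $t=2$ forces $k\le nr-2$, which is needed for $\ars_{n,k+2}(\bm{\alpha})$ to make sense, fill in points the paper leaves implicit.
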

\begin{proof}
    Since 
    $$\Span_{\mathbb{F}_q}\{1, x, \dots, x^{h-1}, x^{h}+\eta x^{k+1}, x^{h+1}, \dots, x^{k-1}\}\subset \Span_{\mathbb{F}_q}\{1, x,\dots, x^{k+1}\},$$
    we have $\mathcal{C}_{n,k}(\bm{\alpha}, 2, h, \eta)\subset \ars_{n,k+2}(\bm{\alpha}).$ Therefore, $d=d(\mathcal{C}_{n,k}(\bm{\alpha}, 2, h, \eta))\ge d(\ars_{n,k+2}(\bm{\alpha}))=n-\lceil \frac{k+2}{r}\rceil+1,$ that is,
    \begin{equation}\label{eqn: lower bound on d}
        d\ge n-\left\lceil \frac{k+2}{r}\right\rceil+1.
    \end{equation}
    By the division algorithm, write $k=ar+b$, where $a\in \mathbb{N}\cup \{0\}$ and $0\le b \le r-1$.

    If $1\le b \le r-2,$ then $\left\lceil \frac{k+2}{r}\right\rceil=\left\lceil \frac{ar+b+2}{r}\right\rceil=\left\lceil \frac{ar+b}{r}\right\rceil=\left\lceil \frac{k}{r}\right\rceil.$ It follows from \eqref{eqn: lower bound on d} that $d\ge n-\left\lceil \frac{k}{r}\right\rceil+1$. Since the additive Singleton bound yields $d\le n-\left\lceil \frac{k}{r}\right\rceil+1,$ we conclude that $d=n-\left\lceil \frac{k}{r}\right\rceil+1.$ Therefore, $\mathcal{C}_{n,k}(\bm{\alpha}, 2, h, \eta)$ is additive MDS.

    If $b=0,$ then $\left\lceil \frac{k+2}{r}\right\rceil=\left\lceil \frac{ar+2}{r}\right\rceil=a+1=\left\lceil \frac{k}{r}\right\rceil+1.$ Thus by \eqref{eqn: lower bound on d}, $d\ge n-\left\lceil \frac{k}{r}\right\rceil.$ Combining this with the additive Singleton bound shows that $\mathcal{C}_{n,k}(\bm{\alpha}, 2, h, \eta)$ is either additive MDS or additive almost MDS.

    If $b=r-1,$ then $\left\lceil \frac{k}{r}\right\rceil=\left\lceil \frac{ar+r-1}{r}\right\rceil=a+1,$ and $\left\lceil \frac{k+2}{r}\right\rceil=\left\lceil \frac{ar+r+1}{r}\right\rceil=a+2=\left\lceil \frac{k}{r}\right\rceil+1.$ Therefore, \eqref{eqn: lower bound on d} implies that $d\ge n-\left\lceil \frac{k}{r}\right\rceil.$ Applying the additive Singleton bound once again, we deduce that $\mathcal{C}_{n,k}(\bm{\alpha}, 2, h, \eta)$ is either additive MDS or additive almost MDS. \qed
\end{proof}
\begin{remark}
    By Theorem \ref{thm: AMDS or MDS}, the code $\mathcal{C}_{n,k}(\bm{\alpha}, 2, h, \eta)$ is an additive MDS code over $\mathbb{F}_{q^r}$ for all $1\le k \le nr$ satisfying $k\not\equiv 0,-1 \pmod r.$ 
\end{remark}
\begin{example}
    Let $q=5, r=3, h=1, k=4,$ and let $\eta=1\in \mathbb{F}_5^{*}$. Consider the finite field $\mathbb{F}_{125}:=\mathbb{F}_5[\gamma],$ where $\gamma^3+3\gamma=2.$ Define 
    \begin{equation*}
        \begin{split}
            \bm{\alpha}:=(\gamma, \gamma^{2}, \gamma^{3}, \gamma^{4}, \gamma^{6}, \gamma^{7}, \gamma^{8}, \gamma^{9}, \gamma^{11}, \gamma^{12}, \gamma^{13}, \gamma^{14}, \gamma^{16}, \gamma^{17}, \gamma^{18}, \gamma^{19}, \gamma^{21}, \gamma^{22}, \gamma^{23}, \gamma^{24}, \gamma^{32}, \\
            \gamma^{33}, \gamma^{34}, \gamma^{37}, \gamma^{38}, \gamma^{39}, \gamma^{42}, \gamma^{43}, \gamma^{44}, \gamma^{47}, \gamma^{48}, \gamma^{49}, \gamma^{63}, \gamma^{64}, \gamma^{68}, \gamma^{69}, \gamma^{73}, \gamma^{74}, \gamma^{94}, \gamma^{99}).
        \end{split}
    \end{equation*}
    Using \textsc{Magma} \cite{Magma}, one verifies that the code $\mathcal{C}_{n,k}(\bm{\alpha}, 2, 1, \eta)$ is an $(40, 5^4, 39)$ additive MDS code over $\mathbb{F}_{125}$. This illustrates Theorem \ref{thm: AMDS or MDS}.
\end{example}
The following theorem provides necessary and sufficient conditions for $\mathcal{C}_{n,k}(\bm{\alpha}, 2, h, \eta)$ to be an additive MDS code when $k \equiv 0 \pmod r$, and an additive almost MDS code when $k \equiv -1 \pmod r$.

\begin{theorem}\label{thm: MDS condns for k equiv 0 or -1}
    \begin{enumerate}
        \item[(a)] Suppose that $k\equiv 0\pmod r$. The code $\mathcal{C}_{n,k}(\bm{\alpha}, 2, h, \eta)$ is an additive MDS code over $\mathbb{F}_{q^r}$ if and only if $\eta^{-1}\neq \xi_I,$ for every subset $I\subseteq [n]$ with $|I|=\frac{k}{r},$ where 
        $$
        \xi_I= (-1)^{k-h+1} \left(
\sum_{\substack{
S\subseteq I\times\{0,\ldots,r-1\}\\
|S|=k-h+1
}}
\prod_{(i,j)\in S}\alpha_i^{q^j} -\left({\sum\limits_{i\in I}\Trqr(\alpha_{i})}\right)  
\sum_{\substack{
S\subseteq I\times\{0,\ldots,r-1\}\\
|S|=k-h
}}
\prod_{(i,j)\in S}\alpha_i^{q^j} \right).
        $$
        \item [(b)] Suppose that $k\equiv -1 \pmod r$. The code $\mathcal{C}_{n,k}(\bm{\alpha}, 2, h, \eta)$ is an additive almost MDS code over $\mathbb{F}_{q^r}$ if and only if $\sum\limits_{i\in J}\Trqr(\alpha_{i}) = 0,$ and $\eta^{-1} = \zeta_J$ for some subset $J\subseteq [n]$ with $|J|=\frac{k+1}{r},$ where 
        $$
        \zeta_J= (-1)^{k-h+1}
\sum_{\substack{
S\subseteq J\times\{0,\ldots,r-1\}\\
|S|=k-h+1
}}
\prod_{(i,j)\in S}\alpha_i^{q^j}.
        $$
    \end{enumerate}
\end{theorem}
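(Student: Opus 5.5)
The plan is to translate zeros of codewords into divisibility by minimal polynomials, and then read off the membership conditions of $\twist$ coefficient by coefficient.

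\textbf{Step 1: describe the twisted space.} Since $t=2$, the space $\mathcal{V}_{k,2,h,\eta}$ consists exactly of the polynomials $f\in\mathbb{F}_q[x]$ with $\deg f\le k+1$ such that
\begin{itemize}
\item the coefficient of $x^k$ is $0$, and
\item the coefficient of $x^{k+1}$ equals $\eta$ times the coefficient of $x^h$.
\end{itemize}

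\textbf{Step 2: zeros come in full conjugacy classes.} Since $f$ has coefficients in $\mathbb{F}_q$ and each $\alpha_i\in\mathcal{B}_{q,r}$, $f(\alpha_i)=0$ forces the minimal polynomial $m_i(x)=\prod_{j=0}^{r-1}(x-\alpha_i^{q^j})$, of degree $r$, to divide $f$. The $m_i$ are pairwise coprime, because the $\alpha_i$ come from distinct conjugacy classes. So a codeword vanishing on the coordinates in $I$ comes from an $f$ divisible by $P_I:=\prod_{i\in I}m_i$, which has degree $r|I|$.

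Write
\[
P_I=\sum_{m=0}^{r|I|}(-1)^m s_m(I)\,x^{r|I|-m},
\]
where $s_m(I)$ is the $m$-th elementary symmetric function of the multiset $\{\alpha_i^{q^j}:(i,j)\in I\times\{0,\dots,r-1\}\}$. In particular $s_1(I)=\sum_{i\in I}\Trqr(\alpha_i)$, and $s_m(I)$ is exactly the sum over $|S|=m$ appearing in the statement. We set $s_m=0$ for $m<0$ or $m>r|I|$.

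\textbf{Step 3: part (a).} Let $k=ar$, so the MDS distance is $n-a+1$. By Theorem \ref{thm: AMDS or MDS} the code fails to be MDS iff some nonzero $f\in\mathcal{V}_{k,2,h,\eta}$ vanishes at $a=k/r$ of the $\alpha_i$. By Step 2 this means $f=g\,P_I$ with $|I|=a$ and $\deg g\le 1$.

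If $g$ is a nonzero constant $d$, then the coefficient of $x^k$ in $f$ is $d\neq 0$. This contradicts Step 1. Hence $g=c(x+d)$ with $c\neq0$, and we may take $c=1$.

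\begin{itemize}
\item The coefficient of $x^k$ in $(x+d)P_I$ is $d-s_1(I)$. Vanishing of this coefficient forces $d=s_1(I)$.
\item The coefficient of $x^{k+1}$ is $1$.
\item The coefficient of $x^h$ is $(-1)^{k-h+1}s_{k-h+1}(I)+s_1(I)(-1)^{k-h}s_{k-h}(I)$, which is exactly $\xi_I$.
\end{itemize}
The twist condition then reads $1=\eta\,\xi_I$, i.e.\ $\eta^{-1}=\xi_I$. The boundary case $h=0$ is consistent, since there $s_{k+1}(I)=0$.

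Conversely, if $\eta^{-1}=\xi_I$ for some $I$, then $f=(x+s_1(I))P_I$ satisfies both conditions of Step 1, so $f\in\mathcal{V}_{k,2,h,\eta}$. This $f$ gives a nonzero codeword of weight at most $n-a$, so the code is not MDS.

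\textbf{Step 4: part (b).} Let $k=ar+r-1$, so $\lceil k/r\rceil=a+1$ and the almost MDS distance is $n-a-1$. By Theorem \ref{thm: AMDS or MDS}, almost MDS holds iff some nonzero $f\in\mathcal{V}_{k,2,h,\eta}$ vanishes on $a+1=(k+1)/r$ of the $\alpha_i$.

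Such an $f$ has degree at least $r(a+1)=k+1$, so $f=c\,P_J$ with $|J|=(k+1)/r$ and $c\in\mathbb{F}_q^*$. Applying Step 1 to $cP_J$:
\begin{itemize}
\item The $x^k$-coefficient is $-c\,s_1(J)$, so it vanishes iff $\sum_{i\in J}\Trqr(\alpha_i)=0$.
\item The twist condition is $c=\eta\,c\,(-1)^{k-h+1}s_{k-h+1}(J)$, i.e.\ $\eta^{-1}=\zeta_J$.
\end{itemize}
Both conditions are independent of $c$, and when they hold $P_J$ itself lies in the twisted space, giving the converse.

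\textbf{Expected obstacles.} The main care needed is the coefficient bookkeeping for the $x^h$ term, particularly the signs and the shift by one degree coming from multiplication by $x$. One must also note that the constructed $f$ gives a nonzero codeword. This holds because $f\neq 0$ has degree at most $k+1<nr$, so it cannot vanish at all $n$ points (which would require degree at least $nr$).
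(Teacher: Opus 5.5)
Your proposal is correct and follows essentially the same route as the paper: you factor the polynomial of a low-weight codeword as a linear (resp.\ constant) multiple of $P_I=\prod_{i\in I}\prod_{j=0}^{r-1}(x-\alpha_i^{q^j})$ and read off both conditions from the $x^k$ and $x^h$ coefficients. For the converse you use the explicit polynomials $(x+s_1(I))P_I$ and $P_J$; these differ from the paper's only by the scalar $\eta$, and your treatment of the $h=0$ boundary ($s_{k+1}(I)=0$, matching the paper's convention $g_{-1}=0$) and of the nonzero-codeword point is sound.
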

\begin{proof}
\begin{enumerate}
    \item [(a)] Suppose that $\mathcal{C}_{n,k}(\bm{\alpha}, 2, h, \eta)$ is not additive MDS. Then, by Theorem \ref{thm: AMDS or MDS}, $$d(\mathcal{C}_{n,k}(\bm{\alpha}, 2, h, \eta))=n-\frac{k}{r}.$$ 
    Hence, there exists a nonzero twisted polynomial
    $$
        f(x)=f_0+f_1x+\dots +f_{h-1}x^{h-1}+f_h (x^{h}+\eta x^{k+1})+f_{h+1}x^{h+1}+\cdots+ f_{k-1}x^{k-1}\in \mathcal{V}_{k,2,h,\eta},
    $$
    and a subset $I\subseteq [n]$ with $|I|=\frac{k}{r}$ such that $f(\alpha_i)=0$, for  all $i\in I.$ Since $\alpha_i\in \mathcal{B},$ it follows that $f(\alpha_i^{q^j})=0,$ for every $i\in I$ and $0\le j \le r-1.$ Consequently, $f(x)$ vanishes at $k$ distinct points. Since $\deg f(x)\le k+1,$ we must have $f_h\ne 0;$ otherwise, $\deg(f)\le k-1$, forcing $f(x)$ to be the zero polynomial, a contradiction.

    Define $$g_I(x):=\prod_{i\in I}\prod_{j=0}^{r-1} (x-\alpha_i^{q^j})=g_0+g_1x+ g_2x^2+\cdots+ g_{k-1} x^{k-1}+x^k.$$ Then 
    \begin{equation}\label{eqn: f(x)= blah blah}
        f(x)=\eta f_h (x-a)g_I(x),
    \end{equation}
    for some $a\in \mathbb{F}_q.$ Comparing the coefficient of $x^{k}$ in \eqref{eqn: f(x)= blah blah}, we obtain $0=\eta f_h g_{k-1}-a \eta f_h$ or equivalently, 
    \begin{equation}
        a=g_{k-1}=-\sum_{i\in I}\sum_{j=0}^{r-1} \alpha_i^{q^j}=-\sum_{i\in I} \Trqr (\alpha_i).
    \end{equation}
    Comparing the coefficient of $x^h$ in \eqref{eqn: f(x)= blah blah}, we get
    $f_h=\eta f_h (g_{h-1}-ag_{h}),$ with the convention that $g_{-1}=0.$ Equivalently,
    \begin{align*}
        \eta^{-1}&=g_{h-1}-ag_h\\
        &= (-1)^{k-h+1} 
\sum_{\substack{
S\subseteq I\times\{0,\ldots,r-1\}\\
|S|=k-h+1
}}
\prod_{(i,j)\in S}\alpha_i^{q^j} -a  (-1)^{k-h}
\sum_{\substack{
S\subseteq I\times\{0,\ldots,r-1\}\\
|S|=k-h
}}
\prod_{(i,j)\in S}\alpha_i^{q^j} \\
&= (-1)^{k-h+1} \left(
\sum_{\substack{
S\subseteq I\times\{0,\ldots,r-1\}\\
|S|=k-h+1
}}
\prod_{(i,j)\in S}\alpha_i^{q^j} -\left({\sum\limits_{i\in I}\Trqr(\alpha_{i})}\right)  
\sum_{\substack{
S\subseteq I\times\{0,\ldots,r-1\}\\
|S|=k-h
}}
\prod_{(i,j)\in S}\alpha_i^{q^j} \right),
    \end{align*}

    that is, $\eta^{-1}=\xi_I$, a contradiction. Therefore, $\mathcal{C}_{n,k}(\bm{\alpha}, 2, h, \eta)$ is an additive MDS code over $\mathbb{F}_{q^r}$.

    Conversely, suppose that $\eta^{-1}=\xi_I,$ for some subset $I\subseteq [n]$ with $|I|=\frac{k}{r}$. Then
    \begin{equation}\label{eqn: eta = blah blah}
    \begin{split}
        \eta^{-1}&= (-1)^{k-h+1} \left(
\sum_{\substack{
S\subseteq I\times\{0,\ldots,r-1\}\\
|S|=k-h+1
}}
\prod_{(i,j)\in S}\alpha_i^{q^j} -\left({\sum\limits_{i\in I}\Trqr(\alpha_{i})}\right)  
\sum_{\substack{
S\subseteq I\times\{0,\ldots,r-1\}\\
|S|=k-h
}}
\prod_{(i,j)\in S}\alpha_i^{q^j} \right)\\
&=g_{h-1}+g_h\sum_{i\in I} \Trqr(\alpha_i).
\end{split}
    \end{equation}
    Define
    $$
        f(x)=\eta \left(x+\sum_{i\in I}\Trqr(\alpha_i)\right)\prod_{i\in I}\prod_{j=0}^{r-1} (x-\alpha_i^{q^j})=\eta \left(x+\sum_{i\in I}\Trqr(\alpha_i)\right)g_I(x).
    $$
    Note that the coefficient of $x^k$ in $f(x)$ is $0$, while the coefficient of $x^{k+1}$ is $\eta$. Moreover, by \eqref{eqn: eta = blah blah}, the coefficient of $x^h$ in $f(x)$ is 
    \[
    \eta\left(g_{h-1}+g_{h}\sum\limits_{i\in I}\Trqr(\alpha_i)\right)=\eta \eta^{-1}=1.
    \]
Hence, $f(x)\in \mathcal{V}_{k, 2, h, \eta}$. Since $f(x)$ has $|I|=\frac{k}{r}$ distinct roots $\{\alpha_i: i \in I\},$ it follows that 
\[
d(\mathcal{C}_{n,k}(\bm{\alpha}, 2, h, \eta)) \leq n - \frac{k}{r} = n-\left\lceil\frac{k}{r}\right\rceil.
\]
and consequently the code $\mathcal{C}_{n,k}(\bm{\alpha}, 2, h, \eta)$ is not additive MDS.

\item [(b)]  Suppose that $\mathcal{C}_{n,k}(\bm{\alpha}, 2, h, \eta)$ is an additive almost MDS code. Then there exists a nonzero twisted polynomial
    $$
        f(x)=f_0+f_1x+\dots +f_{h-1}x^{h-1}+f_h (x^{h}+\eta x^{k+1})+f_{h+1}x^{h+1}+\cdots+ f_{k-1}x^{k-1}\in \mathcal{V}_{k, 2, h, \eta},
    $$
    and a subset $J\subseteq [n]$ with $|J|=\frac{k+1}{r}$ such that $f(\alpha_i)=0$, for  all $i\in J.$ Since $\alpha_i\in \mathcal{B},$ it follows that $f(\alpha_i^{q^j})=0,$ for every $i\in J$ and $0\le j \le r-1.$ Consequently, $f(x)$ vanishes at $k+1$ distinct points. Since $\deg f(x)\le k+1,$ we must have $f_h\ne 0.$ 

    Define $$g_J(x):=\prod_{i\in J}\prod_{j=0}^{r-1} (x-\alpha_i^{q^j})=g_0+g_1x+ g_2x^2+\cdots+ g_{k} x^{k}+x^{k+1}.$$ Then 

    \begin{equation}\label{eqn: f(x)_1}
    f(x)=\eta f_h g_J(x).
    \end{equation}
Comparing the coefficient of $x^{k}$ in \eqref{eqn: f(x)_1}, we get $\sum\limits_{i\in J}\Trqr(\alpha_{i}) = 0$, and comparing the coefficient of $x^h$ in \eqref{eqn: f(x)_1} gives $f_{h} = \eta f_{h} g_{h}$, or equivalently,
\begin{equation*}
    \eta^{-1}= g_{h}=(-1)^{k-h+1} 
    \underset{|S|=k-h+1}{\underset{S\subseteq J\times\{0,\ldots,r-1\}}{\sum}}
\prod_{(i,j)\in S}\alpha_i^{q^j},
\end{equation*}
 that is, $\eta^{-1}=\zeta_J$.

  Conversely, suppose that $\sum\limits_{i\in J}\Trqr(\alpha_{i})=0$ and $\eta^{-1}=\zeta_J,$ for some subset $J\subseteq [n]$ with $|J|=\frac{k+1}{r}$. Then
\begin{eqnarray}\label{eqn: eta = blah blah 1}
    \eta^{-1} &=& (-1)^{k-h+1}
\sum_{\substack{
S\subseteq J\times\{0,\ldots,r-1\}\\
|S|=k-h+1
}}
\prod_{(i,j)\in S}\alpha_i^{q^j} \nonumber \\
&=& g_{h}.
\end{eqnarray}
Define
\[
f(x) := \eta \prod_{i\in J}\prod_{j=0}^{r-1} \left(x-\alpha_{i}^{q^j}\right).
\]
Note that the coefficient of $x^{k}$ in $f(x)$ is $0$, while the coefficient of $x^{k+1}$ is $\eta$. Moreover, by  \eqref{eqn: eta = blah blah 1}, the coefficient of $x^h$ in $f(x)$ is 
\[
\eta g_{h} = \eta\eta^{-1} = 1
\]
Hence, $f(x)\in \mathcal{V}_{k, 2, h, \eta}$. Since $f(x)$ has $|J|=\frac{k+1}{r}$ distinct roots $\{\alpha_j: j \in J\},$ it follows that 
\[
d(\mathcal{C}_{n,k}(\bm{\alpha}, 2, h, \eta)) \leq n - \frac{k+1}{r} = n-\left\lceil\frac{k}{r}\right\rceil.
\]
Consequently, $\mathcal{C}_{n,k}(\bm{\alpha}, 2, h, \eta)$ is an additive almost MDS code. \qed 
\end{enumerate}
\end{proof}
\begin{remark}
    Theorems \ref{thm: AMDS or MDS} and \ref{thm: MDS condns for k equiv 0 or -1} generalize \cite[Theorem 3.2]{JiayuMa2026}. Indeed, the latter is recovered as the special case $h=k-1$.
\end{remark}

The following remark guarantees the existence of a choice of $\eta$ satisfying the criterion in Theorem \ref{thm: MDS condns for k equiv 0 or -1}(a).
\begin{remark}\label{remark: Existence}
    Suppose $k\equiv 0 \pmod r$ and let $\mathcal{A}_{n, k, r}:=\left\{\xi_I: I\subseteq [n], \; |I|=\frac{k}{r}\right\}$, where $\xi_I$ is as in Theorem \ref{thm: MDS condns for k equiv 0 or -1}(a).  Then $|\mathcal{A}_{n, k, r}|\le\binom{n}{k/r}$. Hence, if $\binom{n}{k/r}<q-1,$ then $\mathbb{F}_q^{*}\setminus \mathcal{A}_{n, k, r}\ne \emptyset.$ Therefore, there exists $\lambda\in \mathbb{F}_q^{*}\setminus \mathcal{A}_{n, k, r}$. Taking $\eta=\lambda^{-1}$, Theorem \ref{thm: MDS condns for k equiv 0 or -1}(a) implies that $\mathcal{C}_{n,k}(\bm{\alpha}, 2, h, \eta)$ is an additive MDS code over $\mathbb{F}_{q^r}$.
\end{remark}
The existence of additive MDS codes for $k\equiv -1\pmod r$ can be established from Theorem~\ref{thm: MDS condns for k equiv 0 or -1}(b) by an argument analogous to that in Remark \ref{remark: Existence}.
\begin{example}
Let $q=13$, $r=2$, $h=1$, $k=4$, and $n=5$. Since
\[
\binom{n}{k/r}=\binom{5}{2}=10<12=q-1,
\]
the hypothesis of Remark~\ref{remark: Existence} is satisfied. Hence, there exists
$\eta\in\mathbb{F}_{13}^{*}$ such that
$\mathcal{C}_{5,4}(\bm{\alpha},2,1,\eta)$ is an additive $(5,13^4,4)$ MDS code
over $\mathbb{F}_{169}$.
\end{example}

The special case $h=0$ is of particular interest. It admits a simpler characterization and leads to a stronger existence result, allowing additive MDS codes of larger lengths. We therefore state it separately.
\begin{theorem}\label{thm: h=0}
    \begin{enumerate}
        \item[(a)] Suppose that $k\equiv 0\pmod r$. The code $\mathcal{C}_{n,k}(\bm{\alpha}, 2, 0, \eta)$ is an additive MDS code over $\mathbb{F}_{q^r}$ if and only if $\eta^{-1}\neq (-1)^{k} \prod_{i\in I} \Nrqr(\alpha_i) \sum_{i\in I}\Trqr(\alpha_{i}),$ for every subset $I\subseteq [n]$ with $|I|=\frac{k}{r}.$
        \item [(b)] Suppose that $k\equiv -1 \pmod r$. The code $\mathcal{C}_{n,k}(\bm{\alpha}, 2, 0, \eta)$ is an additive almost MDS code over $\mathbb{F}_{q^r}$ if and only if $\sum\limits_{j\in J}\Trqr(\alpha_{j}) = 0$ and $\eta^{-1}= (-1)^{k+1} \prod_{j\in J}\Nrqr(\alpha_j)$ for some subset $J\subseteq [n]$ with $|J|=\frac{k+1}{r}.$
    \end{enumerate}
\end{theorem}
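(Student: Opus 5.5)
This is the specialization $h=0$ of Theorem~\ref{thm: MDS condns for k equiv 0 or -1}. So we only need to evaluate $\xi_I$ and $\zeta_J$ explicitly when $h=0$, using the convention $g_{-1}=0$ from that proof. We identify the elementary symmetric sums in the definitions with coefficients of $g_I(x)$ and $g_J(x)$.

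For part (a), set $h=0$ in Theorem~\ref{thm: MDS condns for k equiv 0 or -1}(a). The first sum in $\xi_I$ runs over subsets $S\subseteq I\times\{0,\dots,r-1\}$ with $|S|=k+1$. Since $|I\times\{0,\dots,r-1\}|=k$, that sum is empty and equals $0$; equivalently, $g_{h-1}=g_{-1}=0$. The second sum has $|S|=k$, so $S$ is the whole index set and the sum is the single product
\[
\prod_{i\in I}\prod_{j=0}^{r-1}\alpha_i^{q^j}=\prod_{i\in I}\Nrqr(\alpha_i).
\]
Hence
\[
\xi_I=(-1)^{k+1}\Bigl(0-\sum_{i\in I}\Trqr(\alpha_i)\prod_{i\in I}\Nrqr(\alpha_i)\Bigr)=(-1)^{k}\prod_{i\in I}\Nrqr(\alpha_i)\sum_{i\in I}\Trqr(\alpha_i).
\]
Substituting this into the criterion $\eta^{-1}\neq\xi_I$ gives (a).

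For part (b), set $h=0$ in Theorem~\ref{thm: MDS condns for k equiv 0 or -1}(b). Now $|J\times\{0,\dots,r-1\}|=k+1$, and $\zeta_J$ sums over subsets of size $k+1$. That is the full set, so
\[
\zeta_J=(-1)^{k+1}\prod_{j\in J}\Nrqr(\alpha_j),
\]
which is the constant term $g_0$ of $g_J$. The trace condition is unchanged, and (b) follows.

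The only delicate point is handling the empty sum in (a). The proof of Theorem~\ref{thm: MDS condns for k equiv 0 or -1} relies on the convention $g_{-1}=0$, and that argument (comparing constant terms in $f(x)=\eta f_0(x-a)g_I(x)$) holds verbatim for $h=0$. As a sanity check, I would redo that coefficient comparison directly. The constant term gives $f_0=-a\eta f_0 g_0$ with $a=-\sum_{i\in I}\Trqr(\alpha_i)$ and $g_0=(-1)^k\prod_{i\in I}\Nrqr(\alpha_i)$. This yields $\eta^{-1}=g_0\sum_{i\in I}\Trqr(\alpha_i)$, matching the formula above.
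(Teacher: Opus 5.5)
Your proposal is correct and follows the paper's approach. The paper gives no separate proof and presents this theorem as the $h=0$ case of the arbitrary-hook criterion. Your evaluation yields exactly the stated conditions: the $|S|=k+1$ sum in (a) is empty, equivalently $g_{-1}=0$, and each full-index-set sum reduces to $\prod\Nrqr(\alpha_i)$. The direct constant-term check, $\eta^{-1}=g_0\sum_{i\in I}\Trqr(\alpha_i)$, is a good sanity confirmation but is not needed.
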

Before giving the existence of additive MDS in Theorem \ref{thm: h=0}(a), we prove the following lemma.
\begin{lemma}\label{lem:betaq-beta}
Let $r$ be such that $\Char(\mathbb{F}_q)\nmid r$. Then, for every
$\beta\in\mathbb{F}_{q^r}$, the element $\beta^q-\beta$ has exactly $r$ distinct conjugates over $\mathbb{F}_q$ if and only if
$\beta\notin\mathbb{F}_{q^d}$ for every proper divisor $d$ of $r$.
\end{lemma}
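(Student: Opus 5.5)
We will prove Lemma \ref{lem:betaq-beta} by looking at the linear map $L:\mathbb{F}_{q^r}\to\mathbb{F}_{q^r}$, $L(\beta)=\beta^q-\beta$, which is $\mathbb{F}_q$-linear with kernel $\mathbb{F}_q$ and commutes with $\sigma_q$. An element $\gamma$ has exactly $r$ distinct conjugates if and only if $\gamma\notin\mathbb{F}_{q^d}$ for every proper divisor $d$ of $r$; equivalently, $\sigma_q^d(\gamma)\neq\gamma$ for all proper divisors $d\mid r$, $d<r$. So we need: $\sigma_q^d(L\beta)=L\beta$ if and only if $\sigma_q^d(\beta)=\beta$, for each proper divisor $d$ of $r$.

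The direction $(\Leftarrow)$ of the equivalence just stated is immediate. If $\beta\in\mathbb{F}_{q^d}$, then $\beta^q-\beta\in\mathbb{F}_{q^d}$ since $\mathbb{F}_{q^d}$ is a subfield closed under Frobenius. Hence if $\beta$ lies in a proper subfield, so does $\beta^q-\beta$, and then it has fewer than $r$ conjugates. This gives the ``only if'' part of the lemma.

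For the converse, suppose $\gamma=\beta^q-\beta\in\mathbb{F}_{q^d}$ with $d\mid r$, $d<r$. Set $\delta=\sigma_q^d(\beta)-\beta$. Since $L$ commutes with $\sigma_q^d$, we get $L(\delta)=\sigma_q^d(\gamma)-\gamma=0$, so $\delta=c\in\mathbb{F}_q$. Thus $\sigma_q^d(\beta)=\beta+c$. Iterating $m=r/d$ times and using $\sigma_q^r=\mathrm{id}$ together with $\sigma_q(c)=c$ gives
\[
\beta=\sigma_q^{r}(\beta)=\beta+mc .
\]
So $mc=0$ in $\mathbb{F}_q$.

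This is the key step, and it is where the hypothesis $\Char(\mathbb{F}_q)\nmid r$ is used. Since $m$ divides $r$, $m$ is not divisible by $p=\Char(\mathbb{F}_q)$, so $m$ is invertible in $\mathbb{F}_q$. Hence $c=0$, so $\sigma_q^d(\beta)=\beta$, i.e.\ $\beta\in\mathbb{F}_{q^d}$. Contrapositively, if $\beta$ avoids every proper subfield $\mathbb{F}_{q^d}$, then so does $\beta^q-\beta$, and therefore $\beta^q-\beta$ has exactly $r$ distinct conjugates.

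The only delicate point is this Artin--Schreier-type step. Without the characteristic hypothesis it genuinely fails: for $r=p$, $d=1$, any $\beta$ with $\beta^q-\beta\in\mathbb{F}_q^{*}$ is a counterexample.
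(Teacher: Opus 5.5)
Your proof is correct and follows essentially the same route as the paper: both show that Frobenius-invariance of $\beta^q-\beta$ forces $\sigma_q^{m}(\beta)-\beta\in\mathbb{F}_q$, iterate to make a multiple of this constant vanish, and then use $\Char(\mathbb{F}_q)\nmid r$ to conclude the constant is zero. The only cosmetic difference is in setup. You work directly with a divisor $d\mid r$, via the standard characterization of ``exactly $r$ conjugates'' as not lying in a proper subfield, and iterate $r/d$ times. The paper instead starts from an arbitrary coincidence $\alpha^{q^i}=\alpha^{q^j}$, iterates $r$ times with $m=i-j$, and finishes with $\mathbb{F}_{q^m}\cap\mathbb{F}_{q^r}=\mathbb{F}_{q^{\gcd(m,r)}}$.
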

\begin{proof}
Set $\alpha:=\beta^q-\beta$ throughout the proof.

If $\beta\in\mathbb{F}_{q^d}$ for some proper divisor $d$ of $r$, then $\alpha^{q^d}=\beta^{q^{d+1}}-\beta^{q^d}=\beta^q-\beta=\alpha$, and hence $\alpha$
has less than $r$ distinct conjugates.

Conversely, suppose that $\alpha$ has fewer than $r$ distinct conjugates.
Then there exist integers $0\le j<i\le r-1$ such that $\alpha^{q^i}=\alpha^{q^j}.$ Since $\alpha=\beta^q-\beta$, we obtain $\beta^{q^{i+1}}-\beta^{q^i}
=\beta^{q^{j+1}}-\beta^{q^j},$ or equivalently, $
\beta^{q^{i+1}}-\beta^{q^{j+1}}=\beta^{q^i}-\beta^{q^j}.$ Hence $(\beta^{q^i}-\beta^{q^j})^q
=\beta^{q^i}-\beta^{q^j},$
which implies that $\beta^{q^i}-\beta^{q^j}\in\mathbb{F}_q.$ Raising power $q^{r-j}$ yields
\[
\beta^{q^{i-j}}-\beta\in\mathbb{F}_q.
\]

Set
\[
m:=i-j
\quad\text{and}\quad
\gamma:=\beta^{q^m}-\beta.
\]
Then $1\le m\le r-1$ and let $d=\gcd(m,r)$. Since $m<r$, $d$ is a proper divisor of $r$. Note that $\gamma\in \mathbb{F}_{q}$.
We claim that
\[
\beta^{q^{tm}}=\beta+t\gamma
\]
for every positive integer $t$. The proof is by induction on $t$. The case
$t=1$ is immediate. Assuming the identity holds for some $t\ge1$, we obtain
\[
\beta^{q^{(t+1)m}}
=
(\beta^{q^{tm}})^{q^m}
=
(\beta+t\gamma)^{q^m}
=
\beta^{q^m}+t\gamma
=
\beta+\gamma+t\gamma
=
\beta+(t+1)\gamma,
\]
since $\gamma\in\mathbb{F}_q$.

Taking $t=r$ gives
\[
\beta
=
\beta^{q^{rm}}
=
\beta+r\gamma.
\]
Therefore,
\[
r\gamma=0.
\]
Since $\Char(\mathbb{F}_q)\nmid r$, it follows that $\gamma=0$, that is, $\beta^{q^m}=\beta,$ so $\beta\in \mathbb{F}_{q^m}.$ Since $\beta$ also belongs to $\mathbb{F}_{q^r},$
\[
\beta\in \mathbb{F}_{q^r}\cap \mathbb{F}_{q^m}=\mathbb{F}_{q^{\gcd(m,r)}}
=
\mathbb{F}_{q^d}.
\]

Thus, if $\alpha$ has fewer than $r$ distinct conjugates, then
$\beta\in\mathbb{F}_{q^d}$, for some proper divisor $d$ of $r$.
\flushright\qed
\end{proof}
\begin{lemma}
Let $r$ be such that $\Char(\mathbb{F}_q)\nmid r$. Let
$\mathcal{T}_{q,r}$ be the set of trace-zero elements in $\mathcal{B}_{q,r}$. Then
\[
|\mathcal{T}_{q,r}|=
\frac{1}{rq}
\sum_{d\mid r}
\mu(d)\,q^{r/d}.
\]
\end{lemma}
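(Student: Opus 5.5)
The plan is to count trace-zero elements with exactly $r$ distinct conjugates by parametrizing the trace-zero hyperplane with the additive Hilbert 90 map $\beta\mapsto\beta^q-\beta$, and then to apply Lemma~\ref{lem:betaq-beta}.

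First I would set up the map. Consider the $\mathbb{F}_q$-linear map
\[
\wp:\mathbb{F}_{q^r}\to\mathbb{F}_{q^r},\qquad \wp(\beta)=\beta^q-\beta .
\]
Its kernel is $\{\beta:\beta^q=\beta\}=\mathbb{F}_q$, so its image has $\mathbb{F}_q$-dimension $r-1$. Since $\Trqr(\beta^q-\beta)=\Trqr(\beta)^q-\Trqr(\beta)=0$ (the trace lies in $\mathbb{F}_q$), the image lies in $\ker\Trqr$. The trace map is surjective, so $\ker\Trqr$ also has dimension $r-1$, and hence $\operatorname{Im}\wp=\ker\Trqr$. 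Thus every trace-zero $\alpha$ has exactly $q$ preimages, and they form a coset $\beta+\mathbb{F}_q$.

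Next I would transfer the ``exactly $r$ conjugates'' condition through $\wp$. Let $\alpha$ be trace zero and $\beta\in\wp^{-1}(\alpha)$. Since $\Char(\mathbb{F}_q)\nmid r$, Lemma~\ref{lem:betaq-beta} says that $\alpha$ has exactly $r$ distinct conjugates if and only if $\beta$ lies in no $\mathbb{F}_{q^d}$ with $d$ a proper divisor of $r$. That is equivalent to $\beta$ itself having exactly $r$ distinct conjugates. This condition is the same for every element of the coset $\beta+\mathbb{F}_q$, since translating by an element of $\mathbb{F}_q$ does not change membership in any subfield containing $\mathbb{F}_q$. So $\wp$ restricts to a $q$-to-$1$ surjection
\[
\{\beta\in\mathbb{F}_{q^r}:\beta \text{ has } r \text{ distinct conjugates}\}\longrightarrow \{\alpha\in\ker\Trqr:\alpha\text{ has } r \text{ distinct conjugates}\}.
\]
The domain has $r|\mathcal{B}_{q,r}|=\sum_{d\mid r}\mu(d)q^{r/d}$ elements, so the target has $\frac1q\sum_{d\mid r}\mu(d)q^{r/d}$ elements.

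Finally I would pass to representatives. The trace is constant on each conjugacy class, so the target set is a union of full Frobenius orbits, each of size $r$. $\mathcal{T}_{q,r}$ picks exactly one representative from each of these orbits, so dividing by $r$ gives the stated formula.

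The only delicate point is the second step. The equivalence there uses Lemma~\ref{lem:betaq-beta}, and that is where the hypothesis $\Char(\mathbb{F}_q)\nmid r$ is needed. I also need the fibre of $\wp$ to be uniformly good or bad, which the coset argument above handles.
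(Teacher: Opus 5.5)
Your proposal is correct and follows essentially the same route as the paper. Both arguments use the map $\beta\mapsto\beta^q-\beta$ onto the trace-zero hyperplane, transfer the ``exactly $r$ conjugates'' condition via Lemma~\ref{lem:betaq-beta}, note that each fibre is a full coset $\beta+\mathbb{F}_q$ lying entirely inside or outside the degree-$r$ set, and divide by $r$ for Frobenius orbits. The only minor differences are that you justify $\operatorname{Im}\wp=\ker\Trqr$ by a dimension count (the paper simply asserts it), and you take the count $\sum_{d\mid r}\mu(d)q^{r/d}$ of degree-$r$ elements from the known formula for $|\mathcal{B}_{q,r}|$ rather than re-deriving it by M\"obius inversion.
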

\begin{proof}
The map
\[
L:\mathbb{F}_{q^r}\longrightarrow\mathbb{F}_{q^r},\qquad
L(\beta)=\beta^q-\beta,
\]
is $\mathbb{F}_q$-linear with kernel $\mathbb{F}_q$. Hence, $|\operatorname{Im}(L)|=q^{r-1}.$ Since $\operatorname{Im}(L)=\{\alpha\in\mathbb{F}_{q^r}:\Tr(\alpha)=0\},$
there are exactly $q^{r-1}$ trace-zero elements in
$\mathbb{F}_{q^r}$. Let
\[
S_r=
\left\{
\beta\in\mathbb{F}_{q^r}:
\beta\notin\mathbb{F}_{q^d}
\text{ for every proper divisor } d \text{ of }r
\right\}.
\]
Then by Lemma~\ref{lem:betaq-beta}, the number of
trace-zero elements having exactly $r$ distinct conjugates over
$\mathbb{F}_q$ is $|L(S_r)|$.

Note that $\beta \in S_r$ implies $\beta+\mathbb{F}_q \subseteq S_r$. Indeed, if
$\beta+c \notin S_r$ for some $c\in\mathbb{F}_q$, then
$\beta+c\in\mathbb{F}_{q^d}$ for some proper divisor $d$ of $r$. Since
$\mathbb{F}_q\subseteq\mathbb{F}_{q^d}$, it follows that
\[
\beta=(\beta+c)-c\in\mathbb{F}_{q^d},
\]
contradicting the assumption that $\beta\in S_r$. Hence,
$\beta+\mathbb{F}_q\subseteq S_r$. Therefore, $S_{r}$ is a disjoint union of complete cosets of $\mathbb{F}_{q}$, each of size $q$. Hence, $|L(S_r)| = \frac{|S_r|}{q}$.

Now, for every divisor $d$ of $r,$ define
\[
S_d:=
\left\{
\beta\in\mathbb{F}_{q^d}:
\beta\notin\mathbb{F}_{q^e}
\text{ for every proper divisor } e \text{ of }d
\right\}.
\]
Every element of $\mathbb F_{q^r}$ has a unique minimal subfield
$\mathbb F_{q^d}$, where $d\mid r$.
Therefore the sets $S_d$, $d\mid r$, are pairwise disjoint and
\[
\mathbb F_{q^r}
=
\bigsqcup_{d\mid r}S_d.
\]
Consequently,
\[
q^r=\sum_{d|r} |S_d|.
\]
By the M\"obius inversion formula, we obtain
\[
|S_r|
=
\sum_{d\mid r}\mu(d)\,q^{r/d},
\]
and consequently, the number of trace-zero elements having exactly
$r$ distinct conjugates is
\[
\frac{1}{q}
\sum_{d\mid r}
\mu(d)\,q^{r/d}.
\]
Therefore, these elements split
into disjoint Frobenius orbits, each of size $r$. Choosing one representative
from each orbit yields a subset $\mathcal{T}_{q,r}\subseteq\mathcal{B}_{q,r}$
consisting of trace-zero elements, and
\[
|\mathcal{T}_{q,r}|=
\frac{1}{rq}
\sum_{d\mid r}
\mu(d)\,q^{r/d}.
\]
This completes the proof.\qed
\end{proof}
We now obtain a family of additive MDS codes of length $n=\frac{1}{rq}\sum\limits_{d\mid r}\mu(d)\,q^{r/d}$ over $\mathbb{F}_{q^r}$ in the case $h=0$, where $r$ is a positive integer such that
$\Char(\mathbb{F}_q)\nmid r$. 
\begin{corollary}\label{corr: Existence for h=0}
Let $r$ be such that $\Char(\mathbb{F}_q)\nmid r$ and suppose that
$k\equiv0\pmod r$. Then the code $\mathcal{C}_{|\mathcal{T}_{q,r}|,k}
(\mathcal{T}_{q,r},2,0,\eta)$ is an additive MDS code over $\mathbb{F}_{q^r}$ for every
$\eta\in\mathbb{F}_q^*$.
\end{corollary}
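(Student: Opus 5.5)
We apply Theorem~\ref{thm: h=0}(a) with $\bm{\alpha}=\mathcal{T}_{q,r}$, so $n=|\mathcal{T}_{q,r}|$. First we check that this is a legitimate choice of evaluation vector. By the preceding lemma, $\mathcal{T}_{q,r}\subseteq\mathcal{B}_{q,r}$, and its entries are distinct representatives of distinct Frobenius orbits. Hence $\bm{\alpha}\subseteq\mathcal{B}_{q,r}$ as required throughout the paper, and $\bm{\alpha}\notin\mathbb{F}_q^n$, since elements with $r\ge 2$ distinct conjugates do not lie in $\mathbb{F}_q$. We also need $n\ge 1$ and $1\le k\le nr$ for the code to be defined; we take these as implicit in the statement.

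By Theorem~\ref{thm: h=0}(a), the code is additive MDS if and only if, for every $I\subseteq[n]$ with $|I|=k/r$,
\[
\eta^{-1}\neq(-1)^k\prod_{i\in I}\Nrqr(\alpha_i)\sum_{i\in I}\Trqr(\alpha_i).
\]
Every $\alpha_i\in\mathcal{T}_{q,r}$ has $\Trqr(\alpha_i)=0$, so for every such $I$ the right-hand side is $(-1)^k\prod_{i\in I}\Nrqr(\alpha_i)\cdot 0=0$. Since $\eta\in\mathbb{F}_q^*$, we have $\eta^{-1}\neq0$, so the condition holds for every $I$ and every $\eta$. Therefore the code is additive MDS for every $\eta\in\mathbb{F}_q^*$.

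The only real dependency is Theorem~\ref{thm: h=0}(a), which the paper states without proof at this point. If a justification is needed, it follows from Theorem~\ref{thm: MDS condns for k equiv 0 or -1}(a) with $h=0$. With the convention $g_{-1}=0$ and $a=-\sum_{i\in I}\Trqr(\alpha_i)$, we get
\[
\xi_I=-a\,g_0=g_0\sum_{i\in I}\Trqr(\alpha_i).
\]
Here $g_0=(-1)^k\prod_{i\in I}\prod_{j}\alpha_i^{q^j}=(-1)^k\prod_{i\in I}\Nrqr(\alpha_i)$, which gives the stated expression. Alternatively, we can bypass Theorem~\ref{thm: h=0} entirely and apply Theorem~\ref{thm: MDS condns for k equiv 0 or -1}(a) directly: $\xi_I=g_{-1}-a g_0=0$ because $a=0$. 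There is no substantive obstacle.
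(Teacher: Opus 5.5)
Your proof is correct and is essentially the paper's argument: you substitute $\Trqr(\alpha_i)=0$ into the criterion of Theorem~\ref{thm: h=0}(a), so the right-hand side is $0$, which can never equal $\eta^{-1}\neq 0$. Your side derivation of Theorem~\ref{thm: h=0}(a) from Theorem~\ref{thm: MDS condns for k equiv 0 or -1}(a) at $h=0$ is also correct: the size-$(k+1)$ sum is empty, and the size-$k$ sum equals $\prod_{i\in I}\Nrqr(\alpha_i)$. This fills in a step the paper leaves implicit.
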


\begin{proof}
Set $\bm{\alpha}:=\mathcal{T}_{q,r}$ and
$n:=|\mathcal{T}_{q,r}|$. Since
$\Trqr(\alpha_i)=0$ for every $\alpha_i\in\mathcal{T}_{q,r}$, it follows that
\[
(-1)^k
\prod_{i\in I}\Nrqr(\alpha_i)
\sum_{i\in I}\Trqr(\alpha_i)
=0
\]
for every subset $I\subseteq[n]$ with
$|I|=\frac{k}{r}$.
Hence,
\[
\eta^{-1}
\neq
(-1)^k
\prod_{i\in I}\Nrqr(\alpha_i)
\sum_{i\in I}\Trqr(\alpha_i)
\]
for every $\eta\in\mathbb{F}_q^*$.
The conclusion now follows immediately from
Theorem~\ref{thm: h=0}(a).\qed
\end{proof}
\begin{example}
    Let $q=3, r=5, h=0, k=5,$ and let $\eta=2\in \mathbb{F}_3$. Consider the finite field $\mathbb{F}_{243}:=\mathbb{F}_3[\gamma],$ where $\gamma^5+2\gamma=2.$ Then 
    $$\mathcal{T}_{3, 5}:= \{\gamma^{75}, \gamma^{122}, \gamma^{71}, \gamma^{174}, \gamma^{47}, \gamma^{33}, \gamma^{210}, \gamma^{3}, \gamma^{175}, \gamma^{83}, \gamma^{20}, \gamma^2,\gamma^{179}, \gamma^{68}, \gamma^{192}, \gamma^{44}\}.$$
    Using \textsc{Magma} \cite{Magma}, one verifies that the code $\mathcal{C}_{16,5}(\mathcal{T}_{3,5}, 2, 0, \eta)$ is an $(16, 3^5, 16)$ additive MDS code over $\mathbb{F}_{243}$, in accordance with Corollary~\ref{corr: Existence for h=0}.
\end{example}
\begin{example}
    Let $q=5, r=4, h=0, k=4,$ and let $\eta=3\in \mathbb{F}_5$. Consider the finite field $\mathbb{F}_{625}:=\mathbb{F}_5[\gamma],$ where $\gamma^4+4\gamma^2+4\gamma=3.$ Then
    \begin{equation*}
        \begin{split}
            \mathcal{T}_{5, 4}:=\{\gamma^{281}, \gamma^{164}, \gamma^{466}, \gamma^{579}, \gamma^{512}, \gamma^{44}, \gamma^{221}, \gamma^{283}, \gamma^{403}, \gamma^{591}, \gamma^{262},
            \gamma^{376}, \gamma^{106}, \gamma^{587}, \gamma^{464},\\
            \gamma^{523}, \gamma^{147}, \gamma^{325}, \gamma^{39}, \gamma^{559}, \gamma^{152}, \gamma^{585},
            \gamma^{313}, \gamma^{435}, \gamma^{604}, \gamma^{11}, \gamma^{614}, \gamma^{493}, \gamma^{125}, \gamma^{436}\}.
        \end{split}
    \end{equation*}
    Using \textsc{Magma} \cite{Magma}, one verifies that the code $\mathcal{C}_{30, 4}(\mathcal{T}_{5, 4}, 2, 0, \eta)$ is an $(30, 5^4, 30)$ additive MDS code over $\mathbb{F}_{625}$, in accordance with Corollary~\ref{corr: Existence for h=0}.
\end{example}
We use Theorem \ref{thm: h=0}(b) to construct additive MDS $\mathcal{C}_{n,k}(\bm{\alpha}, 2, 0, \eta)$ codes when $k\equiv -1 \pmod r.$ We begin with the following lemma.

\begin{lemma}\label{lem: Norm 1}
Let
$\mathcal{S}_{q,r}$ be the set of norm-one elements in $\mathcal{B}_{q,r}$. Then
\[
|\mathcal{S}_{q,r}|=
\frac{1}{r(q-1)}
\sum_{d\mid r}\mu(d)\gcd(d, q-1)(q^{r/d}-1).
\]
\end{lemma}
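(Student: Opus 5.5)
We count the norm-one elements of $\mathbb{F}_{q^r}$ that have exactly $r$ distinct conjugates, and then divide by $r$. Norm-one is a property of the whole Frobenius orbit, since $\Nrqr(\alpha^q)=\Nrqr(\alpha)$, and every orbit of such elements has size $r$. So if $N_r$ denotes the number of norm-one elements of $\mathbb{F}_{q^r}$ with exactly $r$ conjugates, then $|\mathcal{S}_{q,r}|=N_r/r$. It therefore suffices to show
\[
N_r=\frac{1}{q-1}\sum_{d\mid r}\mu(d)\gcd(d,q-1)(q^{r/d}-1).
\]

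For each divisor $e$ of $r$, consider the elements $\alpha\in\mathbb{F}_{q^e}$ with $\Nrqr(\alpha)=1$. For such $\alpha$ we have $\Nrqr(\alpha)=\alpha^{(q^r-1)/(q-1)}$. Writing $\frac{q^r-1}{q-1}=\frac{q^r-1}{q^e-1}\cdot\frac{q^e-1}{q-1}$, this equals $\left(\Nrqm(\alpha)\right)^{r/e}$ with $d$ replaced by $e$, because $\alpha^{(q^r-1)/(q^e-1)}=\alpha^{1+q^e+\cdots}=\alpha^{r/e}$ for $\alpha\in\mathbb{F}_{q^e}$. The norm $\mathbb{F}_{q^e}^*\to\mathbb{F}_q^*$ is surjective with fibres of size $\frac{q^e-1}{q-1}$. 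The equation $y^{r/e}=1$ has $\gcd(r/e,q-1)$ solutions in the cyclic group $\mathbb{F}_q^*$. Hence
\[
M(e):=|\{\alpha\in\mathbb{F}_{q^e}:\Nrqr(\alpha)=1\}|=\gcd(r/e,q-1)\,\frac{q^e-1}{q-1}.
\]

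Next, every $\alpha\in\mathbb{F}_{q^r}$ lies in a unique minimal subfield $\mathbb{F}_{q^{e'}}$ with $e'\mid r$, and it then has exactly $e'$ distinct conjugates. This is the same decomposition $\mathbb{F}_{q^e}=\bigsqcup_{e'\mid e}S_{e'}$ used in the proof of the trace-zero lemma. Intersecting with the norm-one condition gives
\[
M(e)=\sum_{e'\mid e}N'(e'),
\]
where $N'(e')$ is the number of elements of $S_{e'}$ whose norm down from $\mathbb{F}_{q^r}$ equals $1$. The key point is that $\Nrqr$ is fixed throughout, so the weight $\gcd(r/e,q-1)$ depends on $r$. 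The relation is still a sum over the divisor lattice of $r$, so Möbius inversion over divisors $e\mid r$ applies with $r$ fixed. It gives
\[
N_r=N'(r)=\sum_{e\mid r}\mu(r/e)M(e).
\]
Substituting $d=r/e$ yields exactly $\frac{1}{q-1}\sum_{d\mid r}\mu(d)\gcd(d,q-1)(q^{r/d}-1)$.

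The main obstacle is this inversion step. One must make sure that inverting $M(e)=\sum_{e'\mid e}N'(e')$ for all $e\mid r$ is legitimate when the functions depend on the fixed $r$. It is: this is Möbius inversion on the finite lattice of divisors of $r$. The only other care needed is the norm identity $\Nrqr(\alpha)=\Nrqm(\alpha)^{r/e}$ on the subfield, which is the computation given above. Dividing $N_r$ by $r$ then completes the proof.
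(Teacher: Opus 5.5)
Your proof is correct and follows essentially the same route as the paper: split $\mathbb{F}_{q^r}$ by minimal subfields, count the norm-one elements of each $\mathbb{F}_{q^e}$ using $\Nrqr(\alpha)=\N_{q^e/q}(\alpha)^{r/e}$ and the surjectivity of the subfield norm, apply M\"obius inversion over the divisors of $r$, and divide by the orbit size $r$. Your version is slightly more careful than the paper's: you state $M(e)=\sum_{e'\mid e}N'(e')$ for every $e\mid r$, whereas the paper writes it only for $e=r$, and the full family of relations is what the inversion actually needs.
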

\begin{proof}
Let $H=\{\alpha\in\mathbb{F}_{q^r}: \Nrqr(\alpha)=1\}$, and for every divisor $d$ of $r,$ define
\[
S_d:=
\left\{
\beta\in\mathbb{F}_{q^d}:
\beta\notin\mathbb{F}_{q^e}
\text{ for every proper divisor } e \text{ of }d
\right\}.
\]
Note that the number of elements of $\mathbb{F}_{q^r}$ having $r$ distinct conjugates and norm-one equals $|H\cap S_{r}|$.

Every element of $\mathbb F_{q^r}$ has a unique minimal subfield
$\mathbb F_{q^d}$, where $d\mid r$.
Therefore the sets $S_d$, $d\mid r$, are pairwise disjoint and
\[
\mathbb F_{q^r}
=
\bigsqcup_{d\mid r}S_d.
\]
Consequently,
\[
H \cap \mathbb{F}_{q^r}=\bigsqcup_{d\mid r}(H \cap S_d).
\]
Hence, 
\[
|H \cap \mathbb{F}_{q^r}|=\sum_{d\mid r} |H\cap S_d|.
\]
By the M\"obius inversion formula, we obtain
\begin{equation}\label{eqn: H cap Sr}
    |H \cap S_r|
=
\sum_{d\mid r}\mu(d)\,|H \cap \mathbb{F}_{q^{r/d}}|.
\end{equation}
Now, for every divisor $d$ of $r$, let $N_d =  |H\,\cap\,\mathbb{F}_{q^d}|.$ Since for every \(\alpha\in\mathbb{F}_{q^d}\), $\N_{q^r/q^d}(\alpha)
=\alpha^{r/d}$, and $\Nrqr
=
\Nrqm
\circ
\N_{q^r/q^d},$
it follows that
\[
\Nrqr(\alpha)
=
\Nrqm(\alpha)^{\,r/d}.
\]
Hence,
\[
H\cap\mathbb{F}_{q^d}
=
\left\{
\alpha\in\mathbb{F}_{q^d}:
\Nrqm(\alpha)^{\,r/d}=1
\right\}.
\]

The norm map
\[
\Nrqm:
\mathbb{F}_{q^d}^{*}\longrightarrow\mathbb{F}_{q}^{*}
\]
is surjective, and each fibre has cardinality $\frac{q^d-1}{q-1}.$ Since \(\mathbb{F}_{q}^{*}\) is a cyclic group of order \(q-1\), the equation $x^{\,r/d}=1$ has exactly $\gcd\!\left(\frac{r}{d},q-1\right)$ solutions in \(\mathbb{F}_{q}^{*}\). Therefore,
\begin{equation}\label{eqn: nd}
    N_d
=
\frac{q^d-1}{q-1}
\gcd\!\left(\frac{r}{d},q-1\right).
\end{equation}
Using \eqref{eqn: nd} in \eqref{eqn: H cap Sr}, we get
\[
|H \cap S_r|
=
\sum_{d\mid r}\mu(d)\,N_{r/d} = \frac{1}{q-1} \sum_{d\mid r}\mu(d)\,\gcd(d, q-1) (q^{r/d}-1).
\]
Since every orbit has cardinality $r$, we obtain
\[
|\mathcal{S}_{q,r}|
=
\frac{|H\cap S_{r}|}{r}
=
\frac{1}{r(q-1)}
\sum_{d\mid r}\mu(d)\gcd (d,q-1)(q^{r/d}-1).
\]
This completes the proof. \qed
\end{proof}
We now obtain a family of additive MDS codes of length $n=\frac{1}{r(q-1)}
\sum\limits_{d\mid r}\mu(d)\gcd (d,q-1)\left(q^{r/d}-1\right)$ over $\mathbb{F}_{q^r}$ in the case $h=0$, when $q>3.$
\begin{corollary}\label{corr: Existence h=0 family 2}
Let $q>3$ and suppose that
$k\equiv -1\pmod r$. Then the code $\mathcal{C}_{|\mathcal{S}_{q,r}|,k}
(\mathcal{S}_{q,r},2,0,\eta)$ is an additive MDS code over $\mathbb{F}_{q^r}$ for every
$\eta\in\mathbb{F}_q^*\setminus\{\pm1\}$.
\end{corollary}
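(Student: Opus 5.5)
Set $\bm{\alpha}:=\mathcal{S}_{q,r}$ and $n:=|\mathcal{S}_{q,r}|$. By Lemma~\ref{lem: Norm 1} every entry $\alpha_i$ lies in $\mathcal{B}_{q,r}$ and satisfies $\Nrqr(\alpha_i)=1$, so the standing assumption $\bm{\alpha}\subseteq\mathcal{B}_{q,r}$ holds. Since $k\equiv -1\pmod r$, Theorem~\ref{thm: AMDS or MDS} shows that $\mathcal{C}_{n,k}(\bm{\alpha},2,0,\eta)$ is either additive MDS or additive almost MDS. It therefore suffices to show that it is not almost MDS. For this I will use Theorem~\ref{thm: h=0}(b).

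By Theorem~\ref{thm: h=0}(b), the code is almost MDS if and only if there is a subset $J\subseteq[n]$ with $|J|=\frac{k+1}{r}$ such that
\[
\sum_{j\in J}\Trqr(\alpha_j)=0
\quad\text{and}\quad
\eta^{-1}=(-1)^{k+1}\prod_{j\in J}\Nrqr(\alpha_j).
\]
For our choice of $\bm{\alpha}$, every norm equals $1$, so the second condition reduces to $\eta^{-1}=(-1)^{k+1}\in\{\pm1\}$. This holds exactly when $\eta\in\{\pm1\}$, and it does not depend on $J$ or on any trace condition.

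Now take $\eta\in\mathbb{F}_q^*\setminus\{\pm1\}$. Then $\eta^{-1}\notin\{\pm1\}$, so the norm condition fails for every $J$. Hence the code is not almost MDS, and by the dichotomy above it is additive MDS.

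The hypothesis $q>3$ is used only to ensure that $\mathbb{F}_q^*\setminus\{\pm1\}$ is nonempty: $|\mathbb{F}_q^*|=q-1\ge 3$ exceeds $|\{\pm1\}|\le 2$. The proof has no real obstacle. The one point to check is that the dichotomy ``MDS or almost MDS'' from Theorem~\ref{thm: AMDS or MDS} covers the case $k\equiv -1\pmod r$, so that excluding almost MDS gives MDS. Its case $b=r-1$ does exactly this.
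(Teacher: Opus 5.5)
Your proposal is correct and follows the paper's proof essentially step for step. Every entry of $\mathcal{S}_{q,r}$ has norm one, so $(-1)^{k+1}\prod_{j\in J}\Nrqr(\alpha_j)=\pm1$; hence the norm condition of Theorem~\ref{thm: h=0}(b) fails for $\eta\notin\{\pm1\}$, and the MDS/almost-MDS dichotomy of Theorem~\ref{thm: AMDS or MDS} forces the code to be additive MDS. One small imprecision: the condition $\eta^{-1}=(-1)^{k+1}$ holds exactly when $\eta=(-1)^{k+1}$, a single element of $\{\pm1\}$, not for every $\eta\in\{\pm1\}$, but your argument only uses the direction that is true.
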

\begin{proof}
Set $\bm{\alpha}:=\mathcal{S}_{q,r}$ and
$n:=|\mathcal{S}_{q,r}|$. Since
$\Nrqr(\alpha_i)=1$ for every $\alpha_i\in\mathcal{S}_{q,r}$, it follows that
\[
(-1)^{k+1} \prod_{j\in J}\Nrqr(\alpha_j)=\pm1,
\]
for every subset $J\subseteq[n]$ with
$|J|=\frac{k+1}{r}$.
Since $\eta\in\mathbb{F}_q^{*}\setminus\{\pm1\},$
\[
\eta^{-1}
\neq
(-1)^{k+1} \prod_{j\in J}\Nrqr(\alpha_j),
\]
for every subset $J\subseteq[n]$ with
$|J|=\frac{k+1}{r}$.
Consequently, the second condition in
Theorem~\ref{thm: h=0}(b) (that is, $\eta^{-1}=
(-1)^{k+1} \prod_{j\in J}\Nrqr(\alpha_j)$, for some subset $J\subseteq[n]$ with
$|J|=\frac{k+1}{r}$) fails. Thus,
$\mathcal{C}_{n,k}(\bm{\alpha},2,0,\eta)$ is not an additive almost MDS code.
Since every code in this family is either an additive MDS or an additive almost MDS by Theorem~\ref{thm: AMDS or MDS}, the code is necessarily an additive MDS code. \qed
\end{proof}
\begin{example}
    Let $q=4, r=4, h=0, k=3,$ and let $\eta=\beta\in \mathbb{F}_4=\mathbb{F}_2[\beta]$, where $\beta^2+\beta=1$. Consider the finite field $\mathbb{F}_{256}:=\mathbb{F}_4[\gamma],$ where $\gamma^4+\gamma^3+\beta(\gamma^2+\gamma+1)=0.$ Then
    \begin{equation*}
        \begin{split}
            \mathcal{S}_{4, 4}:=\{\gamma^{129}, \gamma^{3}, \gamma^{132}, \gamma^{135}, \gamma^{9}, \gamma^{138}, \gamma^{141}, \gamma^{15}, \gamma^{147}, \gamma^{21}, \gamma^{150}, \gamma^{27},\\
            \gamma^{156}, \gamma^{159}, \gamma^{171}, \gamma^{45},\gamma^{183}, \gamma^{189}, \gamma^{63}, \gamma^{213}\}.
        \end{split}
    \end{equation*}
    Using \textsc{Magma}, one verifies that the code $\mathcal{C}_{20,3}(\mathcal{S}_{4, 4}, 2, 0, \eta)$ is an $(20, 4^3, 20)$ additive MDS code over $\mathbb{F}_{256}$, in accordance with Corollary~\ref{corr: Existence h=0 family 2}.
\end{example}
\begin{theorem}
    With the notations as above, if $3<k<\frac{nr}{2},$ then the code $\mathcal{C}_{n,k}(\bm{\alpha}, 2, h, \eta)$ is not monomially equivalent to an additive RS code.
\end{theorem}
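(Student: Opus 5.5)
The plan is to separate the two codes by the dimension of their Schur squares, using Lemma \ref{lem: equiv implies dim of schur square equal}. Since $k<\frac{nr}{2}$, we have $2k-1<nr$, so Lemma \ref{lem: dim of schur of ars} gives $\dim_{\mathbb{F}_q}(\ars_{n,k}(\bm{\alpha})^{\star2})=2k-1$. It therefore suffices to show
\[
\dim_{\mathbb{F}_q}\bigl(\mathcal{C}_{n,k}(\bm{\alpha},2,h,\eta)^{\star2}\bigr)\ge 2k .
\]

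\textbf{Step 1: reduce to polynomials.} The Schur square is the $\mathbb{F}_q$-span of the evaluation vectors of products $fg$ with $f,g\in\mathcal{V}_{k,2,h,\eta}$. Consider the evaluation map on $\mathbb{F}_q[x]_{<nr}$. A nonzero polynomial over $\mathbb{F}_q$ that vanishes at $\alpha_i\in\mathcal{B}_{q,r}$ also vanishes at all $r$ conjugates of $\alpha_i$. Hence a nonzero such polynomial can vanish at all $n$ points only if it has at least $nr$ roots, so the map is injective on $\mathbb{F}_q[x]_{<nr}$. Consequently, if we find $2k$ linearly independent products $fg$, all of degree at most $2k\le nr-1$, their evaluations are independent and the bound follows. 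The twisted basis is $\{x^i: 0\le i\le k-1,\ i\neq h\}\cup\{x^h+\eta x^{k+1}\}$.

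\textbf{Step 2: recover the monomials of degree at most $2k-2$.} Products $x^ix^j$ with $i,j\neq h$ give every $x^m$, $0\le m\le 2k-2$, that can be written as $i+j$ with $i,j\in\{0,\dots,k-1\}\setminus\{h\}$. Since $k\ge4$, each such $m$ has at least two representations. One checks that the only failures are $m=0$ when $h=0$ and $m=2k-2$ when $h=k-1$.

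\textbf{Step 3: add high-degree products.} All products used below have degree at most $2k<nr$.
\begin{enumerate}
\item[(i)] If $0<h<k-1$: we already have all $2k-1$ monomials $x^0,\dots,x^{2k-2}$. The product $(x^h+\eta x^{k+1})x^{k-1}=x^{h+k-1}+\eta x^{2k}$ has degree $2k$, so it is independent of them, giving dimension at least $2k$.
\item[(ii)] If $h=0$: we have $x^1,\dots,x^{2k-2}$, which is $2k-2$ elements. The products $(1+\eta x^{k+1})x^{k-2}$ and $(1+\eta x^{k+1})x^{k-1}$ have degrees $2k-1$ and $2k$. Together with the monomials they form $2k$ elements with distinct leading degrees, hence independent.
\item[(iii)] If $h=k-1$: we have $x^0,\dots,x^{2k-3}$. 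The product $(x^{k-1}+\eta x^{k+1})x^{k-3}=x^{2k-4}+\eta x^{2k-2}$ recovers $x^{2k-2}$. Then $(x^{k-1}+\eta x^{k+1})x^{k-2}$ contributes a new element of degree $2k-1$, again giving at least $2k$.
\end{enumerate}

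\textbf{Conclusion and main obstacle.} This yields $\dim_{\mathbb{F}_q}(\mathcal{C}^{\star2})\ge 2k>2k-1$, so by Lemma \ref{lem: equiv implies dim of schur square equal} the code is not monomially equivalent to any ARS code of these parameters. The main obstacle is the bookkeeping in Step 2: verifying that every intermediate $m$ (in particular $m=2h$, $m=1$, $m=2k-3$) admits a representation $i+j$ avoiding $h$. This is exactly where $k>3$ is needed, and I would check the small boundary values of $m$ explicitly. The other point to keep in view is that every product used has degree at most $2k$, so injectivity of evaluation applies.
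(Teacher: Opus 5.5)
Your overall strategy matches the paper's: compare Schur-square dimensions, use injectivity of evaluation on $\mathbb{F}_q[x]_{<nr}$, and exhibit $2k$ independent products of degree at most $2k$. However, the bookkeeping in Step 2 is wrong, and the error falls exactly on the boundary cases. It is not true that the only non-representable $m$ are $m=0$ (for $h=0$) and $m=2k-2$ (for $h=k-1$). The value $m=1$ has the single representation $0+1$, so $x$ is not a product of two untwisted basis monomials whenever $h\in\{0,1\}$. Likewise $m=2k-3$ has the single representation $(k-2)+(k-1)$, so $x^{2k-3}$ is not one whenever $h\in\{k-2,k-1\}$. Your remark that ``each such $m$ has at least two representations'' also fails for $m\in\{0,1,2k-3,2k-2\}$, whatever $k$ is. As a result, the products you actually exhibit span only a $(2k-1)$-dimensional space in four cases:
\begin{itemize}
\item[] for $h=0$, the untwisted products give only $x^2,\dots,x^{2k-2}$, and your two twisted products bring the total to $2k-1$;
\item[] for $h=k-1$, they give only $x^0,\dots,x^{2k-4}$, and your two twisted products again bring the total to $2k-1$;
\item[] for $h=1$ or $h=k-2$ in case (i), $x$ (resp.\ $x^{2k-3}$) is missing, so adding $x^{h+k-1}+\eta x^{2k}$ gives only $2k-1$; for $h=k-2$ that added element is precisely the one carrying the missing $x^{2k-3}$.
\end{itemize}
A lower bound of $2k-1$ equals $\dim_{\mathbb{F}_q}\ars_{n,k}(\bm{\alpha})^{\star2}$ and separates nothing.

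What is missing is one extra low-degree mixed product in each boundary case, and this is where $k\ge 4$ is genuinely used:
\begin{itemize}
\item[] for $h=0$, $(1+\eta x^{k+1})\,x=x+\eta x^{k+2}$, together with the untwisted $x^{k+2}$ (available since $k+2\le 2k-2$), yields $x$;
\item[] for $h=1$, $(x+\eta x^{k+1})\cdot 1$, together with $x^{k+1}=x^2\cdot x^{k-1}$, yields $x$;
\item[] for $h=k-1$, $(x^{k-1}+\eta x^{k+1})\,x^{k-4}=x^{2k-5}+\eta x^{2k-3}$ yields $x^{2k-3}$;
\item[] for $h=k-2$, $(x^{k-2}+\eta x^{k+1})\,x^{k-4}=x^{2k-6}+\eta x^{2k-3}$ yields $x^{2k-3}$.
\end{itemize}
The first three correspond to the paper's vectors $\bm{g}_2\star\bm{g}_1\in\mathcal{D}_0$, $\bm{g}_1\star\bm{g}_{h+1}\in\mathcal{D}_h$ and $\bm{g}_k\star\bm{g}_{k-3}\in\mathcal{D}_{k-1}$. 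With these additions every product used still has degree at most $2k<nr$. The span then contains $x^0,\dots,x^{2k-2}$ (resp.\ $x^1,\dots,x^{2k-2}$ for $h=0$, and $x^0,\dots,x^{2k-1}$ for $h=k-1$), and your Step 3 goes through. In case (ii), first reduce $x+\eta x^{k+2}$ to $x$ so that the leading degrees are distinct. Incidentally, your repaired $h=k-2$ case stays within degree $2k$. In the paper's $\mathcal{D}_h$, by contrast, the vector $\bm{g}_{h+1}\star\bm{g}_{k-1}$ becomes the square of the twisted row, of degree $2k+2$, when $h=k-2$.
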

\begin{proof}
    We prove the claim by showing that $\dim_{\mathbb{F}_q}\! \mathcal{C}_{n,k}(\bm{\alpha}, 2, h, \eta)^{\star 2}\ge 2k.$ The desired conclusion then follows from Lemma \ref{lem: equiv implies dim of schur square equal} and Lemma \ref{lem: dim of schur of ars}.
    
    For each $i\ge 0,$ let $\bm{\alpha}^i:=(\alpha_1^i, \alpha_2^i, \dots, \alpha_n^i)$. A generator matrix of $\mathcal{C}_{n,k}(\bm{\alpha}, 2, h, \eta)$ is
    $$
        G=\begin{pmatrix}
            1 & \cdots & 1 \\
            \alpha_{1} & \cdots & \alpha_{n} \\
            \vdots & \cdots & \vdots \\
            \alpha_{1}^{h-1} & \cdots & \alpha_{n}^{h-1} \\
            \alpha_{1}^{h}+\eta \alpha_{1}^{k+1} & \cdots & \alpha_{n}^{h}+\eta \alpha_{n}^{k+1} \\
            \alpha_{1}^{h+1} & \cdots & \alpha_{n}^{h+1} \\
            \vdots & \cdots & \vdots \\
            \alpha_{1}^{k-1} & \cdots & \alpha_{n}^{k-1}
        \end{pmatrix}
        \quad = \quad \left(
\begin{tabular}{c}
     $\bm{g}_{1}$ \\
     $\bm{g}_{2}$ \\
     $\vdots$ \\
      $\bm{g}_{h}$ \\
       $\bm{g}_{h+1}$ \\
        $\bm{g}_{h+2}$ \\
        $\vdots$ \\
     $\bm{g}_{k}$ 
\end{tabular}
\right),
    $$
    where $\bm{g}_i=\bm{\alpha}^{i-1},$ $1\le i\le k,$ $i\ne h+1,$ and $\bm{g}_{h+1}:=\bm{\alpha}^h+\eta \bm{\alpha}^{k+1}.$
    
    For $h\notin\{0, k-1\},$ we claim that 
    \begin{equation*}
        \mathcal{D}_{h}:=\{\bm{g}_1\star \bm{g}_1,\ \bm{g}_1\star \bm{g}_2, \dots, \bm{g}_1\star \bm{g}_{k}\} \cup \{ \bm{g}_2\star \bm{g}_{k}, \dots, \bm{g}_{k}\star \bm{g}_{k}\}\cup \{ \bm{g}_{h+1}\star \bm{g}_{k-1}\} \subset \mathcal{C}_{n,k}(\bm{\alpha}, 2, h, \eta)^{\star 2}
    \end{equation*}
    is linearly independent over $\mathbb{F}_q$. Observe that
    \[
    \begin{aligned}
\mathcal{D}_h
={}&
\{\bm{1},\bm{\alpha},\ldots,\bm{\alpha}^{h-1},
\bm{\alpha}^{h}+\eta\bm{\alpha}^{k+1},
\bm{\alpha}^{h+1},\ldots,\bm{\alpha}^{k-1}\}
\\
&\cup
\{\bm{\alpha}^{k},\ldots,\bm{\alpha}^{h+k-2},
\bm{\alpha}^{h+k-1}+\eta\bm{\alpha}^{2k},
\bm{\alpha}^{h+k},\ldots,\bm{\alpha}^{2k-2}\}
\\
&\cup
\{\bm{\alpha}^{h+k-2}+\eta\bm{\alpha}^{2k-1}\}.
\end{aligned}
    \]
    Suppose that $\mathcal{D}_h$ is linearly dependent over $\mathbb{F}_q.$ Then there exist coefficients $a_0,a_1,\ldots,a_{2k-1}\in\mathbb{F}_q$, not all zero, such that
    \[
        \begin{aligned}
        &a_0\bm{1}
        +a_1\bm{\alpha}
        +\cdots
        +a_{h-1}\bm{\alpha}^{h-1}
        +a_h(\bm{\alpha}^{h}+\eta\bm{\alpha}^{k+1})
        +a_{h+1}\bm{\alpha}^{h+1}
        +\cdots
        +a_{h+k-2}\bm{\alpha}^{h+k-2}
        \\
        &\qquad
        +a_{h+k-1}(\bm{\alpha}^{h+k-1}+\eta\bm{\alpha}^{2k})
        +a_{h+k}\bm{\alpha}^{h+k}
        +\cdots
        +a_{2k-2}\bm{\alpha}^{2k-2}
        +a_{2k-1}(\bm{\alpha}^{h+k-2}+\eta\bm{\alpha}^{2k-1})
        =\bm{0}.
        \end{aligned}
\]
Equivalently, for every $1\le i\le n$, $f(\alpha_i)=0,$ where
\[
\begin{aligned}
f(x)
={}&
a_0+a_1x+\cdots+a_{h-1}x^{h-1}
+a_h(x^h+\eta x^{k+1})
+a_{h+1}x^{h+1}
+\cdots
+a_{h+k-2}x^{h+k-2}
\\
&
+a_{h+k-1}(x^{h+k-1}+\eta x^{2k})
+a_{h+k}x^{h+k}
+\cdots
+a_{2k-2}x^{2k-2}
+a_{2k-1}(x^{h+k-2}+\eta x^{2k-1}).
\end{aligned}
\]
Since each $\alpha_i\in\mathcal{B}$ and the coefficients of $f(x)$ lie in
$\mathbb{F}_q$, it follows that
\[
f(\alpha_i^{q^j})=f(\alpha_i)^{q^j}=0,
\qquad
1\le i\le n,\;0\le j\le r-1.
\]
Hence $f(x)$ has at least $nr$ distinct roots. On the other hand, $\deg(f)\le 2k<nr.$
Therefore, $f(x)\equiv0$, forcing all coefficients
$a_0,a_1,\ldots,a_{2k-1}$ to be zero, a contradiction. Thus,
$\mathcal{D}_h$ is linearly independent over $\mathbb{F}_q$.

Since $|\mathcal{D}_h|=2k$, we conclude that $
\dim_{\mathbb{F}_q}
\bigl(\mathcal{C}_{n,k}(\bm{\alpha},2,h,\eta)^{\star2}\bigr)
\ge 2k.$

Similarly, for $h=0$, one can show that
    $$
    \mathcal{D}_0:=\{\bm{g}_2\star \bm{g}_1, \ \bm{g}_2\star \bm{g}_2, \dots, \bm{g}_2\star \bm{g}_k\} \cup \{ \bm{g}_3\star \bm{g}_k, \dots, \bm{g}_k\star \bm{g}_k\}\cup \{ \bm{g}_1\star \bm{g}_{k-1},\ \bm{g}_1\star \bm{g}_k\} \subset \mathcal{C}_{n,k}(\bm{\alpha}, 2, 0, \eta)^{\star 2}
    $$
    is linearly independent over $\mathbb{F}_q$.

Finally, assume that $h=k-1$. One can again prove that
\[
    \begin{aligned}
        \mathcal{D}_{k-1}:={}&
        \{\bm{g}_1\star\bm{g}_1,\bm{g}_1\star\bm{g}_2,\ldots,\bm{g}_1\star\bm{g}_{k-1}\}
        \cup
        \{\bm{g}_2\star\bm{g}_{k-1},\ldots,\bm{g}_{k-1}\star\bm{g}_{k-1}\}
        \\
        &
        \cup
        \{\bm{g}_k\star\bm{g}_{k-3},
        \bm{g}_k\star\bm{g}_{k-2},
        \bm{g}_k\star\bm{g}_{k-1}\}
        \subset
        \mathcal{C}_{n,k}(\bm{\alpha},2,k-1,\eta)^{\star2}
    \end{aligned}
\]
    is linearly independent over $\mathbb{F}_q$. \qed
\end{proof}
\section{A Class of Additive MDS Double-Twisted RS Codes}\label{Section 4}
In this section, we consider the case $\ell=2$, $\bm{t}=(1,2)$, $\bm{h}=(0,0)$, and $\bm{\eta}=(\eta_1,\eta_2)\in(\mathbb{F}_q^*)^2$. To simplify the notation, we write
\[
\mathcal{C}_{n,k}(\bm{\alpha},\bm{t},\bm{h},\bm{\eta})
:=
\atrs_{n,k}(\bm{\alpha},\bm{t},\bm{h},\bm{\eta}),
\]
where $\bm{t}=(1,2)$ and $\bm{h}=(0,0)$ throughout this section. We construct additive MDS codes in the family $\mathcal{C}_{n,k}(\bm{\alpha},\bm{t},\bm{h},\bm{\eta})$ and show that they are not monomially equivalent to additive Reed--Solomon codes.
\begin{theorem}\label{thm:AMDS or MDS_ double twist}
     For $1\le k\le nr,$ the code $\mathcal{C}_{n,k}(\bm{\alpha}, \bm{t}, \bm{h}, \bm{\eta})$ over $\mathbb{F}_{q^r}$ is either an additive MDS code or an additive almost MDS code.
\end{theorem}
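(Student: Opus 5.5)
We follow the proof of Theorem~\ref{thm: AMDS or MDS}, replacing the single twisted generator by the double-twisted one. With $\bm{t}=(1,2)$ and $\bm{h}=(0,0)$, every polynomial in $\mathcal{V}_{k,\bm{t},\bm{h},\bm{\eta}}$ has the form
\[
a_0(1+\eta_1x^{k}+\eta_2x^{k+1})+a_1x+\cdots+a_{k-1}x^{k-1}.
\]
Such a polynomial has degree at most $k+1$. Hence
\[
\mathcal{V}_{k,\bm{t},\bm{h},\bm{\eta}}\subseteq \mathbb{F}_q[x]_{<k+2},
\]
and so $\mathcal{C}_{n,k}(\bm{\alpha},\bm{t},\bm{h},\bm{\eta})\subseteq \ars_{n,k+2}(\bm{\alpha})$.

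Since $\bm{\alpha}\subseteq\mathcal{B}_{q,r}$, the result of \cite{sharma2024mds} recalled in Section~\ref{Section 2} shows that $\ars_{n,k+2}(\bm{\alpha})$ is additive MDS. One caveat: this needs $k+2\le nr$. If $k+2>nr$, the target lower bound $n-\lceil (k+2)/r\rceil+1$ becomes $\le 0$, which is trivially true, so nothing is lost. In all cases the code distance therefore satisfies
\[
d\ge n-\left\lceil \tfrac{k+2}{r}\right\rceil+1 .
\]

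Next, write $k=ar+b$ with $0\le b\le r-1$ and split into cases exactly as before.
\begin{itemize}
\item If $1\le b\le r-2$, then $\lceil (k+2)/r\rceil=\lceil k/r\rceil$. The additive Singleton bound then forces $d=n-\lceil k/r\rceil+1$, so the code is additive MDS.
\item If $b=0$, then $\lceil (k+2)/r\rceil=\lceil k/r\rceil+1$ (using $r\ge 2$).
\item If $b=r-1$, then $\lceil (k+2)/r\rceil=a+2=\lceil k/r\rceil+1$.
\end{itemize}
In the last two cases we get $d\ge n-\lceil k/r\rceil$. Combined with the Singleton bound, this means the code is either additive MDS or additive almost MDS.

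It remains to check that $\dim_{\mathbb{F}_q}$ of the code is exactly $k$, so that the Singleton bound is applied with the right parameter. The evaluation map on $\mathcal{V}_{k,\bm{t},\bm{h},\bm{\eta}}$ is injective when $k+1<nr$. Indeed, a nonzero $f$ vanishing at all $\alpha_i$ also vanishes at all $nr$ conjugates $\alpha_i^{q^j}$, because $f$ has coefficients in $\mathbb{F}_q$; this is impossible since $\deg f\le k+1<nr$.

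The main obstacle is this degenerate edge range $k\ge nr-1$. There injectivity may fail, but the claimed bounds become trivial or can be handled directly. Following the paper's convention in Theorem~\ref{thm: AMDS or MDS}, I would simply note that the distance bound argument above is unaffected by this range.
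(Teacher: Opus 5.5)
Your proposal is correct and follows the paper's proof. Both embed the code in $\ars_{n,k+2}(\bm{\alpha})$ to get $d\ge n-\lceil (k+2)/r\rceil+1$, then split on $k \bmod r$ and compare with the additive Singleton bound. The edge range $k\ge nr-1$ that you leave open is excluded by the definition of $\mathcal{V}_{k,\bm{t},\bm{h},\bm{\eta}}$, which requires $t_2=2\le nr-k$, i.e.\ $k\le nr-2$; so injectivity of evaluation and the well-definedness of $\ars_{n,k+2}(\bm{\alpha})$ hold automatically.
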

\begin{proof}
 Since 
    $$\Span_{\mathbb{F}_q}\{1+\eta_{1}x^{k}+\eta_{2}x^{k+1}, x, \dots, x^{k-1}\}\subset \Span_{\mathbb{F}_q}\{1, x,\dots, x^{k+1}\},$$
    we have $\mathcal{C}_{n,k}(\bm{\alpha}, \bm{t}, \bm{h}, \bm{\eta})\subset \ars_{n,k+2}(\bm{\alpha}).$ Therefore, $d=d(\mathcal{C}_{n,k}(\bm{\alpha}, \bm{t}, \bm{h}, \bm{\eta}))\ge d(\ars_{n,k+2}(\bm{\alpha}))=n-\lceil \frac{k+2}{r}\rceil+1,$ that is,
    \begin{equation*}
        d\ge n-\left\lceil \frac{k+2}{r}\right\rceil+1.
    \end{equation*}
    The proof now follows verbatim from Theorem \ref{thm: AMDS or MDS}. \qed
\end{proof}
\begin{remark}
         The code $\mathcal{C}_{n,k}(\bm{\alpha}, \bm{t}, \bm{h}, \bm{\eta})$ is an additive MDS code over $\mathbb{F}_{q^r}$ for all $1\le k \le nr$ satisfying $k\not\equiv 0,-1 \pmod r.$ 
\end{remark}

The following theorem gives necessary and sufficient conditions for the code $\mathcal{C}_{n,k}(\bm{\alpha},\bm{t},\bm{h},\bm{\eta})$ to be additive MDS when $k\equiv0\pmod r$, and additive almost MDS when $k\equiv-1\pmod r$.
\begin{theorem}\label{thm: double twist}
    \begin{enumerate}
        \item[(a)] Suppose that $k\equiv 0\pmod r$. The code $\mathcal{C}_{n,k}(\bm{\alpha}, \bm{t}, \bm{h}, \bm{\eta})$ is an additive MDS code over $\mathbb{F}_{q^r}$ if and only if 
        $$
         (-1)^{k} \left(\eta_{2}{\sum\limits_{i\in I}\Trqr(\alpha_{i})} + \eta_{1}\right) 
\prod_{i\in I}\Nrqr(\alpha_{i}) \neq 1,
        $$
for every subset $I\subseteq [n]$ with $|I|=\frac{k}{r}.$
        \item[(b)]  Suppose that $k\equiv -1\pmod r$. The code $\mathcal{C}_{n,k}(\bm{\alpha}, \bm{t}, \bm{h}, \bm{\eta})$ is an additive almost MDS code over $\mathbb{F}_{q^r}$ if and only if  
        \[
\eta_{1}+\eta_{2}\sum_{i\in J}\Trqr(\alpha_i)=0
\quad\text{and}\quad
\eta_{2}^{-1}=(-1)^{k+1}\prod_{i\in J}\Nrqr(\alpha_i).
\]
for some subset $J\subseteq [n]$ with $|J|=\frac{k+1}{r}.$ 
    \end{enumerate}
\end{theorem}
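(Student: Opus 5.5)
The plan is to follow the proof of Theorem \ref{thm: MDS condns for k equiv 0 or -1}, with the single twisted basis element replaced by $1+\eta_1x^k+\eta_2x^{k+1}$. A general element of $\mathcal{V}_{k,\bm t,\bm h,\bm\eta}$ is
\[
f(x)=f_0(1+\eta_1x^k+\eta_2x^{k+1})+f_1x+\cdots+f_{k-1}x^{k-1},\qquad f_i\in\mathbb{F}_q .
\]
By Theorem \ref{thm:AMDS or MDS_ double twist}, the code fails to be MDS (for $k\equiv0$), or is almost MDS (for $k\equiv-1$), exactly when some nonzero such $f$ vanishes at $\lceil k/r\rceil$ of the $\alpha_i$. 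The cleanest route is to characterise these $f$ through the polynomial $g_I(x)=\prod_{i\in I}\prod_{j=0}^{r-1}(x-\alpha_i^{q^j})$.

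For part (a), suppose $f$ vanishes on $\{\alpha_i:i\in I\}$ with $|I|=k/r$.
\begin{itemize}
\item Since $f$ has coefficients in $\mathbb{F}_q$ and $\alpha_i\in\mathcal{B}_{q,r}$, $f$ vanishes at all $k$ conjugates, so $g_I\mid f$.
\item We need $f_0\neq0$; otherwise $\deg f\le k-1$ forces $f=0$.
\item Since $\deg f\le k+1$, we get $f=f_0\,(\eta_2x+c)\,g_I(x)$ for some $c\in\mathbb{F}_q$. The factor is $\eta_2x+c$ because the leading coefficient of $f$ is $f_0\eta_2$ and $g_I$ is monic of degree $k$.
\end{itemize}
Write $g_I=g_0+\cdots+g_{k-1}x^{k-1}+x^k$, with $g_{k-1}=-\sum_{i\in I}\Trqr(\alpha_i)$ and $g_0=(-1)^k\prod_{i\in I}\Nrqr(\alpha_i)$. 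Compare coefficients:
\begin{itemize}
\item at $x^k$: $\eta_1=\eta_2g_{k-1}+c$, so $c=\eta_1+\eta_2\sum_{i\in I}\Trqr(\alpha_i)$;
\item at $x^0$: $1=c\,g_0$.
\end{itemize}
Together these give exactly $(-1)^k\bigl(\eta_2\sum_{i\in I}\Trqr(\alpha_i)+\eta_1\bigr)\prod_{i\in I}\Nrqr(\alpha_i)=1$. The middle coefficients $x,\dots,x^{k-1}$ are unconstrained, which is the key simplification from $h=0$.

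For the converse of (a), assume the displayed expression equals $1$ for some $I$. Put $c=\eta_1+\eta_2\sum_{i\in I}\Trqr(\alpha_i)$ and $f=(\eta_2x+c)g_I$. Then:
\begin{itemize}
\item the $x^{k+1}$ coefficient is $\eta_2$;
\item the $x^k$ coefficient is $\eta_2g_{k-1}+c=\eta_1$;
\item the constant term is $cg_0=1$.
\end{itemize}
Hence $f\in\mathcal{V}_{k,\bm t,\bm h,\bm\eta}$ with $f_0=1$. Its codeword vanishes on $I$, so $d\le n-k/r$ and the code is not MDS.

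For part (b) the argument is the same with $|J|=(k+1)/r$, so $\deg g_J=k+1$ and $f=f_0\eta_2\,g_J$ exactly. Comparing coefficients:
\begin{itemize}
\item at $x^k$: $\eta_1=\eta_2g_k=-\eta_2\sum_{i\in J}\Trqr(\alpha_i)$, which is the trace condition;
\item at $x^0$: $1=\eta_2g_0=\eta_2(-1)^{k+1}\prod_{i\in J}\Nrqr(\alpha_i)$, which is the norm condition.
\end{itemize}
Conversely, $f=\eta_2g_J$ lies in the twisted space with $f_0=1$ and vanishes on $J$, giving $d\le n-\lceil k/r\rceil$. Combined with Theorem \ref{thm:AMDS or MDS_ double twist}, the code is then almost MDS.

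The only delicate step I expect is confirming in (a) that the quotient $f/g_I$ is linear with coefficients in $\mathbb{F}_q$. Both $f$ and $g_I$ lie in $\mathbb{F}_q[x]$, since $g_I$ is a product of minimal polynomials, and $g_I$ is monic, so the quotient has $\mathbb{F}_q$-coefficients. Its degree is at most $1$, and its leading coefficient $f_0\eta_2$ is nonzero. Beyond that, the only care needed is tracking the signs $(-1)^k$ and $(-1)^{k+1}$ in $g_0$.
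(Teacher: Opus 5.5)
Your proposal is correct and is essentially the paper's own proof. You factor the vanishing twisted polynomial as $f_0(\eta_2x+c)g_I$ (resp.\ $f_0\eta_2 g_J$), compare the coefficients of $x^k$ and $x^0$, and take $(\eta_2x+c)g_I$ and $\eta_2 g_J$ for the converses; your $c$ is just the paper's $-\eta_2 a$, and in part (b) you correctly compare against $g_k$, where the paper's text has the harmless typo $g_{k-1}$.
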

\begin{proof}
    \begin{enumerate}
        \item[(a)] Suppose that $\mathcal{C}_{n,k}(\bm{\alpha}, \bm{t}, \bm{h}, \bm{\eta})$ is not additive MDS. Then by Theorem \ref{thm:AMDS or MDS_ double twist},
        \[
        d(\mathcal{C}_{n,k}(\bm{\alpha}, \bm{t}, \bm{h}, \bm{\eta})) = n- \frac{k}{r}.
        \]
        Thus, there exists a nonzero twisted polynomial
        \[
         f(x)=f_{0}(1+\eta_{1}x^{k}+\eta_{2}x^{k+1})+f_1x+\cdots+ f_{k-1}x^{k-1} \in \twist,
        \]
 and a subset $I\subseteq [n]$ with $|I|=\frac{k}{r}$ such that $f(\alpha_i)=0$, for  all $i\in I.$ Since $\alpha_i\in \mathcal{B},$ it follows that $f(\alpha_i^{q^j})=0,$ for every $i\in I$ and $0\le j \le r-1.$ Consequently, $f(x)$ vanishes at $k$ distinct points. Since $\deg f(x)\le k+1,$ we must have $f_0\ne 0$.
 Define $$g_I(x):=\prod_{i\in I}\prod_{j=0}^{r-1} (x-\alpha_i^{q^j})=g_0+g_1x+ g_2x^2+\cdots+ g_{k-1} x^{k-1}+x^k.$$ Then
 \begin{equation}\label{eqn: f(x)= blah blah _ double twist}
        f(x)=\eta_{2} f_0 (x-a)g_I(x),
    \end{equation}
for some $a\in\mathbb{F}_{q}$. Comparing the coefficient of $x^{k}$ in \eqref{eqn: f(x)= blah blah _ double twist}, we obtain
\[
\eta_{1} = \eta_{2}(g_{k-1}-a),
\]
or equivalently,
\[
a = -\left(\sum\limits_{i\in I}\Trqr(\alpha_{i})+\eta_{1}\eta_{2}^{-1}\right).
\]
Next, comparing the constant term in \eqref{eqn: f(x)= blah blah _ double twist}, we get 
\[
f_{0} = -\eta_{2}f_{0}ag_{0},
\]
or equivalently,
\[
1 = -\eta_{2}a g_{0}.
\] 
Substituting the values of $a$ and $g_{0}$ into the preceding equation, we obtain
\begin{eqnarray*}
    1 &=& \eta_{2}\left(\sum\limits_{i\in I}\Trqr(\alpha_{i})+\eta_{1}\eta_{2}^{-1}\right) (-1)^{k}\prod_{i\in I}\Nrqr(\alpha_{i})\\
    &=& (-1)^{k}\left(\eta_{2}\sum\limits_{i\in I}\Trqr(\alpha_{i})+\eta_{1}\right) \prod_{i\in I}\Nrqr(\alpha_{i}),
\end{eqnarray*}
which contradicts the hypothesis. Therefore, $\mathcal{C}_{n,k}(\bm{\alpha}, \bm{t}, \bm{h}, \bm{\eta})$ is an additive MDS code.

Conversely, suppose that
\begin{eqnarray*}
    1 &=& (-1)^{k}\left(\eta_{2}\sum\limits_{i\in I}\Trqr(\alpha_{i})+\eta_{1}\right) \prod_{i\in I}\Nrqr(\alpha_{i}) \\
    &=& \eta_{2}\left(\sum\limits_{i\in I}\Trqr(\alpha_{i})+\eta_{1}\eta_{2}^{-1}\right)g_{0}
\end{eqnarray*}
for some subset $I\subseteq [n]$ with $|I|=\frac{k}{r}$. Define 
\begin{eqnarray*}
     f(x)&:=&\eta_{2} \left(x+\sum_{i\in I}\Trqr(\alpha_i)+\eta_{1}\eta_{2}^{-1}\right)\prod_{i\in I}\prod_{j=0}^{r-1} (x-\alpha_i^{q^j})\\
      &=&\eta_{2} \left(x+\sum_{i\in I}\Trqr(\alpha_i)+\eta_{1}\eta_{2}^{-1}\right)g_I(x).
\end{eqnarray*}
Note that the coefficient of $x^{k+1}$ in $f(x)$ is $\eta_{2}$, while the coefficient of $x^{k}$ is $\eta_{1}$. Moreover, the constant coefficient is 
\[
\eta_{2}\left(\sum\limits_{i\in I}\Trqr(\alpha_i)+\eta_{1}\eta_{2}^{-1}\right)g_{0} = 1
\]
Hence, $f(x)\in \twist$. Since $f(x)$ has $|I|=\frac{k}{r}$ distinct roots $\{\alpha_i: i \in I\},$ it follows that 
\[
d(\mathcal{C}_{n,k}(\bm{\alpha}, \bm{t}, \bm{h}, \bm{\eta})) \leq n - \frac{k}{r} = n-\left\lceil\frac{k}{r}\right\rceil.
\]
and consequently the code $\mathcal{C}_{n,k}(\bm{\alpha}, \bm{t}, \bm{h}, \bm{\eta})$ is not additive MDS. 
\item[(b)] Suppose that $\mathcal{C}_{n,k}(\bm{\alpha}, \bm{t}, \bm{h}, \bm{\eta})$ is additive almost MDS. Then there exists a nonzero twisted polynomial
 $$             f(x)=f_0(1+\eta_{1}x^{k}+\eta_{2}x^{k+1})+f_1x+\cdots+ f_{k-1}x^{k-1} \in \twist,
$$
     and a subset $J\subseteq [n]$ with $|J|=\frac{k+1}{r}$ such that $f(\alpha_i)=0$, for  all $i\in J.$ Since $\alpha_i\in \mathcal{B},$ it follows that $f(\alpha_i^{q^j})=0,$ for every $i\in J$ and $0\le j \le r-1.$ Consequently, $f(x)$ vanishes at $k+1$ distinct points. Since $\deg f(x)\le k+1,$ we must have $f_0\ne 0.$ 

     Define $$g_J(x):=\prod_{i\in J}\prod_{j=0}^{r-1} (x-\alpha_i^{q^j})=g_0+g_1x+ g_2x^2+\cdots+ g_{k} x^{k}+x^{k+1}.$$ Then 
     \begin{equation}\label{eqn: f(x)_1_double twist}
    f(x)=\eta_{2} f_{0} g_J(x).
    \end{equation}

    Comparing the coefficient of $x^{k}$ in \eqref{eqn: f(x)_1_double twist}, we get
    \[
    \eta_{1}f_{0} = \eta_{2}f_{0}g_{k-1},
    \]
    or equivalently,
    \[
    \eta_{1}+\eta_{2}\sum\limits_{i\in J}\Trqr(\alpha_{i}) = 0.
    \]
    Next, comparing the constant term in \eqref{eqn: f(x)_1_double twist}, we get
    \[
    f_{0} = \eta_{2}f_{0}g_{0},
    \]
    or equivalently, 
    \[
    \eta_{2}^{-1} = (-1)^{k+1}\prod_{i\in J}\Nrqr(\alpha_{i}).
    \]
    Conversely, suppose that 
    \begin{equation} \label{eqn: eta_1 + eta_2}
  \eta_{1}+\eta_{2}\sum_{i\in J}\Trqr(\alpha_i)=0
\quad\text{and}\quad
\eta_{2}^{-1}=(-1)^{k+1}\prod_{i\in J}\Nrqr(\alpha_i),
    \end{equation}

for some subset $J\subseteq[n]$ with  $|J|=\frac{k+1}{r}$. Define
\[
f(x) := \eta_{2} \prod_{i\in J}\prod_{j=0}^{r-1} \left(x-\alpha_{i}^{q^j}\right).
\]
Note that the coefficient of $x^{k+1}$ in $f(x)$ is $\eta_{2}$. By \eqref{eqn: eta_1 + eta_2}, the coefficient of $x^{k}$ is $\eta_{1}$. Moreover, by \eqref{eqn: eta_1 + eta_2}, the constant coefficient is $\eta_{2}g_{0}=1.$
Hence, $f(x)\in \twist$. Since $f(x)$ has $|J|=\frac{k+1}{r}$ distinct roots $\{\alpha_j: j \in J\},$ it follows that 
\[
d(\mathcal{C}_{n,k}(\bm{\alpha}, \bm{t}, \bm{h}, \bm{\eta})) \leq n - \frac{k+1}{r} = n-\left\lceil\frac{k}{r}\right\rceil.
\]
Consequently, the code $\mathcal{C}_{n,k}(\bm{\alpha}, \bm{t}, \bm{h}, \bm{\eta})$ is additive almost MDS. \qed
    \end{enumerate}
\end{proof}
In the following propositions, we establish the existence of additive MDS codes for the case $k\equiv -1 \pmod r$ using Theorem \ref{thm: double twist}(b).
\begin{proposition}\label{prop: existence1}
    Let $r$ be such that $\Char(\mathbb{F}_q)\nmid r$ and suppose that
$k\equiv -1 \pmod r$. Then the code $\mathcal{C}_{|\mathcal{T}_{q,r}|,k}
(\mathcal{T}_{q,r}, \bm{t}, \bm{h}, \bm{\eta})$ is an additive MDS code over $\mathbb{F}_{q^r}$ for every
$\bm{\eta} = (\eta_{1},\eta_{2})\in(\mathbb{F}_q^*)^2$.
\end{proposition}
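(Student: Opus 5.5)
By Theorem~\ref{thm:AMDS or MDS_ double twist}, the code $\mathcal{C}_{n,k}(\bm{\alpha},\bm{t},\bm{h},\bm{\eta})$ is always either additive MDS or additive almost MDS. So it suffices to show it is not additive almost MDS. For that I will show the criterion of Theorem~\ref{thm: double twist}(b) fails for every admissible subset $J$, mirroring the proof of Corollary~\ref{corr: Existence for h=0}.

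Set $\bm{\alpha}:=\mathcal{T}_{q,r}$ and $n:=|\mathcal{T}_{q,r}|$. The lemma computing $|\mathcal{T}_{q,r}|$ requires $\Char(\mathbb{F}_q)\nmid r$, and under that hypothesis $\mathcal{T}_{q,r}\subseteq\mathcal{B}_{q,r}$. Hence the standing assumption $\bm{\alpha}\subseteq\mathcal{B}_{q,r}$ holds and Theorem~\ref{thm: double twist} applies. Every $\alpha_i\in\mathcal{T}_{q,r}$ has $\Trqr(\alpha_i)=0$, so for every subset $J\subseteq[n]$ with $|J|=\frac{k+1}{r}$ we get
\[
\eta_1+\eta_2\sum_{i\in J}\Trqr(\alpha_i)=\eta_1\neq 0,
\]
because $\eta_1\in\mathbb{F}_q^*$. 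Thus the first of the two conjunctive conditions in Theorem~\ref{thm: double twist}(b) fails for every such $J$, and we never need to look at the norm condition. It follows that $\mathcal{C}_{n,k}(\mathcal{T}_{q,r},\bm{t},\bm{h},\bm{\eta})$ is not additive almost MDS, and therefore it is additive MDS.

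Two small points need checking. First, the statement "not almost MDS implies MDS" uses the dichotomy of Theorem~\ref{thm:AMDS or MDS_ double twist}, which needs $1\le k\le nr$; I will take this as implicit in the statement. Second, if $\frac{k+1}{r}>n$ there is no subset $J$ of the required size, and the conclusion still holds vacuously. I expect no real obstacle; the only subtlety is that the argument uses $\eta_1\neq 0$ and does not use $\eta_2$ at all.
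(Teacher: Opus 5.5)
Your argument is correct and is essentially the paper's own proof: since every point of $\mathcal{T}_{q,r}$ has trace zero, $\eta_1+\eta_2\sum_{i\in J}\Trqr(\alpha_i)=\eta_1\neq 0$ for every admissible $J$, so the forward direction of Theorem~\ref{thm: double twist}(b) rules out the almost MDS case, and the dichotomy of Theorem~\ref{thm:AMDS or MDS_ double twist} then leaves only the MDS case. You also index the subsets correctly by $|J|=\frac{k+1}{r}$, whereas the paper's write-up has the slip $|I|=\frac{k}{r}$.
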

\begin{proof}
    Set $\bm{\alpha}:=\mathcal{T}_{q,r}$ and
$n:=|\mathcal{T}_{q,r}|$. Since
$\Trqr(\alpha_i)=0$ for every $\alpha_i\in\mathcal{T}_{q,r}$, it follows that
\[
 \eta_{2}
\sum_{i\in I}\Trqr(\alpha_i)
=0
\]
for every subset $I\subseteq[n]$ with
$|I|=\frac{k}{r}$.
Hence,
\[
\eta_{1} + \eta_{2}
\sum_{i\in I}\Trqr(\alpha_i)
\neq 0
\]
for every $\bm{\eta} = (\eta_{1}, \eta_{2})\in(\mathbb{F}_q^*)^{2}$.
The conclusion now follows immediately from
Theorem~\ref{thm: double twist}(b).
\flushright\qed
\end{proof}
\begin{example}
     Let $q=7, r=3, k=5,\bm{h}=(0,0),$ and let $\bm{\eta} = (3, 4)\in (\mathbb{F}_7^{*})^{2}$. Consider the finite field $\mathbb{F}_{343}:=\mathbb{F}_7[\gamma],$ where $\gamma^3+6\gamma^2 = 3.$ Then 
    $$\mathcal{T}_{7, 3}:= 
\{
\gamma^{113}, \gamma^{243}, \gamma^{336}, \gamma^{219}, \gamma^{170},
\gamma^{179}, \gamma^{38}, \gamma^{95}, \gamma^{56}, \gamma^{19},
\gamma^{304}, \gamma^{278}, \gamma^{51}, \gamma^{335},
\gamma^{222}, \gamma^{108}
\}.$$
    Using \textsc{Magma} \cite{Magma}, one verifies that the code $\mathcal{C}_{16,5}(\mathcal{T}_{7,3}, \bm{t}, \bm{h}, \bm{\eta})$ is an $(16, 7^5, 15)$ additive MDS code over $\mathbb{F}_{343}$, in accordance with Proposition~\ref{prop: existence1}.
\end{example}
\begin{proposition}\label{prop: existence2}
    Let $q>3$ and suppose that
$k\equiv -1\pmod r$. Then the code $\mathcal{C}_{|\mathcal{S}_{q,r}|,k}
(\mathcal{S}_{q,r},\bm{t},\bm{h},\bm{\eta})$ is an additive MDS code over $\mathbb{F}_{q^r}$ for every
$\bm{\eta}=(\eta_1, \eta_{2})\in\mathbb{F}_q^*\times (\mathbb{F}_q^*\setminus\{\pm1\})$.
\end{proposition}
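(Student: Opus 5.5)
The plan is to argue exactly as in Corollary~\ref{corr: Existence h=0 family 2}: show that the second condition of Theorem~\ref{thm: double twist}(b) can never hold, and then invoke Theorem~\ref{thm:AMDS or MDS_ double twist}. Set $\bm{\alpha}:=\mathcal{S}_{q,r}$ and $n:=|\mathcal{S}_{q,r}|$, whose size is given by Lemma~\ref{lem: Norm 1}. Every $\alpha_i\in\mathcal{S}_{q,r}$ lies in $\mathcal{B}_{q,r}$ and satisfies $\Nrqr(\alpha_i)=1$, so the standing assumption on the evaluation points is met.

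Fix any subset $J\subseteq[n]$ with $|J|=\frac{k+1}{r}$. Then
\[
(-1)^{k+1}\prod_{i\in J}\Nrqr(\alpha_i)=(-1)^{k+1}\in\{\pm1\}.
\]
Since $\eta_2\in\mathbb{F}_q^*\setminus\{\pm1\}$, we also have $\eta_2^{-1}\notin\{\pm1\}$, because the set $\{\pm1\}$ is closed under inversion. Hence
\[
\eta_2^{-1}\neq(-1)^{k+1}\prod_{i\in J}\Nrqr(\alpha_i)
\]
for every such $J$. This holds regardless of whether the trace condition $\eta_1+\eta_2\sum_{i\in J}\Trqr(\alpha_i)=0$ is satisfied. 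The hypothesis $q>3$ only guarantees that $\mathbb{F}_q^*\setminus\{\pm1\}$ is nonempty, so that the family of admissible $\bm{\eta}$ is not vacuous.

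By Theorem~\ref{thm: double twist}(b), the code $\mathcal{C}_{n,k}(\bm{\alpha},\bm{t},\bm{h},\bm{\eta})$ is therefore not additive almost MDS. Theorem~\ref{thm:AMDS or MDS_ double twist} says every code in this family is either additive MDS or additive almost MDS, so it must be additive MDS.

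There is essentially no obstacle. The only point needing care is the inversion step $\eta_2\notin\{\pm1\}\Rightarrow\eta_2^{-1}\notin\{\pm1\}$. In characteristic $2$ the set $\{\pm1\}$ collapses to $\{1\}$, and the same argument goes through unchanged.
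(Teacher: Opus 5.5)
Your proposal is correct and follows the paper's own argument. Norm-one evaluation points force $(-1)^{k+1}\prod_{i\in J}\Nrqr(\alpha_i)\in\{\pm1\}$, so the second condition of Theorem~\ref{thm: double twist}(b) fails whenever $\eta_2\notin\{\pm1\}$; Theorem~\ref{thm:AMDS or MDS_ double twist} then forces the code to be additive MDS. Your explicit remark that $\{\pm1\}$ is closed under inversion only spells out a step the paper leaves implicit.
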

\begin{proof}
    Set $\bm{\alpha}:=\mathcal{S}_{q,r}$ and
$n:=|\mathcal{S}_{q,r}|$. Since
$\Nrqr(\alpha_i)=1$ for every $\alpha_i\in\mathcal{S}_{q,r}$, it follows that
\[
(-1)^{k+1} \prod_{j\in J}\Nrqr(\alpha_j)=\pm1,
\]
for every subset $J\subseteq[n]$ with
$|J|=\frac{k+1}{r}$.
If $\eta_{2}\in\mathbb{F}_q^{*}\setminus\{\pm1\},$ then the second condition in
Theorem~\ref{thm: double twist}(b) (that is, $\eta_{2}^{-1}=
(-1)^{k+1} \prod_{j\in J}\Nrqr(\alpha_j)$, for some subset $J\subseteq[n]$ with
$|J|=\frac{k+1}{r}$) fails. Thus,
$\mathcal{C}_{n,k}(\bm{\alpha},\bm{t},\bm{h},\bm{\eta})$ is not an additive almost MDS code; hence, it is an additive MDS code. \qed
\end{proof}
\begin{example}
    Let $q=7, r=3, k=5,\bm{h}=(0,0),$ and let $\bm{\eta} = (3, 4)\in (\mathbb{F}_7^{*})^{2}$. Consider the finite field $\mathbb{F}_{343}:=\mathbb{F}_7[\gamma],$ where $\gamma^3+6\gamma^2 = 3.$ Then 
    $$\mathcal{S}_{7, 3}:= 
\{
\gamma^{6}, \gamma^{12}, \gamma^{18}, \gamma^{24}, \gamma^{30}, \gamma^{36},
\gamma^{48}, \gamma^{60}, \gamma^{66}, \gamma^{72}, \gamma^{90}, \gamma^{96},
\gamma^{132}, \gamma^{138}, \gamma^{144}, \gamma^{174}, \gamma^{180},
\gamma^{186}
\}.
$$
    Using \textsc{Magma} \cite{Magma}, one verifies that the code $\mathcal{C}_{18,5}(\mathcal{S}_{7,3}, \bm{t}, \bm{h}, \bm{\eta})$ is an $(18, 7^5, 17)$ additive MDS code over $\mathbb{F}_{343}$, in accordance with Proposition~\ref{prop: existence2}.
\end{example}
The following remark provides an existence result for additive MDS codes when $k\equiv 0\pmod r$ by Theorem~\ref{thm: double twist}(a).
\begin{remark}
Assume that $k\equiv 0 \pmod r$.
Let
    \[
\mathcal{S}_{n,k,r}(\eta_{2})
=
\left\{
(-1)^{k}
\left(\prod_{i\in I}\Nrqr(\alpha_{i})\right)^{-1}
-
\eta_{2}\sum_{i\in I}\Trqr(\alpha_{i})
\,:\,
I\subseteq[n],
\ |I|=\frac{k}{r}
\right\}.
\]
Then $|\mathcal{S}_{n,k,r}(\eta_{2})|\le\binom{n}{k/r}$. Hence, if $\binom{n}{k/r}<q-1,$ then $\mathbb{F}_q^{*}\setminus \mathcal{S}_{n,k,r}(\eta_{2})\ne \emptyset.$ Therefore, there exists $\eta_{1}\in \mathbb{F}_q^{*}\setminus \mathcal{S}_{n,k,r}(\eta_{2})$, so that, by Theorem~\ref{thm: double twist}(a), the code $\mathcal{C}_{n,k}(\bm{\alpha}, \bm{t}, \bm{h}, \bm{\eta})$ is an additive MDS code. 
\end{remark}

\begin{theorem}
    With the notations as above, if $3<k<\frac{nr}{2},$ then the code $\mathcal{C}_{n,k}(\bm{\alpha}, \bm{t}, \bm{h}, \bm{\eta})$ is not monomially equivalent to an additive RS code.
\end{theorem}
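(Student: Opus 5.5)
The plan is to argue exactly as in the single-twist case: we show that
\[
\dim_{\mathbb{F}_q}\bigl(\mathcal{C}_{n,k}(\bm{\alpha},\bm{t},\bm{h},\bm{\eta})^{\star2}\bigr)\ge 2k .
\]
Since $k<\frac{nr}{2}$, Lemma~\ref{lem: dim of schur of ars} gives $\dim_{\mathbb{F}_q}\bigl(\ars_{n,k}(\bm{\alpha})^{\star2}\bigr)=\min\{nr,2k-1\}=2k-1$. Lemma~\ref{lem: equiv implies dim of schur square equal} then rules out monomial equivalence with any additive RS code.

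We take the generator matrix with rows $\bm{g}_1=\bm{1}+\eta_1\bm{\alpha}^{k}+\eta_2\bm{\alpha}^{k+1}$ and $\bm{g}_i=\bm{\alpha}^{i-1}$ for $2\le i\le k$. Consider the set
\[
\mathcal{D}:=\{\bm{g}_i\star\bm{g}_j:\ 2\le i\le j\le k\}\cup\{\bm{g}_1\star\bm{g}_2,\ \bm{g}_1\star\bm{g}_{k-1},\ \bm{g}_1\star\bm{g}_k\}.
\]
The first part yields exactly the distinct vectors $\bm{\alpha}^2,\dots,\bm{\alpha}^{2k-2}$, which are $2k-3$ vectors. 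The remaining three are the evaluations of
\[
x+\eta_1x^{k+1}+\eta_2x^{k+2},\qquad x^{k-2}+\eta_1x^{2k-2}+\eta_2x^{2k-1},\qquad x^{k-1}+\eta_1x^{2k-1}+\eta_2x^{2k}.
\]
So we obtain $2k$ vectors in the Schur square. We deliberately avoid $\bm{g}_1\star\bm{g}_1$, since it has degree $2k+2$, which might reach $nr$.

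Suppose some $\mathbb{F}_q$-combination of $\mathcal{D}$ vanishes. Then the corresponding polynomial $f\in\mathbb{F}_q[x]$ has $\deg f\le 2k$ and satisfies $f(\alpha_i)=0$ for all $i$. Since its coefficients lie in $\mathbb{F}_q$ and $\alpha_i\in\mathcal{B}_{q,r}$, it also vanishes at all $nr$ conjugates $\alpha_i^{q^j}$. Because $2k<nr$, we get $f\equiv 0$.

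We then read off the coefficients in order:
\begin{itemize}
\item The coefficient of $x^{2k}$ is $\eta_2$ times the coefficient of $\bm{g}_1\star\bm{g}_k$, and $\eta_2\ne0$, so that coefficient vanishes.
\item Next, the coefficient of $x^{2k-1}$ forces the coefficient of $\bm{g}_1\star\bm{g}_{k-1}$ to vanish.
\item The coefficient of $x$ appears only in $\bm{g}_1\star\bm{g}_2$, because $k-2\ge2$ when $k>3$. So that coefficient vanishes as well.
\item What remains is a combination of distinct monomials $x^2,\dots,x^{2k-2}$, which must be trivial.
\end{itemize}
Hence $\mathcal{D}$ is linearly independent and $|\mathcal{D}|=2k$.

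The only delicate point, and the one I expect to be the main obstacle, is checking that this triangular bookkeeping is valid. The monomials $x^{k-2}$ and $x^{k-1}$ must be genuinely absorbed into the range $x^2,\dots,x^{2k-2}$ and must not collide with the isolated term $x$. This is exactly where the hypothesis $k>3$ enters. The bound $\deg f\le 2k<nr$ is where $k<\frac{nr}{2}$ enters.
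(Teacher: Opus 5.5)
Your proof is correct and is essentially the paper's: the same $2k$-element set $\mathcal{D}$ (the paper's $\bm{g}_2\star\bm{g}_j$ and $\bm{g}_j\star\bm{g}_k$ give exactly your $\bm{\alpha}^2,\dots,\bm{\alpha}^{2k-2}$), the same two Schur-square lemmas, and you fill in the independence check that the paper calls easy, using the degree-$\le 2k<nr$ root-counting argument from its single-twist case. One small precision: in your elimination order, $k>3$ is actually consumed at the $x^{2k-1}$ step, where it guarantees $x^{k+2}\neq x^{2k-1}$, rather than at the $x$ step; both collisions occur exactly when $k=3$, which is when $\bm{g}_1\star\bm{g}_2=\bm{g}_1\star\bm{g}_{k-1}$.
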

\begin{proof}
    It suffices to show that $\dim_{\mathbb{F}_q}\! \mathcal{C}_{n,k}(\bm{\alpha}, \bm{t}, \bm{h}, \bm{\eta})^{\star 2}\ge 2k.$ The desired conclusion then follows from Lemma \ref{lem: equiv implies dim of schur square equal} and Lemma \ref{lem: dim of schur of ars}.
    
    For each $i\ge 0,$ let $\bm{\alpha}^i:=(\alpha_1^i, \alpha_2^i, \dots, \alpha_n^i)$. A generator matrix of $\mathcal{C}_{n,k}(\bm{\alpha}, 2, h, \eta)$ is
    $$
        G=
        \begin{pmatrix}
            1+\eta_{1} \alpha_{1}^{k} + \eta_{2} \alpha_{1}^{k+1} & \cdots & 1+\eta_{1} \alpha_{n}^{k} + \eta_{2} \alpha_{n}^{k+1} \\
            \alpha_{1} & \cdots & \alpha_{n} \\
            \vdots & \cdots & \vdots \\
            \alpha_{1}^{k-1} & \cdots & \alpha_{n}^{k-1}
        \end{pmatrix}
        \quad = \quad
        \left(
\begin{tabular}{c}
     $\bm{g}_{1}$ \\
     $\bm{g}_{2}$ \\
     $\vdots$ \\
     $\bm{g}_{k}$ 
\end{tabular}
\right),
    $$
    where $\bm{g}_{1}=\bm{1}+\eta_{1} \bm{\alpha}^{k}+\eta_{2}\bm{\alpha}^{k+1},$ and
    $\bm{g}_i=\bm{\alpha}^{i-1},$  for $2\le i\le k$. 
    
    Let 
    \begin{equation*}
        \mathcal{D}:=\{\bm{g}_2\star \bm{g}_2,\ \bm{g}_2\star \bm{g}_3, \dots, \bm{g}_2\star \bm{g}_{k}\} \cup \{ \bm{g}_3\star \bm{g}_{k}, \dots, \bm{g}_{k}\star \bm{g}_{k}\}\cup \{ \bm{g}_{1}\star \bm{g}_{2}, \bm{g}_{1}\star\bm{g}_{k-1}, \bm{g}_{1}\star\bm{g}_{k}\} \subset \mathcal{C}_{n,k}(\bm{\alpha}, \bm{t}, \bm{h}, \bm{\eta})^{\star 2}.
    \end{equation*} 
    Observe that
    \[
    \begin{aligned}
\mathcal{D}
={}&
\{\bm{\alpha}^{2},\bm{\alpha}^{3},\ldots,\bm{\alpha}^{k}\}
\cup
\{\bm{\alpha}^{k+1},\ldots, \bm{\alpha}^{2k-2}\}
\\
&\cup
\{\bm{\alpha}^{1}+\eta_{1}\bm{\alpha}^{k+1}+\eta_{2}\bm{\alpha}_{1}^{k+2}, \bm{\alpha}^{k-2}+\eta_{1}\bm{\alpha}^{2k-2}+\eta_{2}\bm{\alpha}_{1}^{2k-1}, \bm{\alpha}^{k-1}+\eta_{1}\bm{\alpha}^{2k-1}+\eta_{2}\bm{\alpha}_{1}^{2k}\}.
\end{aligned}
    \]
    One can easily verify that $\mathcal{D}$ is linearly independent over $\mathbb{F}_{q}$. Since $|\mathcal{D}| = 2k$, it follows that $
\dim_{\mathbb{F}_q}
\bigl(\mathcal{C}_{n,k}(\bm{\alpha},\bm{t},\bm{h},\bm{\eta})^{\star2}\bigr)
\ge 2k.$ \qed
\end{proof}

\section{Parity-Check Matrices of
$\mathcal{C}_{n,k}(\texorpdfstring{\bm{\alpha}}{\alpha},2,h,\eta)$
and
$\mathcal{C}_{n,k}(\texorpdfstring{\bm{\alpha}}{\alpha} ,(1,2),(0,0),\texorpdfstring{\bm{\eta}}{\eta})$}\label{Section 5}
In this section, we determine parity-check matrices for the codes
$\mathcal{C}_{n,k}(\bm{\alpha},1,h,\eta)$ and
$\mathcal{C}_{n,k}(\bm{\alpha},(1,2),$ $(0,0),\bm{\eta})$.
We begin by fixing some notation that will be used throughout this section.
\begin{table}[H]
\centering
\begin{tabular}{l|p{10cm}}
\hline
\textbf{Notation} & \textbf{Description} \\
\hline
$I_n$
& Identity matrix of order $n$. \\ \hline

$P_{i,j}$
& Permutation matrix obtained from $I_n$ by interchanging columns $i$ and $j$. \\ \hline

$R_{i,j}(\lambda)$
& Elementary matrix obtained from $I_n$ by adding $\lambda$ times row $i$ to row $j$. \\ \hline

$C_{i,j}(\lambda)$
& Elementary matrix obtained from $I_n$ by adding $\lambda$ times column $i$ to column $j$. \\
\hline
\end{tabular}
\end{table}

We recall the following result from \cite{sharma2024mds}.
\begin{lemma}\label{lem:existence of z}
    Let 
    \[
    \mathcal{A} = \left(
    \begin{tabular}{c c c c}
         $1$ & $1$ & $\cdots$ & $1$  \\
         $\alpha_{1}$ & $\alpha_{2}$ & $\cdots$ & $\alpha_{n}$ \\
         $\vdots$ & $\vdots$ & $\cdots$ & $\vdots$ \\
         $\alpha_{1}^{nr-2}$ & $\alpha_{2}^{nr-2}$ & $\cdots$ & $\alpha_{n}^{nr-2}$
    \end{tabular}
    \right).
    \]
    Then there exists $\bm{z} =(z_{1},z_{2},\ldots, z_{n})\in (\mathbb{F}_{q^r}^{*})^{n}$ such that $\Trqr(\mathcal{A}\mathbf{z}^{T}) = \mathbf{0}.$
\end{lemma}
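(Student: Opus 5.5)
The plan is to realise the required trace relations as the classical dual-Reed--Solomon identity over the full set of Frobenius conjugates of the evaluation points. Since every $\alpha_i\in\mathcal{B}_{q,r}$ has exactly $r$ distinct conjugates and distinct $\alpha_i$ lie in distinct conjugacy classes, the $nr$ elements $\beta_{i,j}:=\alpha_i^{q^j}$, $1\le i\le n$, $0\le j\le r-1$, are pairwise distinct. Put
\[
F(x):=\prod_{i=1}^{n}\prod_{j=0}^{r-1}\bigl(x-\alpha_i^{q^j}\bigr).
\]
Each inner product is the minimal polynomial of $\alpha_i$ over $\mathbb{F}_q$, so $F(x)\in\mathbb{F}_q[x]$. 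Moreover $F$ has degree $nr$ and only simple roots, hence $F'(\beta_{i,j})\neq 0$ for all $i,j$.

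The first key step is the standard identity
\[
\sum_{i=1}^{n}\sum_{j=0}^{r-1}\frac{\beta_{i,j}^{m}}{F'(\beta_{i,j})}=0,\qquad 0\le m\le nr-2 .
\]
To prove it, I will use Lagrange interpolation of $x^m$ at the $nr$ distinct nodes $\beta_{i,j}$:
\[
x^m=\sum_{i,j}\beta_{i,j}^m\,\frac{F(x)}{(x-\beta_{i,j})F'(\beta_{i,j})}.
\]
This holds because both sides have degree at most $nr-1$ and agree at all $nr$ nodes. The coefficient of $x^{nr-1}$ on the right is exactly the sum above. On the left that coefficient is $0$, since $m\le nr-2$. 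This is the only genuinely computational point, and it is routine.

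The second step transfers the identity to traces. Since $F$ has coefficients in $\mathbb{F}_q$, so does $F'$, and therefore
\[
F'(\alpha_i^{q^j})=F'(\alpha_i)^{q^j}.
\]
Setting $z_i:=F'(\alpha_i)^{-1}\in\mathbb{F}_{q^r}^{*}$, we get $\beta_{i,j}^m/F'(\beta_{i,j})=(\alpha_i^m z_i)^{q^j}$. Summing over $j$ gives $\Trqr(\alpha_i^m z_i)$, so the identity becomes
\[
\sum_{i=1}^n\Trqr(\alpha_i^m z_i)=0\qquad\text{for } 0\le m\le nr-2 .
\]
Each of these equations is precisely one entry of $\Trqr(\mathcal{A}\bm{z}^T)=\bm{0}$.

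I expect no serious obstacle. The only points needing care are the distinctness of the $nr$ conjugates, which uses $\bm{\alpha}\subseteq\mathcal{B}_{q,r}$, and the commutation of $F'$ with Frobenius, which uses $F\in\mathbb{F}_q[x]$. Together these guarantee both that each $z_i$ is nonzero and that the inner sums collapse to traces.
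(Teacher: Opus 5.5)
Your proof is correct and complete. The $nr$ conjugates $\beta_{i,j}$ are pairwise distinct because $\bm{\alpha}\subseteq\mathcal{B}_{q,r}$. The interpolation identity $\sum_{i,j}\beta_{i,j}^{m}/F'(\beta_{i,j})=0$ holds for $0\le m\le nr-2$. Since $F'\in\mathbb{F}_q[x]$, the inner sums collapse to traces, so $z_i=F'(\alpha_i)^{-1}\neq 0$ works.

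The paper gives no argument of its own here; it imports the lemma from \cite{sharma2024mds}. Your proof therefore serves as a self-contained replacement rather than a reconstruction.

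A non-constructive route is also available. The solution set of $\Trqr(\mathcal{A}\bm{z}^{T})=\bm{0}$ is exactly $\ars_{n,nr-1}(\bm{\alpha})^{\perp}$, which has $\mathbb{F}_q$-dimension $nr-(nr-1)=1$. A nonzero vector in it has no zero coordinate. Indeed, evaluations of $\mathbb{F}_q[x]_{<nr-1}$ at any $n-1$ of the points already fill $\mathbb{F}_{q^r}^{n-1}$ (Chinese remainder theorem, since $(n-1)r\le nr-1$). Nondegeneracy of the trace form would then force the vector to vanish.

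That route gives existence only. Yours gives an explicit $\bm{z}$, and combined with the dimension count it shows that every admissible choice is $\bm{z}=c\,\bigl(F'(\alpha_1)^{-1},\dots,F'(\alpha_n)^{-1}\bigr)$ with $c\in\mathbb{F}_q^{*}$.

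As a bonus, your identity at $m=nr-1$ reads $\sum_{i,j}\beta_{i,j}^{nr-1}/F'(\beta_{i,j})=1$. Hence $w_0=\sum_{i}\Trqr(z_i\alpha_i^{nr-1})=c\neq 0$ for every admissible $\bm{z}$. In particular, the case $w_0=0$ in part (b) of Theorem~\ref{thm:parity-check-one-twist}, and of its double-twist analogue, never actually occurs.
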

Let $\bm{z}=(z_1,z_2,\ldots,z_n)$ be as in Lemma \ref{lem:existence of z}, and define
    \[
    \mathcal{G} = 
    \left( 
    \begin{tabular}{c c c c}
        $1$ & $1$ & $\cdots$ & $1$ \\
        $\alpha_1$ & $\alpha_2$ & $\ldots$ & $\alpha_n$ \\
        $\vdots$ & $\vdots$ & $\ddots$ & $\vdots$ \\
        $\alpha_{1}^{n-1}$ & $\alpha_{2}^{n-1}$ & $\cdots$ & $\alpha_{n}^{n-1}$\\
    \end{tabular}
    \right)
    \]
    and 
    \[
    \mathcal{H} = 
    \left(
    \begin{tabular}{c c c c}
      $z_{1}\alpha_{1}^{nr-1}$ & $z_{1}\alpha_{1}^{nr-2}$ & $\cdots$ & $z_{1}$\\
        $z_{2}\alpha_{1}^{nr-1}$ & $z_{2}\alpha_{1}^{nr-2}$ & $\cdots$ & $z_{2}$\\
        $\vdots$ & $\vdots$ & $\cdots$ & $\vdots$ \\
         $z_{n}\alpha_{1}^{nr-1}$ & $z_{n}\alpha_{1}^{nr-2}$ & $\cdots$ & $z_{n}$\\
    \end{tabular}
    \right).
    \]
Then 
\begin{equation}\label{eq: GH}
    W = \mathcal{G}\mathcal{H} = 
\left(
\begin{tabular}{c c c c c c c}
     $w_0$ & $0$ & $\cdots$ & $0$ & $0$ & $\cdots$ & $0$ \\
     $w_1$ & $w_0$ & $\cdots$ & $0$ & $0$ & $\cdots$ & $0$ \\
     $\vdots$ & $\vdots$ & $\ddots$ & $\vdots$ & $\vdots$ & $\cdots$ & $\vdots$ \\
     $w_{n-1}$ & $w_{n-2}$ & $\cdots$ & $w_{0}$ & $0$ & $\cdots$ & $0$
\end{tabular}
\right),
\end{equation}
where
\[
w_i=\sum_{j=1}^n\Trqr\!\left(z_j\alpha_j^{\,nr-1+i}\right),
\qquad
0\le i\le n-1,
\]
and the first $n$ columns of $W$ form a lower-triangular Toeplitz matrix, and the remaining $nr-n$ columns are zero.
\subsection{Parity-Check Matrix of
$\mathcal{C}_{n,k}(\texorpdfstring{\bm{\alpha}}{\alpha},2,h,\eta)$}

Pre-multiplying the matrix $W$ in \eqref{eq: GH} by the elementary matrix $R_{k+2,h+1}(\eta)$, we obtain
\[
\begin{pNiceMatrix}[
    first-row,
    last-col
]
  1 &  & h+1 & h+2 &  & k & k+1 & k+2 &  & nr \\[6pt]
w_{0} & \cdots & 0 & 0 & \cdots & 0 & {\color{blue}0} & {\color{blue}0} & {\color{blue}\cdots} & {\color{blue}0} & \hspace{8pt} 1\\ 
\vdots & \vdots & \vdots & \vdots & \vdots & \vdots & {\color{blue}\vdots} & {\color{blue}\vdots} & {\color{blue}\vdots} & {\color{blue}\vdots} & \\
w_{h}+\eta w_{k+1}
& \cdots
& w_{0}+\eta w_{k-h+1}
& \eta w_{k-h}
& \cdots
& \eta w_{2}
& {\color{blue}\eta w_{1}}
& {\color{blue}\eta w_{0}}
& {\color{blue}\cdots}
& {\color{blue}0}
& \hspace{8pt} h+1\\
w_{h+1}
& \cdots
& w_{1}
& w_{0}
& \cdots
& 0
& {\color{blue}0}
& {\color{blue}0}
& {\color{blue}\cdots}
& {\color{blue}0}
& \hspace{8pt} h+2\\
\vdots & \vdots & \vdots & \vdots & \vdots & \vdots & {\color{blue}\vdots} & {\color{blue}\vdots} & {\color{blue}\vdots} & {\color{blue}\vdots} & \\
w_{k-1}
& \cdots
& w_{k-h-1}
& w_{k-h-2}
& \cdots
& w_{0}
& {\color{blue}0}
& {\color{blue}0}
& {\color{blue}\cdots}
& {\color{blue}0}
& \hspace{8pt} k\\
w_{k}
& \cdots
& w_{k-h}
& w_{k-h-1}
& \cdots
& w_{1}
& w_{0}
& 0
& \cdots
& 0
& \hspace{8pt} k+1\\
w_{k+1}
& \cdots
& w_{k-h+1}
& w_{k-h}
& \cdots
& w_{2}
& w_{1}
& w_{0}
& \cdots
& 0
& \hspace{8pt} k+2\\
\vdots & \vdots & \vdots & \vdots & \vdots & \vdots & \vdots & \vdots & \vdots & \vdots & \\
w_{n-1}
& \cdots
& w_{n-h-1}
& w_{n-h-2}
& \cdots
& w_{n-k}
& w_{n-k-1}
& w_{n-k-2}
& \cdots
& 0
& \hspace{8pt} n
\end{pNiceMatrix}
\]
We now perform a sequence of elementary column operations to transform the blue submatrix to the zero matrix. We consider the following two cases.
\begin{enumerate}
    \item[(1)] Suppose that $w_0\neq0$. Define the following sequence of elementary column operations:
\begin{eqnarray*}
    Q^{(1)} &=& C_{k+2, h+1}\left(-\frac{w_0 + \eta w_{k-h+1}}{\eta w_0}\right) \prod_{j=h+2}^{k+1} C_{k+2, j}\left(-\frac{w_{k-j+2}}{w_0}\right) \\
    Q^{(2)} &=& \prod_{j=h+1}^{k-1} C_{k, j} \left( -\frac{w_{k-j}}{w_0} \right) \\
    Q^{(3)} &=& \prod_{j=h+1}^{k-2} C_{k-1, j} \left( -\frac{w_{k-j-1}}{w_0} \right) \\
    &\vdots & \\
    Q^{(k-h)} &=& C_{h+2, h+1}\left(-\frac{w_1}{w_0}\right) 
\end{eqnarray*}
Then the matrix $X=R_{k+2, h+1}(\eta) W Q^{(1)}Q^{(2)}\cdots Q^{(k-h)} P_{k+2, h+1}$ has the following form
\[
\begin{pNiceMatrix}[
    first-row,
    last-col
]
  1 &  & h+1 & h+2 &  & k & k+1 & k+2 &  & nr \\[6pt]
* & \cdots & * & * & \cdots & * & 0 & 0 & \cdots & 0 & \hspace{8pt} 1\\ 
\vdots & \vdots & \vdots & \vdots & \vdots & \vdots & \vdots & \vdots & \vdots & \vdots & \\
*
& \cdots
& *
& *
& \cdots
& *
& 0
& 0
& \cdots
& 0
& \hspace{8pt} h+1\\
*
& \cdots
& *
& *
& \cdots
& *
& 0
& 0
& \cdots
& 0
& \hspace{8pt} h+2\\
\vdots & \vdots & \vdots & \vdots & \vdots & \vdots & \vdots & \vdots & \vdots & \vdots & \\
*
& \cdots
& *
& *
& \cdots
& *
& 0
& 0
& \cdots
& 0
& \hspace{8pt} k\\
*
& \cdots
& *
& *
& \cdots
& *
& *
& *
& \cdots
& *
& \hspace{8pt} k+1\\
*
& \cdots
& *
& *
& \cdots
& *
& *
& *
& \cdots
& *
& \hspace{8pt} k+2\\
\vdots & \vdots & \vdots & \vdots & \vdots & \vdots & \vdots & \vdots & \vdots & \vdots & \\
*
& \cdots
& *
& *
& \cdots
& *
& *
& *
& \cdots
& *
& \hspace{8pt} n
\end{pNiceMatrix}
\]
Now, let $\mathcal{H}' = \mathcal{H}Q^{(1)}Q^{(2)}\cdots Q^{(k-h)} P_{k+2, h+1} = (\bm{c}_1\ \bm{c}_2\ \cdots\ \bm{c}_{nr}),$ where $\bm{c}_i$ denotes the $i$th column of $\mathcal{H}'$. It follows that a parity-check matrix of
$\mathcal{C}_{n,k}(\bm{\alpha},2,h,\eta)$ is
\[
H =
\left(
\begin{tabular}{c}
     $\bm{c}_{nr}^{T}$  \\ 
     $\bm{c}_{nr-1}^{T}$  \\ 
     $\vdots$ \\
     $\bm{c}_{k+1}^{T}$  \\ 
\end{tabular}
\right).
\]
\item[(2)] Suppose that $w_{0} = 0$. Let 
\[
m(x) = \prod_{i=1}^{n}\prod_{j=0}^{r-1}(x-\alpha_{i}^{q^j}) = x^{nr}+\sum\limits_{s=0}^{nr-1}m_{s}x^{s}.
\]
Since each $\alpha_i$ has degree $r$ over $\mathbb{F}_q$, the polynomial
$\prod_{j=0}^{r-1}(x-\alpha_i^{q^j})$ is the minimal polynomial of
$\alpha_i$ over $\mathbb{F}_q$. Hence $m(x)\in\mathbb{F}_q[x]$, and
therefore $m_s\in\mathbb{F}_q$ for all $0\le s\le nr-1$.
Since $m(\alpha_{i}) = 0$, for each $1\le i\le n,$
\[
\alpha_{i}^{nr} = -\sum\limits_{s=0}^{nr-1}m_{s}\alpha_{i}^{s}.
\] 
Hence, 
\begin{eqnarray*}
    w_{1} &=& \sum\limits_{i=1}^{n}\Trqr(z_{i}\alpha_{i}^{nr})\\
    &=& \sum\limits_{i=1}^{n}\Trqr\left(z_{i}\left(-\sum\limits_{s=0}^{nr-1}m_{s}\alpha_{i}^{s}\right)\right) \\
    &=& -\sum\limits_{s=0}^{nr-1}m_{s}\sum\limits_{i=1}^{n}\Trqr\left(z_{i}\alpha_{i}^{s}\right) \\
    &=& -m_{nr-1} \sum\limits_{i=1}^{n}\Trqr(z_{i}\alpha_{i}^{nr-1}) \\
    &=& w_{0} \sum\limits_{i=1}^{n}\Trqr(\alpha_{i}).\\
\end{eqnarray*}
Therefore, $w_{1} = 0$ and consequently, in this case,
\[
\mathcal{C}_{n,k}(\bm{\alpha}, 2, h, \eta)^{\perp} = \agrs_{n, nr-k}(\bm{\alpha}, \bm{z}).
\]
This leads to the following theorem.
\begin{theorem}\label{thm:parity-check-one-twist}
With the above notations, consider the code $\mathcal{C}_{n,k}(\bm{\alpha},2,h,\eta)$ and let $w_0=\sum_{i=1}^n\Trqr\!\left(z_i\alpha_i^{\,nr-1}\right)$.
\begin{enumerate}
    \item[(a)] If $w_0\neq 0$, then a parity-check matrix of
    $\mathcal{C}_{n,k}(\bm{\alpha},2,h,\eta)$ is given by
    \[
    H=
    \begin{pmatrix}
    \bm{c}_{nr}^{T}\\
    \bm{c}_{nr-1}^{T}\\
    \vdots\\
    \bm{c}_{k+1}^{T}
    \end{pmatrix},
    \]
    where
    \[
    \mathcal{H}'
    =
    \mathcal{H}
    Q^{(1)}Q^{(2)}\cdots Q^{(k-h)}
    P_{k+2,h+1}
    =
    (\bm{c}_1\ \bm{c}_2\ \cdots\ \bm{c}_{nr}),
    \]
    and the matrices $Q^{(1)},\dots,Q^{(k-h)}$ are defined as above.

    \item[(b)] If $w_0=0$, then
    \[
    \mathcal{C}_{n,k}(\bm{\alpha},2,h,\eta)^{\perp}
    =
    \agrs_{n,nr-k}(\bm{\alpha},\bm{z}).
    \]
\end{enumerate}
\end{theorem}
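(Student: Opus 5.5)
The plan rests on one observation: the trace pairing between evaluation vectors of monomials and the columns of $\mathcal{H}$ is governed entirely by the numbers $w_i$. Concretely, the $(a+1,b+1)$ entry of $W$ is $\sum_{j}\Trqr(z_j\alpha_j^{a+nr-1-b})$. By Lemma~\ref{lem:existence of z}, this entry vanishes whenever $a-b<0$, since then the exponent is at most $nr-2$. When $a\ge b$ it equals $w_{a-b}$. I will read $\mathcal{G}$ as having rows $\bm{\alpha}^0,\dots,\bm{\alpha}^{nr-1}$, so that $W$ is the full $nr\times nr$ lower-triangular Toeplitz matrix. All quantities $w_i$ lie in $\mathbb{F}_q$, and $\Trqr$ is $\mathbb{F}_q$-linear. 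Hence, for an $\mathbb{F}_q$-matrix $Q$ applied to the columns of $\mathcal{H}$ and an $\mathbb{F}_q$-matrix $R$ applied to the rows of $\mathcal{G}$, we have $\Trqr(R\mathcal{G}\mathcal{H}Q)=RWQ$.

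For part (a), the first $k$ rows of $R_{k+2,h+1}(\eta)\mathcal{G}$ are exactly the generator rows $\bm{g}_1,\dots,\bm{g}_k$ of $\mathcal{C}_{n,k}(\bm{\alpha},2,h,\eta)$.
\begin{enumerate}
\item Check that $Q^{(1)},\dots,Q^{(k-h)}$, which have entries in $\mathbb{F}_q$ because $w_0\neq0$, clear the blue block. $Q^{(1)}$ uses column $k+2$, which carries the pivot $\eta w_0$ in row $h+1$ and the pivot $w_0$ in row $k+2$. It kills the entries of row $h+1$ in columns $h+1,\dots,k+1$. The subsequent $Q^{(i)}$ then use the diagonal pivots $w_0$ in columns $k,k-1,\dots,h+2$ to restore the triangular pattern in rows $h+1,\dots,k$.
\item After the swap $P_{k+2,h+1}$, the first $k$ rows of $X$ vanish in columns $k+1,\dots,nr$. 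This means $\Trqr(\bm{g}_i\bm{c}_l^{T})=0$ for all $i\le k$ and $l\ge k+1$, so every $\bm{c}_l$ with $l\ge k+1$ lies in $\mathcal{C}^{\perp}$.
\item Show these $nr-k$ vectors are $\mathbb{F}_q$-independent. Suppose $\sum_l \lambda_l\bm{c}_l=\bm 0$ with $\lambda_l\in\mathbb{F}_q$. Pairing against the rows $k+1,\dots,nr$ gives $X'\bm\lambda=0$, where $X'$ is the lower-right $(nr-k)\times(nr-k)$ block of $X$. I would show $X'$ is triangular with nonzero diagonal, namely powers of $w_0$ times $\eta$-factors, which forces $\bm\lambda=\bm 0$.
\item Since $\dim_{\mathbb{F}_q}\mathcal{C}^{\perp}=nr-k$, these rows form a parity-check matrix.
\end{enumerate}
The main obstacle is step 1: tracking the entries through the column operations. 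I would organise it by noting that each $Q^{(i)}$ only modifies columns $h+1,\dots,k+1-i$, so earlier zeros are preserved.

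For part (b), assume $w_0=0$. The computation already given via the polynomial $m(x)\in\mathbb{F}_q[x]$ yields $w_1=w_0\sum_i\Trqr(\alpha_i)=0$. Now take $\bm{g}_i$, whose polynomial has degree at most $k+1$, and a row $z\star\bm{\alpha}^{b}$ of $\agrs_{n,nr-k}(\bm\alpha,\bm z)$ with $b\le nr-k-1$. Their trace inner product is an $\mathbb{F}_q$-combination of the sums $\sum_j\Trqr(z_j\alpha_j^{e})$ with $e\le nr$. Each such sum vanishes: for $e\le nr-2$ by Lemma~\ref{lem:existence of z}, for $e=nr-1$ because it equals $w_0=0$, and for $e=nr$ because it equals $w_1=0$. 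Hence $\agrs_{n,nr-k}(\bm\alpha,\bm z)\subseteq\mathcal{C}^{\perp}$.

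It remains to compare dimensions. The map $f\mapsto(z_if(\alpha_i))_i$ on $\mathbb{F}_q[x]_{<nr-k}$ is injective. Indeed, a kernel element would vanish at all $nr$ conjugates $\alpha_i^{q^j}$, which are distinct because $\bm\alpha\subseteq\mathcal{B}_{q,r}$, while having degree less than $nr$. So this AGRS code has dimension $nr-k=\dim_{\mathbb{F}_q}\mathcal{C}^{\perp}$, and equality follows.
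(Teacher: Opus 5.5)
Your proposal follows the paper's proof in both parts. For (a), the first $k$ rows of $X=R_{k+2,h+1}(\eta)WQ^{(1)}\cdots Q^{(k-h)}P_{k+2,h+1}$ vanish in the last $nr-k$ columns. For (b), $w_1=w_0\sum_i\Trqr(\alpha_i)=0$ gives $\agrs_{n,nr-k}(\bm{\alpha},\bm{z})\subseteq\mathcal{C}^{\perp}$, with equality by counting dimensions. You also supply details the paper only asserts.

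The one step that fails as planned is step 3: $X'$ is \emph{not} triangular. Take $h=k-1$.
\begin{itemize}
\item Column $k+2$ of $X$ is column $k$ of $R_{k+2,k}(\eta)W$ plus $-\frac{w_0+\eta w_2}{\eta w_0}$ times column $k+2$. Its entry in row $k+1$ is therefore $w_1$.
\item Column $k+1$ of $X$ is column $k+1$ minus $\frac{w_1}{w_0}$ times column $k+2$. Its entry in row $k+3$ is therefore $w_2-w_1^2/w_0$.
\end{itemize}
In the paper's example ($w_0=w_2=b$, $w_1=b^2$) both entries equal $b^2\neq0$, so $X'$ has nonzero entries on both sides of the diagonal. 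What is true is that $X'$ is block lower triangular. Rows $k+1,k+2$ of $X$ vanish in columns $k+3,\dots,nr$. The leading $2\times2$ block is $\begin{pmatrix} w_0 & \ast\\ 0 & -w_0/\eta\end{pmatrix}$. The remaining diagonal block is lower-triangular Toeplitz with diagonal $w_0$. So your guess about the diagonal entries is right, but the shape claim is not.

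The cleanest repair bypasses $X'$ entirely. Your injectivity argument from part (b), run on $\mathbb{F}_q[x]_{<nr}$, shows that the $nr$ columns $\bm{z}\star\bm{\alpha}^{nr-1},\dots,\bm{z}\star\bm{\alpha}^{0}$ of $\mathcal{H}$ are $\mathbb{F}_q$-independent. Since $Q^{(1)}\cdots Q^{(k-h)}P_{k+2,h+1}$ is invertible over $\mathbb{F}_q$, all columns of $\mathcal{H}'$ are independent, in particular $\bm{c}_{k+1},\dots,\bm{c}_{nr}$. Alternatively, $\det X=-w_0^{nr}$, and the zero upper-right $k\times(nr-k)$ block of $X$ then forces $\det X'\neq0$.

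The same observation has a further consequence. The rows $\bm{\alpha}^0,\dots,\bm{\alpha}^{nr-1}$ of $\mathcal{G}$ and the columns of $\mathcal{H}$ are two $\mathbb{F}_q$-bases of $\mathbb{F}_{q^r}^n$. Hence $W$ is a Gram matrix of the nondegenerate trace form, and $w_0^{nr}=\det W\neq 0$. Equivalently, $\bm{z}\neq\bm{0}$ cannot be trace-orthogonal to all of $\bm{\alpha}^0,\dots,\bm{\alpha}^{nr-1}$. So $w_0\neq0$ always, and part (b) is vacuous: your derivation of it is valid, but its hypothesis never holds.
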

\begin{proof}
A generator matrix of $\mathcal{C}_{n,k}(\bm{\alpha}, 2, h, \eta)$ is
    $$
        G=\begin{pmatrix}
            1 & \cdots & 1 \\
            \alpha_{1} & \cdots & \alpha_{n} \\
            \vdots & \cdots & \vdots \\
            \alpha_{1}^{h-1} & \cdots & \alpha_{n}^{h-1} \\
            \alpha_{1}^{h}+\eta \alpha_{1}^{k+1} & \cdots & \alpha_{n}^{h}+\eta \alpha_{n}^{k+1} \\
            \alpha_{1}^{h+1} & \cdots & \alpha_{n}^{h+1} \\
            \vdots & \cdots & \vdots \\
            \alpha_{1}^{k-1} & \cdots & \alpha_{n}^{k-1}
        \end{pmatrix}.
        $$
 Observe that $G$ is obtained from
$R_{k+2,h+1}(\eta)\mathcal{G}$ by retaining its first $k$ rows. Consequently, the matrix $\Trqr(GH^{t})$ is precisely the submatrix of \\ $X=R_{k+2,h+1}(\eta)WQ^{(1)}Q^{(2)}\cdots Q^{(k-h)}P_{k+2,h+1}$ formed by its first $k$ rows and last $nr-k$ columns. Since this block of $X$ is zero, it follows that
\[
\Trqr(GH^{T})=O.
\]
Furthermore, the rows of $H$ are linearly independent over
$\mathbb{F}_{q}$. Therefore, the result follows. \qed
\end{proof}
\end{enumerate}
\begin{example}\label{eg: parity_check}
    Let $q=4, r=4, n=4, k=8$ and $h=7$. Let $b$ be a root of the irreducible polynomial $x^2+x+1$ over $\mathbb{F}_{2}$, so that $\mathbb{F}_{4} = \mathbb{F}_{2}(b)$. Further, let $a$ be a root of the irreducible polynomial $y^4 + y^3 + by^2 + by + b$ over $\mathbb{F}_{4}$, so that $\mathbb{F}_{256} = \mathbb{F}_{4}(a)$. Let $\eta = b$ and define
    \[
    \bm{\alpha} := (\alpha_{1}, \alpha_{2}, \alpha_{3}, \alpha_{4}) =  (a,  a^2, a^3, a^5).
    \] 
    Then the generator matrix of $\mathcal{C}_{n,k}(\bm{\alpha}, 2, 7, \eta)$ is 
    \[
G = 
\begin{pmatrix}
1 & 1 & 1 & 1 \\
\alpha_1 & \alpha_2 & \alpha_3 & \alpha_4 \\
\alpha_1^2 & \alpha_2^2 & \alpha_3^2 & \alpha_4^2 \\
\alpha_1^3 & \alpha_2^3 & \alpha_3^3 & \alpha_4^3 \\
\alpha_1^4 & \alpha_2^4 & \alpha_3^4 & \alpha_4^4 \\
\alpha_1^5 & \alpha_2^5 & \alpha_3^5 & \alpha_4^5 \\
\alpha_1^6 & \alpha_2^6 & \alpha_3^6 & \alpha_4^6 \\
\alpha_1^7 + \eta \alpha_{1}^{9}& \alpha_2^7 + \eta \alpha_{2}^{9} & \alpha_3^7 + \eta \alpha_{3}^{9}& \alpha_4^7 + \eta \alpha_{4}^{9}
\end{pmatrix} \quad = \quad
\begin{pmatrix}
1      & 1      & 1      & 1 \\
a      & a^2    & a^3    & a^5 \\
a^2    & a^4    & a^6    & a^{10} \\
a^3    & a^6    & a^9    & a^{15} \\
a^4    & a^8    & a^{12} & a^{20} \\
a^5    & a^{10} & a^{15} & a^{25} \\
a^6    & a^{12} & a^{18} & a^{30} \\
a^{174} & a^{217} & a^{230} & a^{211}
\end{pmatrix}.
\]
    By Lemma \ref{lem:existence of z}, we get $\bm{z} = (a^2, a^{75}, a^{124}, a^{24}).$ Moreover, $w_{0} = b, w_1 = b^2$ and $w_2 = b$. Applying Theorem \ref{thm:parity-check-one-twist}, we obtain the following parity-check matrix
    \[
\resizebox{\textwidth}{!}{$
H =
\begin{pmatrix}
z_1 & z_2 & z_3 & z_4 \\
z_1\alpha_1 & z_2\alpha_2 & z_3\alpha_3 & z_4\alpha_4 \\
z_1\alpha_1^2 & z_2\alpha_2^2 & z_3\alpha_3^2 & z_4\alpha_4^2 \\
z_1\alpha_1^3 & z_2\alpha_2^3 & z_3\alpha_3^3 & z_4\alpha_4^3 \\
z_1\alpha_1^4 & z_2\alpha_2^4 & z_3\alpha_3^4 & z_4\alpha_4^4 \\
z_1\alpha_1^5 & z_2\alpha_2^5 & z_3\alpha_3^5 & z_4\alpha_4^5 \\
z_1\!\left(\alpha_1^8-\frac{w_0+\eta w_2}{\eta w_0}\alpha_1^6\right) &
z_2\!\left(\alpha_2^8-\frac{w_0+\eta w_2}{\eta w_0}\alpha_2^6\right) &
z_3\!\left(\alpha_3^8-\frac{w_0+\eta w_2}{\eta w_0}\alpha_3^6\right) &
z_4\!\left(\alpha_4^8-\frac{w_0+\eta w_2}{\eta w_0}\alpha_4^6\right) \\
z_1\!\left(\alpha_1^7-\frac{w_1}{w_0}\alpha_1^6\right) &
z_2\!\left(\alpha_2^7-\frac{w_1}{w_0}\alpha_2^6\right) &
z_3\!\left(\alpha_3^7-\frac{w_1}{w_0}\alpha_3^6\right) &
z_4\!\left(\alpha_4^7-\frac{w_1}{w_0}\alpha_4^6\right)
\end{pmatrix}
\quad = \quad
\begin{pmatrix}
a^2    & a^{75}  & a^{124} & a^{24} \\
a^3    & a^{77}  & a^{127} & a^{29} \\
a^4    & a^{79}  & a^{130} & a^{34} \\
a^5    & a^{81}  & a^{133} & a^{39} \\
a^6    & a^{83}  & a^{136} & a^{44} \\
a^7    & a^{85}  & a^{139} & a^{49} \\
a^{67} & a^{251} & a^{67}  & a^8 \\
a^{49} & a^{146} & a^{204} & a^{227}
\end{pmatrix}.
$}
\] All the computations were carried out in \textsc{Magma}.
\end{example}
\begin{remark}
Theorem~3.6(a) of \cite{JiayuMa2026} does not appear to be correct. Indeed, the parity-check matrix $H_1$ obtained for the code in Example~\ref{eg: parity_check} by applying \cite[Theorem~3.6(a)]{JiayuMa2026} does not satisfy $\Trqr(GH_1^{T}) \neq O.$ In fact, following the notations of \cite[Theorem~3.6]{JiayuMa2026}, we have $\omega = b, \gamma = b^2$ and $\eta\omega+1 = b$. Consequently, the parity-check matrix $H_1$ is 
\[
\resizebox{\textwidth}{!}{$
H_1 = \begin{pmatrix}
z_1 & z_2 & z_3 & z_4 \\
z_1\alpha_1 & z_2\alpha_2 & z_3\alpha_3 & z_4\alpha_4 \\
z_1\alpha_1^2 & z_2\alpha_2^2 & z_3\alpha_3^2 & z_4\alpha_4^2 \\
z_1\alpha_1^3 & z_2\alpha_2^3 & z_3\alpha_3^3 & z_4\alpha_4^3 \\
z_1\alpha_1^4 & z_2\alpha_2^4 & z_3\alpha_3^4 & z_4\alpha_4^4 \\
z_1\alpha_1^5 & z_2\alpha_2^5 & z_3\alpha_3^5 & z_4\alpha_4^5 \\
z_1\!\left(\alpha_1^6-\frac{1}{\gamma}\alpha_1^8-\frac{\eta}{\eta\omega+1}\alpha_1^9\right) &
z_2\!\left(\alpha_2^6-\frac{1}{\gamma}\alpha_2^8-\frac{\eta}{\eta\omega+1}\alpha_2^9\right) &
z_3\!\left(\alpha_3^6-\frac{1}{\gamma}\alpha_3^8-\frac{\eta}{\eta\omega+1}\alpha_3^9\right) &
z_4\!\left(\alpha_4^6-\frac{1}{\gamma}\alpha_4^8-\frac{\eta}{\eta\omega+1}\alpha_4^9\right) \\
z_1\alpha_1^7 & z_2\alpha_2^7 & z_3\alpha_3^7 & z_4\alpha_4^7
\end{pmatrix}
\quad = \quad
\begin{pmatrix}
a^2    & a^{75}  & a^{124} & a^{24} \\
a^3    & a^{77}  & a^{127} & a^{29} \\
a^4    & a^{79}  & a^{130} & a^{34} \\
a^5    & a^{81}  & a^{133} & a^{39} \\
a^6    & a^{83}  & a^{136} & a^{44} \\
a^7    & a^{85}  & a^{139} & a^{49} \\
a^{129} & a^{142} & a^{159} & a^{177} \\
a^9    & a^{89}  & a^{145} & a^{59}
\end{pmatrix}.
$}
\]
\end{remark}

\subsection{Parity-Check Matrix of
$\mathcal{C}_{n,k}(\texorpdfstring{\bm{\alpha}}{\alpha} ,(1,2),(0,0),\texorpdfstring{\bm{\eta}}{\eta})$}
Pre-multiplying $W$ in \eqref{eq: GH} by $R_{k+1, 1}(\eta_1)$ and $R_{k+2, 1}(\eta_2)$, we get
\[
\begin{pNiceMatrix}[
    first-row,
    last-col
]
  1 & 2 &  & k & k+1 & k+2 &  & nr \\[6pt]
 w_{0} + \eta_1w_k + \eta_2w_{k+1} & \eta_1w_{k-1} + \eta_2w_{k} & \cdots & \eta_1w_1 + \eta_2w_{2} & {\color{blue}\eta_1w_0 + \eta_2w_{1}} & {\color{blue} \eta_2w_{0}} & {\color{blue}\cdots} & {\color{blue}0} & \hspace{8pt} 1\\ 

 w_{1}
& w_{0}
& \cdots
& 0
& {\color{blue}0}
& {\color{blue}0}
& {\color{blue}\cdots}
& {\color{blue}0}
& \hspace{8pt} 2\\
 \vdots & \vdots & \vdots & \vdots & {\color{blue}\vdots} & {\color{blue}\vdots} & {\color{blue}\vdots} & {\color{blue}\vdots} & \\
 w_{k-1}
& w_{k-2}
& \cdots
& w_{0}
& {\color{blue}0}
& {\color{blue}0}
& {\color{blue}\cdots}
& {\color{blue}0}
& \hspace{8pt} k\\

 w_{k}
& w_{k-1}
& \cdots
& w_{1}
& w_{0}
& 0
& \cdots
& 0
& \hspace{8pt} k+1\\
 w_{k+1}
& w_{k}
& \cdots
& w_{2}
& w_{1}
& w_{0}
& \cdots
& 0
& \hspace{8pt} k+2\\
 \vdots & \vdots & \vdots & \vdots & \vdots & \vdots & \vdots & \vdots & \\

 w_{n-1}
& w_{n-2}
& \cdots
& w_{n-k}
& w_{n-k-1}
& w_{n-k-2}
& \cdots
& 0
& \hspace{8pt} n
\end{pNiceMatrix}
\]
We consider the following two cases. 
\begin{enumerate}
    \item[(1)] If $w_0 \neq 0$, then define
\begin{eqnarray*}
    Q^{(1)} &=& C_{k+2, 1}\left(-\frac{w_0 + \eta_{1}w_{k} + \eta_{2}w_{k+1}}{\eta_{2} w_0}\right) \prod_{j=2}^{k+1} C_{k+2, j}\left(-\frac{\eta_{1}w_{k+1-j}+\eta_{2}w_{k+2-j}}{\eta_{2}w_0}\right) \\
    Q^{(2)} &=& \prod_{j=1}^{k-1} C_{k, j} \left( -\frac{w_{k-j}}{w_0} \right) \\
    Q^{(3)} &=& \prod_{j=1}^{k-2} C_{k-1, j} \left( -\frac{w_{k-j-1}}{w_0} \right) \\
    &\vdots & \\
    Q^{(k)} &=& C_{2, 1}\left(-\frac{w_1}{w_0}\right).
\end{eqnarray*}
Then $R_{k+1, 1}(\eta_1) R_{k+2, 1}(\eta_2) W Q^{(1)}Q^{(2)}\cdots Q^{(k)} P_{k+2, 1}$ is
\[
\begin{pNiceMatrix}[
    first-row,
    last-col
]
  1 & 2 &  & k & k+1 & k+2 &  & nr \\[6pt]
* & * & \cdots & * & 0 & 0 & \cdots & 0& \hspace{8pt} 1\\ 

 *
& *
& \cdots
& *
& 0
& 0
& \cdots
& 0
& \hspace{8pt} 2\\
 \vdots & \vdots & \vdots & \vdots & \vdots & \vdots & \vdots & \vdots & \\
 *
& *
& \cdots
& *
& 0
& 0
& \cdots
& 0
& \hspace{8pt} k\\

 *
& *
& \cdots
& *
& *
& *
& \cdots
& *
& \hspace{8pt} k+1\\
 *
& *
& \cdots
& *
& *
& *
& \cdots
& *
& \hspace{8pt} k+2\\
 \vdots & \vdots & \vdots & \vdots & \vdots & \vdots & \vdots & \vdots & \\

 *
& *
& \cdots
& *
& *
& *
& \cdots
& *
& \hspace{8pt} n
\end{pNiceMatrix}
\]
Now, let $\mathcal{H}' = \mathcal{H}Q^{(1)}Q^{(2)}\cdots Q^{(k)} P_{k+2, 1} = (\bm{c}_1\ \bm{c}_2\ \cdots\ \bm{c}_{nr}).$ Then the parity check matrix is 
\[
H =
\left(
\begin{tabular}{c}
     $\bm{c}_{nr}^{T}$  \\ 
     $\bm{c}_{nr-1}^{T}$  \\ 
     $\vdots$ \\
     $\bm{c}_{k+1}^{T}$  \\ 
\end{tabular}
\right).
\]

\item[(2)] If $w_{0} = 0$, then $w_{1} = 0$. Hence,
\[
\mathcal{C}_{n,k}(\bm{\alpha}, \bm{t}, \bm{h}, \bm{\eta})^{\perp} = \agrs_{n, nr-k}(\bm{\alpha}, \bm{z}).
\]
\end{enumerate}
This leads to the following theorem.
\begin{theorem}\label{thm: parity-check double twist}
With the above notations, consider the code $\mathcal{C}_{n,k}(\bm{\alpha},\bm{t},\bm{h},\bm{\eta})$ and let $w_0=\sum_{i=1}^n\Trqr\!\left(z_i\alpha_i^{\,nr-1}\right)$.
\begin{enumerate}
    \item[(a)] If $w_0\neq 0$, then a parity-check matrix of
    $\mathcal{C}_{n,k}(\bm{\alpha},\bm{t},\bm{h},\bm{\eta})$ is given by
    \[
    H=
    \begin{pmatrix}
    \bm{c}_{nr}^{T}\\
    \bm{c}_{nr-1}^{T}\\
    \vdots\\
    \bm{c}_{k+1}^{T}
    \end{pmatrix},
    \]
    where
    \[
    \mathcal{H}'
    =
    \mathcal{H}
    Q^{(1)}Q^{(2)}\cdots Q^{(k)}
    P_{k+2,1}
    =
    (\bm{c}_1\ \bm{c}_2\ \cdots\ \bm{c}_{nr}),
    \]
    and the matrices $Q^{(1)},\dots,Q^{(k)}$ are defined as above.

    \item[(b)] If $w_0=0$, then
    \[
    \mathcal{C}_{n,k}(\bm{\alpha},\bm{t},\bm{h},\bm{\eta})^{\perp}
    =
    \agrs_{n,nr-k}(\bm{\alpha},\bm{z}).
    \]
\end{enumerate}
\end{theorem}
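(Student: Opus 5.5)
The plan is to mirror the proof of Theorem \ref{thm:parity-check-one-twist}. First write a generator matrix of $\mathcal{C}_{n,k}(\bm{\alpha},\bm{t},\bm{h},\bm{\eta})$: its first row is $\bm{1}+\eta_1\bm{\alpha}^k+\eta_2\bm{\alpha}^{k+1}$ and the remaining rows are $\bm{\alpha}^{i}$ for $1\le i\le k-1$. This matrix $G$ consists of the first $k$ rows of $R_{k+1,1}(\eta_1)R_{k+2,1}(\eta_2)\mathcal{G}$. Right-multiplying $\mathcal{H}$ by the invertible matrices $Q^{(1)}\cdots Q^{(k)}P_{k+2,1}$ gives
\[
\Trqr(G\mathcal{H}'^{T\,T}) \text{ as the first } k \text{ rows of } X:=R_{k+1,1}(\eta_1)R_{k+2,1}(\eta_2)\,W\,Q^{(1)}\cdots Q^{(k)}P_{k+2,1}.
\]
This uses the $\mathbb{F}_q$-linearity of the trace together with the fact that the row and column operations have $\mathbb{F}_q$ entries; the latter holds because every $w_i$ lies in $\mathbb{F}_q$.

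For part (a), with $w_0\ne 0$, I will verify column by column that the upper-right $k\times(nr-k)$ block of $X$ vanishes.
\begin{itemize}
\item In $W$, rows $2,\dots,k$ have zeros in columns $k+1,\dots,nr$, since $W$ is lower-triangular Toeplitz. Row $1$ of $R W$ has nonzero entries only in columns $k+1$ (namely $\eta_1w_0+\eta_2w_1$) and $k+2$ (namely $\eta_2w_0$).
\item $Q^{(1)}$ adds multiples of column $k+2$ into columns $1,\dots,k+1$. Its coefficients are chosen so that the row-$1$ entries in columns $2,\dots,k+1$ become $0$ and the $(1,1)$ entry becomes $-\tfrac{1}{\eta_2}\cdot\eta_2 w_0\cdot(\dots)+\ldots$, which is arranged to vanish. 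Since column $k+2$ has zeros in rows $2,\dots,k$, those rows are unaffected.
\item The later $Q^{(j)}$ only add columns $\le k$ to columns $\le k$, so they never touch columns $\ge k+1$.
\item $P_{k+2,1}$ swaps column $1$ into position $k+2$. Row $1$ of the old column $1$ was cleared, and rows $2,\dots,k$ of column $1$ must also be cleared. This is exactly the role of $Q^{(2)},\dots,Q^{(k)}$: they perform a Gaussian elimination that uses the pivots $w_0$ on the diagonal of rows $2,\dots,k$, from row $k$ upward, to zero out column $1$ in those rows.
\end{itemize}
Hence the last $nr-k$ columns $\bm{c}_{k+1},\dots,\bm{c}_{nr}$ of $\mathcal{H}'$ satisfy $\Trqr(GH^T)=O$. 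They are $\mathbb{F}_q$-independent, since $\mathcal{H}'$ is $\mathcal{H}$ times an invertible matrix and $\mathcal{H}$ has independent columns, as it is a generator-type matrix for $\mathrm{ARS}$ of full length $nr$. There are exactly $nr-k=\dim\mathcal{C}^\perp$ of them, so $H$ generates the dual.

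For part (b), repeat the computation from the single-twist case. Using the minimal polynomial $m(x)\in\mathbb{F}_q[x]$ and Lemma \ref{lem:existence of z}, we get $w_1=-m_{nr-1}w_0=0$. Then the rows $z\star\bm{\alpha}^{s}$ for $0\le s\le nr-k-1$ are orthogonal to all $\bm{\alpha}^i$ with $i+s\le nr-2$. They are also orthogonal to the twisted row, because the only possibly nonzero traces $\sum_j\Trqr(z_j\alpha_j^{nr-1})=w_0$ and $\sum_j\Trqr(z_j\alpha_j^{nr})=w_1$ both vanish; the exponents $k+s$ and $k+1+s$ are at most $nr$. A dimension count then gives $\mathcal{C}^\perp=\agrs_{n,nr-k}(\bm{\alpha},\bm{z})$.

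The main obstacle is bookkeeping in (a). One must check that the clearing of column $1$ in rows $2,\dots,k$ by $Q^{(2)},\dots,Q^{(k)}$ is consistent with the order in which the products are applied, and that $Q^{(1)}$ really zeroes the $(1,1)$ entry.
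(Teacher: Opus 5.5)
Your proposal is correct and follows essentially the same route as the paper. You apply $R_{k+1,1}(\eta_1)R_{k+2,1}(\eta_2)$ to the lower-triangular Toeplitz matrix $W$, clear the upper-right $k\times(nr-k)$ block with $Q^{(1)}\cdots Q^{(k)}P_{k+2,1}$, and use $w_1=-m_{nr-1}w_0$ in case (b); you also justify the $\mathbb{F}_q$-independence of the rows of $H$ and the orthogonality in (b), which the paper only asserts. As a side remark, $\bm{z}$ is trace-orthogonal to $\bm{\alpha}^s$ for $0\le s\le nr-2$, and $\{\bm{\alpha}^s:0\le s\le nr-1\}$ is an $\mathbb{F}_q$-basis of $\mathbb{F}_{q^r}^n$, so non-degeneracy of the trace form with $\bm{z}\neq\bm{0}$ forces $w_0\neq 0$; case (b) therefore never actually occurs.
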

\begin{example}
     Let $q=4, r=2, n=4, k=2$ and $\bm{h}=(0, 0)$. Let $b$ be a root of the irreducible polynomial $x^2+x+1$ over $\mathbb{F}_{2}$, so that $\mathbb{F}_{4} = \mathbb{F}_{2}(b)$. Further, let $a$ be a root of the irreducible polynomial $y^2 + y + b$ over $\mathbb{F}_{4}$, so that $\mathbb{F}_{16} = \mathbb{F}_{4}(a)$. Let $\bm{\eta} = (b, b^2)$ and define
    \[
    \bm{\alpha} := (\alpha_{1}, \alpha_{2}, \alpha_{3}, \alpha_{4}) =  (a,  a^2, a^3, a^6).
    \] 
    Then the generator matrix of $\mathcal{C}_{n,k}(\bm{\alpha}, \bm{t}, \bm{h}, \bm{\eta})$ is 
    \[
    \resizebox{\textwidth}{!}{$
    G = \begin{pmatrix}
1+\eta_{1}\alpha_{1}^2+\eta_{2}\alpha_{1}^{3} & 1+\eta_{1}\alpha_{2}^2+\eta_{2}\alpha_{2}^{3} & 1+\eta_{1}\alpha_{3}^2+\eta_{2}\alpha_{3}^{3} & 1+\eta_{1}\alpha_{4}^2+\eta_{2}\alpha_{4}^{3} \\
\alpha_{1} & \alpha_{2}  & \alpha_{3} & \alpha_{4}
\end{pmatrix}
\quad = \quad
\begin{pmatrix}
a^{10} & a^{14} & a^{6} & a^{3} \\
a      & a^{2}  & a^{3} & a^{6}
\end{pmatrix}.
$}
\]
By Lemma \ref{lem:existence of z}, we get $\bm{z} = (a^4, a^{13}, a^{7}, a^{4}).$ Moreover, $w_{0} = b^2, w_1 = b^2, w_2 = 0$ and $w_3 = b^2$. Let
\[
\delta = \frac{w_1}{w_0}\frac{\eta_1 w_1 + \eta_2 w_2}{\eta_2 w_0} - \frac{w_0+\eta_1 w_2 + \eta_2 w_3}{\eta_2 w_0}, \quad \delta' = \frac{\eta_1 w_0 + \eta_2 w_1}{\eta_2 w_0}.
\]
Applying Theorem \ref{thm: parity-check double twist}, we obtain the following parity-check matrix:
  \[
\resizebox{\textwidth}{!}{$
H =
\begin{pmatrix}
z_1 & z_2 & z_3 & z_4 \\
z_1\alpha_1 & z_2\alpha_2 & z_3\alpha_3 & z_4\alpha_4 \\
z_1\alpha_1^2 & z_2\alpha_2^2 & z_3\alpha_3^2 & z_4\alpha_4^2 \\
z_1\alpha_1^3 & z_2\alpha_2^3 & z_3\alpha_3^3 & z_4\alpha_4^3 \\
z_1\!\left(\alpha_1^7-\frac{w_1}{w_0}\alpha_{1}^{6} + \delta \alpha_1^4 \right) &
z_2\!\left(\alpha_2^7-\frac{w_1}{w_0}\alpha_{2}^{6} + \delta \alpha_2^4 \right) &
z_3\!\left(\alpha_3^7-\frac{w_1}{w_0}\alpha_{3}^{6} + \delta \alpha_3^4 \right) &
z_4\!\left(\alpha_4^7-\frac{w_1}{w_0}\alpha_{4}^{6} + \delta \alpha_4^4 \right) \\
z_1\!\left(\alpha_1^5-\delta'\alpha_1^4\right) &
z_2\!\left(\alpha_2^5-\delta'\alpha_2^4\right) &
z_3\!\left(\alpha_3^5-\delta'\alpha_3^4\right) &
z_4\!\left(\alpha_4^5-\delta'\alpha_4^4\right)
\end{pmatrix}
\quad = \quad 
\begin{pmatrix}
a^4  & a^{13} & a^7  & a^4 \\
a^5  & 1      & a^{10} & a^{10} \\
a^6  & a^2    & a^{13} & a \\
a^7  & a^4    & a      & a^7 \\
a^{14} & a^3  & a^9    & a^8 \\
a^{10} & a^7  & 1      & a^7
\end{pmatrix}.
$}
\]
\end{example}
All the computations were carried out in \textsc{Magma}.
\section{Conclusion}\label{Section 6}
    In this article, we have investigated two families of additive TRS codes: the first with a single twist $t=2$ and an arbitrary hook, and the second with double twists $\bm{t}=(1,2)$ and hooks $\bm{h}=(0,0)$. We have established a necessary and sufficient condition for each of these codes to be additive MDS. Moreover, by analyzing the Schur squares of these codes, we have obtained several families of additive MDS codes that are not equivalent to additive RS codes. Finally, we have determined parity-check matrices for both families of additive TRS codes.

    In future, additive TRS codes with two arbitrary twists and hooks could be investigated, along with the determination of their corresponding parity-check matrices. Furthermore, extending the additive TGRS framework to the counterparts of other notions of TGRS codes in the literature may yield new classes of additive MDS codes.

\section*{Declarations}
\subsection*{Conflict of Interest}
All authors declare that they have no conflict of interest.

\section*{Acknowledgments}
The second author expresses gratitude to MHRD, India, for financial support in the form of a Senior Research Fellowship at the Indian Institute of Technology Delhi.

\bibliographystyle{abbrv}
\bibliography{subfield}

\end{document}